\newtheorem{Example}{Example}[subsection]
\newtheorem{Theorem}{Theorem}[subsection]
\newtheorem{Theorem/Definition}{Theorem/Definition}[subsection]
\newtheorem{Proposition}{Proposition}[subsection]
\newtheorem{Lemma}{Lemma}[subsection]
\newtheorem{Corollary}{Corollary}[subsection]
\theoremstyle{remark}
\newtheorem{Remark}{Remark}[subsection]
\numberwithin{equation}{section}
\newcommand{\half}{\frac{1}{2}}
\newcommand{\bZ}{{\mathbb Z}}  \newcommand{\bt}{{\bf t}}
\newcommand{\pd}{\partial}
\newcommand{\cM}{{\mathcal M}}
\newcommand{\Mbar}{\overline{\cM}}
\newcommand{\be}{\begin{equation}}
\newcommand{\ee}{\end{equation}}
\newcommand{\ben}{\begin{eqnarray*}}
\newcommand{\een}{\end{eqnarray*}}
\newcommand{\bea}{\begin{eqnarray}}
\newcommand{\eea}{\end{eqnarray}}
\definecolor{yellow}{rgb}{1,1,0}
\definecolor{orange}{rgb}{1,.7,0}
\definecolor{red}{rgb}{1,0,0} \definecolor{green}{rgb}{0,1,1}
\definecolor{white}{rgb}{1,1,1}
\definecolor{A}{rgb}{.75,1,.75}
\newcommand{\corr}[1]{\langle {#1} \rangle}
\begin{document}

\title{On Itzykson-Zuber Ansatz}

\author{Qingsheng Zhang}
\address{Department of Mathematical Sciences\\
Tsinghua University\\Beijing, 100084, China}
\email{zqs15@mails.tsinghua.edu.cn}

\author{Jian Zhou}
\address{Department of Mathematical Sciences\\
Tsinghua University\\Beijing, 100084, China}
\email{jianzhou@mail.tsinghua.edu.cn}

\date{}

\pagenumbering{Roman}
\begin{abstract}
We apply the renormalized coupling constants and Virasoro constraints to derive the Itzykson-Zuber Ansatz
on the form of the free energy in topological 2D gravity.
We also treat the topological 1D gravity and the Hermitian one-matrix models in the same fashion.
Some uniform behaviors are discovered in this approach.
\end{abstract}
\maketitle
\setlength{\parindent}{2em}

\epigraph{
\textsf{How can I convert this from a problem of infinite number of degrees of freedom, which you can't deal with anyhow,
to a problem which is finite even if it was so large that you would have to have an astronomical size computer.
I just wanted to convert it from an infinite number of degrees of freedom to a finite number.}}%
{K. Wilson}%


\pagenumbering{arabic}

\section{Introduction}

In recent years there have appeared various Gromov-Witten type theories.
In all these theories we study their partition functions or free energy functions,
expressed as formal power series in infinitely many formal variables.
It is not convenient to study functions in infinitely many variables.
The well-established methods in mathematics usually deal with only finitely many variables,
especially when analyticity or smoothness of the functions are concerned.
It is then very desirable to develop some methods to convert the problems that involves infinitely many
variables to problems with only finitely many variables.
Amazingly,
such situations were faced by Wilson when he studied renormalization theory.
He discovered that in doing renormalzations step by step,
it is necessary to work with Hamiltonians with all possible coupling
constants hence it is necessary to work with a problem of an infinite degrees of  freedom.
By considering the fixed points of the renormalization flow,
a miracle happens: In the limit the theory becomes soluble with finitely many degrees of freedom.
We will report in this work similar miracles happen in the case of some Gromov-Witten type theories.
In this paper we will focus on three examples: 1D topological gravity, Hermitian one-matrix models,
and 2D topological gravity.
In subsequent work,
we will make generalizations to other models.

To explain our results,
let us begin with the case of two-dimensional topological gravity.
Witten \cite{Wit1} interpreted the 2D topological gravity
as the intersection theory of $\psi$-classes on the Deligne-Mumford moduli spaces
$\Mbar_{g,n}$.
The free energy of this theory is  the generating function  defined by:
\be
F^{2D}(t):=\sum_{n_0,n_1,n_2\cdots}
\left<\tau_0^{n_0}\tau_{1}^{n_{1}}\tau_{2}^{n_{2}}\cdots \right>^{2D}
\frac{t_{0}^{n_0}}{n_0!}\frac{t_{1}^{n_1}}{n_1!}\frac{t_{2}^{n_2}}{n_2!}\cdots,
\ee
where $\{t_i\}_{i\geqslant0}$ are formal variables understood as the coupling constants,
and they will be understood as coordinates on the big phase space of the 2D topological gravity.
The partition function is defined by
\be
Z^{2D}:=e^{F^{2D}}.
\ee
Witten conjectured that $Z^{2D}$ is a tau-function of the KdV hierarchy,
i.e., $u= \frac{\pd^2F^{2D}}{\pd t_0^2}$ satisfies a system of nonlinear differential equations of
the form:
\be
\frac{\pd u}{\pd t_n} = \frac{\pd}{\pd t_0} R_{n+1}(u, u_{t_0}, \dots),
\ee
and furthermore, the free energy satisfies the string equation
\be
\frac{\partial F^{2D}}{\partial t_0}=\frac{t_0^2}{2}+\sum_{n=0}^{\infty}t_{n+1}\frac{\partial F^{2D}}{\partial t_n}.
\ee
Conversely, together with the string equation, the KdV hierarchy completely determines the free energy.
This conjecture was proved by Kontsevich \cite{Kon} by computing the partition function
in a different way-the Kontsevich matrix model,
so the partition function $Z^{2D}$ is also known as the Witten-Kontsevich tau-function and denoted by $\tau_{WK}$.

There is another way to study the Witten-Kontsevich tau-function $\tau_{WK}$.
As pointed out by Dijkgraaf-Verlinde-Verlinde\cite{DVV1}, $\tau_{KW}$ can be uniquely
determined by a group of linear partial differential equations known as the Virasoro constraints.
For derivations of the equivalence of the two ways, see \cite{DVV1}.

One can use either the characterization by KdV hierarchy plus string equation or
the Virasoro constraints to compute the free energy functions $F_g$.
The results are of course some formal power series in the coupling constants $t_0, t_1, \dots$
which are too complicated to expect any closed formulas directly.
In \cite{IZ}, Itzykson and Zuber introduced another group of variables $I_0,I_1,\cdots$ defined by:
\begin{align}
u_{0}(t):=&\frac{\partial^2 F_0}{\partial t_{0}^2},\\
I_n(t):=&\sum_{k\geqslant0}t_{n+k}\frac{u_{0}^k}{k!}, \ \ \ n\geqslant 0
\end{align}
and they proved that $u_0=I_0$, i.e.
\be\label{Icoordinates}
I_n(t)=\sum_{k\geqslant0}t_{n+k}\frac{I_{0}^k}{k!}, \ \ \ n\geqslant 0.
\ee
With these definitions, they made an ansatz on the shape of $F_g$:
\be\label{F2D}
F^{2D}_g(t)=\sum_{\sum_{2\leqslant k\leqslant 3g-2}(k-1)l_k=3g-3}\left<\tau_2^{l_2}\tau_{3}^{l_{3}}\cdots\tau_{3g-2}^{l_{3g-2}} \right>^{2D}_g \prod_{j=2}^{3g-2}\frac{1}{l_j!}\left(\frac{I_j}{(1-I_1)^{\frac{2j+1}{3}}}\right)^{l_j}
\ee
for $g\geqslant2$.
This is a finite sum of monomials in $\frac{I_j}{(1-I_1)^{(2j+1)/3}}$.
By inserting this ansatz into KdV hierarchy,
they got some concrete results of free energies of low genus and they remarked that
``There is no difficulty to pursue these computations as far as one wishes".

Inspired by this ansatz especially the introduction of $I_k$,
the second named author carried out a comprehensive study of one-dimensional topological gravity
in \cite{Zhou1} in the hope that this would lead to a better understanding of
the Itsykson-Zuber ansatz.
He considered the the action function:
\be \label{eqn:Action-1D}
S(x)=-\frac{1}{2}x^2+\sum_{n=0}^{\infty}t_{n}\frac{x^{n+1}}{(n+1)!},
\ee
and its renormalization by the iterated procedure of completing the square.
In the limit of this procedure,
one reaches the critical point $x_{\infty}$ of the action function
which satisfies the following equation (\cite{Zhou1}, Proposition 2.1):
\be
x_{\infty}=\sum_{n\geqslant0}t_{n}\frac{x_{\infty}^n}{n!},
\ee
and the Taylor expansion of $S(x)$ at $x=x_{\infty}$ is
\be
S(x)=S(x_{\infty})+\sum_{n=2}^{\infty}(I'_{n-1}-\delta_{n,2})\frac{(x-x_{\infty})^{n}}{n!},
\ee
where
\begin{align}
I'_n(t)=&\sum_{k\geqslant0}t_{n+k}\frac{x_{\infty}^k}{k!}, \ \ \ n\geqslant 1.
\end{align}
One can see that $x_{\infty}$ and $I'_k$ are exactly $I_0$ and $I_k$ by definition!
In \cite{Zhou1}, by Lagrange inversion formula, the second named author derived an explicit formula for $I_0$
in terms of $\{t_n\}_{n \geq 0}$:
\begin{equation}
  I_0=t_0+\sum_{k=1}^{\infty}\frac{1}{k} \sum_{\substack{p_1+\cdots+p_{k}=k-1\\p_i\geqslant0,i=1,\cdots,k}} \frac{t_{p_1}}{p_1!}\cdots\frac{t_{p_{k}}}{p_{k}!}.
\end{equation}
By plugging this into \eqref{Icoordinates},
one can get similar expressions for $I_k$.
He also considered the inverse transformation, and got the following result:
\begin{equation}\label{Itot}
t_n=\sum_{k\geqslant0}I_{n+k}\frac{(-1)^kI_{0}^k}{k!}, \ \ \ n\geqslant 0.
\end{equation}

As we have mentioned,
$\{t_n\}_{n\geqslant0}$ are understood as coordinates on the big phase space.
Now $\{I_n\}_{n\geqslant0}$ can be understood as new coordinates on the big phase space.
They are called the renormalized coupling constants in \cite{Zhou1}.
The reason for this terminology is because the striking analogy with Wilson's renormalization
theory in the sense that one starts with an arbitrary set of infinitely
many coupling constants and the fixed point of the renormalization flow gives one another set
of infinitely coupling constants.
Another analogy with Wilson's theory is that with the introduction of these renormalized
constants makes the theory become soluble in finitely many degrees of freedom.
This is our interpretation of the Itzykson-Zuber ansatz \eqref{F2D}.
In fact, for 1D gravity, inspired by Itzykson-Zuber's ansatz, the second author proved that for the 1D
topological  gravity theory,
\be\label{F1D}
F^{1D}_g(t)=\sum_{\sum_{2\leqslant k\leqslant 2g-1}(k-1)l_k=2g-2}\left<\tau_2^{l_2}\tau_{3}^{l_{3}}\cdots\tau_{2g-1}^{l_{2g-1}} \right>^{1D}_g \prod_{j=2}^{2g-1}\frac{1}{l_j!}\left(\frac{I_j}{(1-I_1)^{\frac{j+1}{2}}}\right)^{l_j}
\ee
for $g\geqslant2$, where $F^{1D}_g$ is the free energy of genus g of the topological 1D gravity.

Another example we treat in this work is the Hermitian one-matrix models of order $N$
(i.e. the matrix is of $N\times N$).
The second author understood  the 1D  topological gravity
as the Hermitian one-matrix models of order $1$
and extended some results for 1D topological gravity
to Hermitian one-matrix models in \cite{Zhou2, Zhou3, Zhou4}.
Using the same method, he proved that for Hermitian one-matrix models,
\be\label{FN}
F^{N}_g(t)=\sum_{\sum_{2\leqslant k\leqslant 2g-1}(k-1)l_k=2g-2}\left<\tau_2^{l_2}\tau_{3}^{l_{3}}\cdots\tau_{2g-1}^{l_{2g-1}} \right>^{N}_g \prod_{j=2}^{2g-1}\frac{1}{l_j!}\left(\frac{I_j}{(1-I_1)^{\frac{j+1}{2}}}\right)^{l_j}
\ee
for $g\geqslant2$, where $F^{N}_g$ is the free energy of genus $g$.

The proofs of \eqref{F1D} and \eqref{FN} in previous works of the second author
are based on rewriting the corresponding puncture equation and the dilaton equation
in 1D topological gravity and Hermitian one-matrix models
in terms of the $I$-coordinates.
It was announced in \cite{Zhou1} that the same method can be applied to establish
the Itzykson-Zuber ansatz.
In this work we will achieve more than that.
More precisely,
we will show that all the Virasoro operators $\{L_{m}\}_{m\geqslant-1}$
in all the above three examples
can be rewritten in terms of the $I$-coordinates,
and using this fact,
one can also calculate the free energies of in arbitrary genus $g$ recursively.

In \cite{Zhou6} the second named author introduced some coupling constants $t_{-n}$ for $n \geq 1$
and call them the ghost variables.
They were used to defined the following extension of free energy $F_0^{2D}$ in genus zero:
\be
\tilde{F}_0^{2D} = F_0^{2D} + \sum_{n \geq 0} (-1)^n (t_n-\delta_{n,1}) t_{-n-1},
\ee
This was used to make sense of the following special deformation of Airy curve
introduced in \cite{Zhou5}:
\be \label{eqn:Airy}
w^{2D}:=z^{\frac{1}{2}}-\sum_{n=0}^{\infty}\frac{t_n}{(2n-1)!!}z^{n-\frac{1}{2}} -\sum_{n=0}^{\infty}(2n+1)!!\frac{\partial F_0^{2D}}{\partial t_n}z^{-n-\frac{3}{2}}.
\ee
See also  \S \ref{sec:Ghost}.
In this paper we also consider the renormalizations of the ghost variables.
They will be denoted by $I_{-k}$ for $k \geq 1$:
\be
I_{-k}=\sum_{n=0}^{\infty}t_{n-k} \frac{I_0^n}{n!}.
\ee
When we impose the condition $t_{-m}=0$ for $m>0$,
the ghost variables $I_{-n}$ can be expressed as formal series in $\{I_{k}\}_{k\geq0}$:
\be
I_{-n}|_{t_{-m}=0, m\geq1}=\sum_{k=0}^{\infty}I_{k}\frac{(-1)^{k}I_{0}^{k+n}}{k!(n-1)!(k+n)}.
\ee
It turns out that certain shift, denoted by $\tilde{I}_{-n}$ is more natural.
An amazing result is that
\bea
&& F_0^{1D} = \tilde{I}_{-1}|_{t_{-m}=0,\ m\geq1}, \\
&& F_0^{N} = N \cdot \tilde{I}_{-1}|_{t_{-m}=0,\ m\geq1}, \\
&& F_0^{2D} = \frac{1}{2}\sum_{n\geqslant1}(-1)^n\tilde{I}_{n}\tilde{I}_{-n-1}|_{t_{-m}=0,\ m\geq1}.
\eea

We also used the renormalized ghost variables to
investigate the applications of the $I$-coordinates to the study of the special deformations
of the emergent spectral curves in the three cases we consider in this paper.
It turns out that they manifest some uniform behaviors related to \eqref{eqn:Action-1D}
in this new perspective.
We summarize them in the end of the paper where we present some concluding remarks.

In summary,
we have studied the Itzykson-Zuber Ansatz and its analogues from the point of view
of Wilson's renormalization theory.
One can interpret the results stated in such Ansatz as generalizations of
the constitutive relations in the mean field theory approach studied by Dijkgraaf and Witten \cite{DW}.
It is interesting to see that renormalization leads to the derivations of results in
mean field theory in these theory.
We believe this should hold in general and hope to return to this in future investigations.

We arrange the rest of the paper in the following fashion.
We treat the cases of topological 1D gravity and Hermitian one-matrix models
in Section \ref{T1D}  and Section \ref{sec:HMM} respectively.
We verify the Itzykson-Zuber Ansatz in Section \ref{sec:IZ}.
We generalize the renormalized coupling constants to include the ghost coupling constants
in Section \ref{sec:Ghost},
and use the renormalized ghost variables to study the special deformation of the Airy curve
induced by the Witten-Kontsevich tau-function.
In Section \ref{sec:MFT} we rederive the constitutive relations in genus zero due to Dijkgraaf and Witten \cite{DW}
and derive their analogues for $F_0^{1D}$ and $F_0^N$.
In the final Section \ref{sec:Remarks}
we comment on the uniform behavior of the special deformations of the spectral curves
in the perspective of $I$-coordinates.

\section{Computations in 1D Topological Gravity
by Virasoro Constraints in Renormalized Coupling Constants}
\label{T1D}

In this section we recall the 1D topological gravity \cite{Zhou1}.
The partition function of this theory satisfies Virasoro constraints derived in \cite{NY} and
further studied in \cite{Zhou1}.
We rewrite the Virasoro constraints in the $I$-coordinates.
Using these constraints, we derive a recursively way to solve free energy in $I$-coordinates.
We also study the special deformation of the 1D gravity in $I$-coordinates.

\subsection{Renormalized coupling constants in the 1D topological gravity}

In order to understand how the $I$-coordinates in the Itzykson-Zuber Ansatz
naturally arise,
the second named author proposed in \cite{Zhou1} to start with
the topological 1D gravity and understand its action function
from the point of view of Wilson's renormalization theory.
The partition function of 1D gravity is the following formal Gaussian integral:
\be
Z^{1D}=\frac{1}{\sqrt{2\pi}\lambda}\int e^{\frac{1}{\lambda^2}S(x)}dx,
\ee
where the action function of the topological 1D gravity given by:
\be \label{def:Action}
S(x)=-\frac{1}{2}x^2+\sum_{n=0}^{\infty}t_{n}\frac{x^{n+1}}{(n+1)!}.
\ee
The coefficients $\{t_n\}_{n\geqslant0}$ are the bare coupling constants of this theory.
One can modify the action by completion of square:
\ben
S  & = & \biggl(t_{-1} +  \half \frac{t_0^2}{1-t_1} + \sum_{n \geq 3}  \frac{t_{n-1}}{n!}
\biggl( \frac{t_0}{1-t_1}\biggr)^{n} \biggr)
+ \tilde{x} \sum_{n \geq 2}  \frac{t_{n}}{n!}  \biggl( \frac{t_0}{1-t_1}\biggr)^{n} \\
& &- \frac{1}{2}\biggl(1
- \sum_{n \geq 0} t_{n+1}   \frac{1}{n!}
\biggl( \frac{t_0}{1-t_1}\biggr)^{n} \biggr)\tilde{x} ^2
+ \sum_{m=3}^\infty \frac{\tilde{x}^m}{m!}
\sum_{n \geq 0} t_{n+m-1} \frac{1}{n!}  \biggl( \frac{t_0}{1-t_1}\biggr)^{n},
\een
where $\tilde{x} = x-x_1$.
A new set of coupling constants is then obtained in this fashion.
As explained in \cite{Zhou1},
by repeating this procedure for infinitely many times,
one gets a limiting set of coupling constants which are fixed by the procedure
of completing the square.
More precisely,
the action function has the following form:
\be\label{eqn:Action}
S(x)=\sum_{k=0}^{\infty}\frac{(-1)^k}{(k+1)!}(I_k+\delta_{k,1})I_{0}^{k+1}+
\sum_{n=2}^{\infty}(I_{n-1}-\delta_{n,2}) \frac{(x-I_0)^{n}}{n!}
\ee
in the limit.
The limiting set of coupling constants $\{I_n\}_{n \geq 0}$
will be referred to as the {\em renormalized coupling constants}.
Here $I_n$ are defined by:
\be \label{def:Ik}
I_k= \sum_{n \geq 0} t_{n+k} \frac{x_\infty^n}{n!},
\ee
where $I_0=x_\infty$ satisfies the equation
$$\frac{\pd S}{\pd x} (x_{\infty}) = 0,$$
or, more explicitly,
\be \label{eqn:Critical}
x_\infty = \sum_{n\geq 0} t_n \frac{x_\infty^n}{n!}.
\ee
This situation is analogous to Wilson's renormalization theory.
First of all,
Wilson considered the space of Hamiltonians with all possible coupling constants.
In our case,
we allow our action function $S$ to have all possible coupling constants $t_n$,
at the expense of considering only formal Gaussian integrals
instead of addressing the issue of convergence of the Gaussian integrals.
Secondly,
Wilson started with a theory with arbitrary coupling constants and modified it to get a new theory
of the same form with different  coupling constants.
In our case,
we use the completion  of square to modify our coupling constants.
Thirdly,
Wilson introduced the notion of a fixed point to describe the limiting theory of
the renormalization flow.
In our case the situation is similar.
We reach the critical point of the action function
and use the expansion there to obtain the limiting coupling constants.
Furthermore,
in \cite{Zhou1} it was proved that the transformation from $\{t_n\}_{n \geq 0}$
to $\{I_n\}_{n\geq 0}$ can be regarded as a nonlinear change of coordinates,
and the space of the theory of 1D topological gravity can be regarded as an infinite-dimensional
manifold with at least two coordinate patches given by local coordinates $\{t_n\}_{n \geq 0}$
and $\{I_n\}_{n \geq 0}$ respectively.

It turns out that the analogue of the Itzykson-Zuber Ansatz also holds in 1D topological gravity.
The free energy of 1D topological gravity is defined by:
\be
F^{1D}=\log Z^{1D}.
\ee
There is a genus expansion for $F^{1D}$:
\be\label{genusexpansion}
F^{1D}=\sum_{g=0}^{\infty}\lambda^{2g-2}F^{1D}_g,
\ee
where $\{F_g^{1D}\}_{g\geqslant0}$ are formal power series of $t_0,t_1, \cdots$. By \cite{Zhou1}, if we define
\be
\deg{t_n}=n-1, \ \ \ n=0,1,2,\cdots
\ee
then $F^{1D}_{g}$ is weighted homogeneous in $t_0,t_1, \cdots$ with
\be
\deg{F_{g}^{1D}}=2g-2,\ \ \ g=0,1,2,\cdots
\ee
By \eqref{def:Ik},
it is natural to define
\be
\deg{I_n}=n-1,\ \ \ n=0,1,2,\cdots
\ee
and then $F_{g}^{1D}$ can be viewed as weighted homogeneous formal series of degree $2g-2$ in $I$-coordinates.
In \cite{Zhou1}, the second named author used two different methods, the Feynman diagram technique and the Virasoro constraints, to get the following results:
\begin{Theorem} \label{thm:F1D}
(\cite[Theorem 6.4 and 6.6]{Zhou1})
\begin{align}
F^{1D}_0=&\sum_{k=0}^{\infty}\frac{(-1)^k}{(k+1)!}(I_k+\delta_{k,1})I_{0}^{k+1},\label{eqn:F01D}\\
F^{1D}_1=&\frac{1}{2}\log{\frac{1}{1-I_1}},\label{eqn:F11D}\\
F^{1D}_g=&\sum_{\sum_{2\leqslant k\leqslant 2g-1}m_k(k-1)=2g-2}\left<\prod_{j=2}^{2g-1}\tau_{j}^{m_j}\right>^{1D}_g\cdot \prod_{j=2}^{2g-1}\frac{1}{m_j!}\left(\frac{I_{j}}{(1-I_1)^{\frac{j+1}{2}}}\right)^{m_j} ,\ g\geqslant2 , \label{eqn:Fg1D}
\end{align}
where the correlators are defined by:
\be
\left<\tau_{a_1}\cdots\tau_{a_n}\right>^{1D}_g=\frac{\partial^n F^{1D}_g}{\partial t_{a_1}\cdots \partial t_{a_n}}\bigg|_{t=0}.
\ee
\end{Theorem}

\subsection{Virasoro constraints for 1D topological gravity in $I$-coordinates}

Let us now recall how to prove the above Theorem by  Virasoro constraints.
First one has the following Theorem:

\begin{Theorem}(Virasoro constraints\cite{Zhou1})
The partition function $Z^{1D}$ of topological 1D gravity satisfies the following equations for $m\geqslant-1$:
\be
L^{1D}_mZ^{1D}=0,
\ee
where
\begin{align}
L^{1D}_{-1}&=\frac{t_0}{\lambda^2}+\sum_{n\geqslant 1}(t_n-\delta_{n,1})\frac{\partial}{\partial t_{n-1}},\\
L^{1D}_{0}&=1+\sum_{n\geqslant 0}(n+1)(t_n-\delta_{n,1})\frac{\partial}{\partial t_{n}},\\
L^{1D}_{m}&=\lambda^2(m+1)!\frac{\partial}{\partial t_{m-1}}+\sum_{n\geqslant 0}\frac{(m+n+1)!}{n!}(t_n-\delta_{n,1})\frac{\partial}{\partial t_{m+n}},\label{eqn:Virasoro1D}
\end{align}
for $m\geqslant1$. Furthermore, $\{L^{1D}_m\}_{m\geqslant-1}$ satisfies the following commutation relations:
\be
\left[L^{1D}_m,L^{1D}_n\right]=(m-n)L^{1D}_{m+n},
\ee
for $m,n\geqslant-1$.
\end{Theorem}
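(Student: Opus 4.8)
The plan is to derive the constraint equations $L^{1D}_mZ^{1D}=0$ from the vanishing of the integral of a total $x$-derivative, and then to verify the bracket relations $[L^{1D}_m,L^{1D}_n]=(m-n)L^{1D}_{m+n}$ by a direct computation. For the first part I would start from the formal identity
\[
0=\frac{1}{\sqrt{2\pi}\lambda}\int \frac{\partial}{\partial x}\left(x^{m+1}e^{S(x)/\lambda^2}\right)dx
=\frac{1}{\sqrt{2\pi}\lambda}\int\left((m+1)x^{m}+\frac{1}{\lambda^2}\,x^{m+1}S'(x)\right)e^{S(x)/\lambda^2}\,dx,
\]
valid for every $m\geqslant-1$, where $S'(x)=-x+\sum_{n\geqslant0}t_n\frac{x^n}{n!}$ by \eqref{def:Action}. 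The one fact to exploit is that applying $\frac{\partial}{\partial t_k}$ under the integral sign produces the factor $\frac{x^{k+1}}{(k+1)!\,\lambda^2}$; hence, inside the normalized integral, inserting a monomial $x^{j}$ with $j\geqslant1$ is the same as applying $\lambda^2 j!\,\frac{\partial}{\partial t_{j-1}}$ to $Z^{1D}$, while inserting $x^{0}$ simply reproduces $Z^{1D}$. After substituting $x^{m+1}S'(x)=-x^{m+2}+\sum_{n\geqslant0}t_n\frac{x^{m+1+n}}{n!}$ and translating every monomial in this way, the identity becomes a first-order linear operator annihilating $Z^{1D}$.

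I would then read off the three stated formulas by tracking the monomials of low degree. For $m\geqslant1$ every monomial lies in the range $j\geqslant1$, so the translation applies to all of them; the term $-x^{m+2}$ combines with the $n=1$ summand of $\sum_n t_n\frac{x^{m+1+n}}{n!}$ to produce the shift $t_n\mapsto t_n-\delta_{n,1}$, leaving precisely the coefficients $\lambda^2(m+1)!$ and $\frac{(m+n+1)!}{n!}$. For $m=0$ the leading monomial $(m+1)x^{m}=1$ yields the constant $1$ in $L^{1D}_0$, and $-x^{2}$ again accounts for the $\delta_{n,1}$. For $m=-1$ the monomial $x^{m+1}=x^{0}$ appearing in $x^{m+1}S'(x)$ is the constant $t_0$, which after the $\frac{1}{\lambda^2}$ becomes the summand $\frac{t_0}{\lambda^2}$, and $-x$ merges with the $n=1$ term. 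In fact all three cases are the single formula $L^{1D}_m=\lambda^2(m+1)!\,\frac{\partial}{\partial t_{m-1}}+\sum_{n\geqslant0}\frac{(m+n+1)!}{n!}(t_n-\delta_{n,1})\frac{\partial}{\partial t_{m+n}}$ once one adopts the conventions $\frac{\partial}{\partial t_{-1}}=\frac{1}{\lambda^2}\cdot\mathrm{id}$ and $\frac{\partial}{\partial t_{j}}=0$ for $j\leqslant-2$.

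For the commutation relations I would compute $[L^{1D}_m,L^{1D}_n]$ term by term, using that the only nontrivial elementary bracket is $\left[\frac{\partial}{\partial t_j},t_k\right]=\delta_{j,k}$. The second-order pieces coming from the products of the $(t_n-\delta_{n,1})\frac{\partial}{\partial t_{m+n}}$ terms are manifestly symmetric under $m\leftrightarrow n$ and cancel in the bracket; the remaining first-order pieces close into $(m-n)L^{1D}_{m+n}$ by the identity
\[
\frac{(m+k+1)!}{k!}\cdot\frac{(m+n+k+1)!}{(m+k)!}-\frac{(n+k+1)!}{k!}\cdot\frac{(m+n+k+1)!}{(n+k)!}=(m-n)\,\frac{(m+n+k+1)!}{k!},
\]
which collapses to $(m+k+1)-(n+k+1)=m-n$ after factoring out $\frac{(m+n+k+1)!}{k!}$; the shifts $t_n\mapsto t_n-\delta_{n,1}$ are transported automatically since the computation is term by term, and the brackets involving the $\lambda^2\frac{\partial}{\partial t_{m-1}}$ terms reproduce exactly the constant and $\frac{t_0}{\lambda^2}$ parts of the low-index operators on the right-hand side (e.g. $[L^{1D}_1,L^{1D}_{-1}]=2L^{1D}_0$ gets its constant $2$ from $\bigl[\lambda^2\,2!\,\frac{\partial}{\partial t_0},\frac{t_0}{\lambda^2}\bigr]$). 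No central term arises because $(m^3-m)\delta_{m+n,0}$ vanishes whenever $m,n\geqslant-1$.

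The main obstacle is organizational rather than conceptual: one must keep the factorial ratios and the $\delta_{n,1}$-shifts aligned through the case analysis and confirm that the low-index operators $L^{1D}_{-1}$ and $L^{1D}_0$, whose shapes differ from the generic one, are reproduced correctly by the brackets — which is exactly where the index conventions recorded above are the cleanest device. One may alternatively observe that $m\mapsto L^{1D}_m$ is, up to the abelian $\frac{t_0}{\lambda^2}$ and constant terms, the realization on the space of formal functions in the $t$-variables of the Lie algebra of vector fields $\ell_m=-z^{m+1}\frac{d}{dz}$, $m\geqslant-1$, which already obeys $[\ell_m,\ell_n]=(m-n)\ell_{m+n}$, so that the verification reduces to checking the vanishing of the obstructing cocycle on the branch $m\geqslant-1$.
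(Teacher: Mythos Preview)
Your derivation is correct and is exactly the standard one: the constraints follow from the vanishing of the formal Gaussian integral of a total $x$-derivative, and the Witt relations then follow from the elementary factorial identity you wrote down. Note, however, that the paper does not supply its own proof of this theorem at all---it is stated as a cited result from \cite{NY,Zhou1} and used as input for the subsequent computations in $I$-coordinates. So there is nothing to compare: your argument is essentially the proof one would find in those references, and it is sound.
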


Next the change of coordinates between $\{t_n\}_{n \geq 0}$
and $\{I_n\}_{n\geq 0}$ induces the linear transformations between
$\{\frac{\pd}{\pd t_n}\}_{n \geq 0}$
and $\{\frac{\pd}{\pd I_n}\}_{n\geq 0}$.
The concrete expressions are give by:

\begin{Lemma}(\cite{Zhou1}, Corollary 2.7 and Proposition 2.8)
The vector fields $\{\frac{\partial }{\partial I_k}\}_{k\geqslant0}$  can be expressed in terms of the vector fields $\{\frac{\partial }{\partial t_k}\}_{k\geqslant0}$ as follows:
\begin{align}
\frac{\partial}{\partial I_0}&=\frac{\partial}{\partial t_0}-\sum_{k\geqslant0}t_{k+1}\frac{\partial}{\partial t_k},\label{string} \\
\frac{\partial}{\partial I_l}&=\sum_{k=0}^{l}\frac{(-1)^{l-k}I_0^{l-k}}{(l-k)!}\frac{\partial}{\partial t_k},\ \ \ l\geqslant1\label{vttoI}
\end{align}
and conversely,
\begin{align}
\frac{\partial}{\partial t_k}=\frac{1}{1-I_1}\frac{I_0^{k}}{k!}\bigg(\frac{\pd}{\pd I_0}
+\sum_{l\geqslant1}I_{l+1}\frac{\pd}{\pd I_l}\bigg) +\sum_{1\leqslant l \leqslant k}\frac{I_0^{k-l}}{(k-l)!}\frac{\partial}{\partial I_l},\ \ \ k\geqslant0 \label{vItot}
\end{align}
\end{Lemma}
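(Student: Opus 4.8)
The plan is to read off both transition formulas from the chain rule, so that everything reduces to computing the two Jacobian matrices $(\partial I_l/\partial t_k)$ and $(\partial t_n/\partial I_l)$. For the first matrix I would differentiate the defining relations \eqref{def:Ik} directly, handling the implicitly defined critical point $I_0=x_\infty$ by implicit differentiation; for the second I would differentiate the inverse relation \eqref{Itot}. As a built-in cross-check the two matrices must be inverse to one another, which gives an independent verification of either formula once the other is in hand.

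For \eqref{vItot} I would first compute $\partial I_0/\partial t_k$. Differentiating the critical-point equation \eqref{eqn:Critical} with respect to $t_k$ gives
\[
\frac{\partial x_\infty}{\partial t_k}
= \frac{x_\infty^{k}}{k!}
+ \Bigl(\sum_{n\geq 1} t_n \frac{x_\infty^{n-1}}{(n-1)!}\Bigr)\frac{\partial x_\infty}{\partial t_k},
\]
and by \eqref{def:Ik} the coefficient in parentheses is exactly $I_1$ (which has vanishing constant term, so $1-I_1$ is an invertible formal power series), whence $\partial I_0/\partial t_k = \frac{1}{1-I_1}\frac{I_0^{k}}{k!}$. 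Next, differentiating $I_l=\sum_{m\geq 0} t_{l+m}\frac{x_\infty^{m}}{m!}$ for $l\geq 1$ and inserting this, the derivative splits into the term $\frac{I_0^{k-l}}{(k-l)!}$ (present only when $1\leq l\leq k$), coming from $\partial t_{l+m}/\partial t_k=\delta_{l+m,k}$, and the term $I_{l+1}\cdot\frac{1}{1-I_1}\frac{I_0^{k}}{k!}$, where reindexing turns $\sum_{m\geq 1} t_{l+m}\frac{x_\infty^{m-1}}{(m-1)!}$ into $I_{l+1}$. Then $\frac{\partial}{\partial t_k}=\sum_{l\geq 0}\frac{\partial I_l}{\partial t_k}\frac{\partial}{\partial I_l}$ is precisely \eqref{vItot}.

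For \eqref{vttoI} and \eqref{string} I would differentiate \eqref{Itot}, i.e.\ $t_n=\sum_{j\geq 0} I_{n+j}\frac{(-1)^j I_0^{j}}{j!}$, now regarding $I_0,I_1,\dots$ as the independent variables so that $\partial I_0/\partial I_l=\delta_{0,l}$. For $l\geq 1$ only $\partial I_{n+j}/\partial I_l=\delta_{n+j,l}$ contributes, giving $\partial t_n/\partial I_l=\frac{(-1)^{l-n}I_0^{l-n}}{(l-n)!}$ for $0\leq n\leq l$ and $0$ otherwise, hence \eqref{vttoI}. For $l=0$ the differentiation of $I_0^{j}$ produces the extra sum $\sum_{j\geq 1} I_{n+j}\frac{(-1)^{j} I_0^{j-1}}{(j-1)!}= -\sum_{p\geq 0} I_{(n+1)+p}\frac{(-1)^{p}I_0^{p}}{p!}=-t_{n+1}$ by a second application of \eqref{Itot}, while $\delta_{n+j,0}$ contributes $\delta_{n,0}$; thus $\partial t_n/\partial I_0=\delta_{n,0}-t_{n+1}$, which yields \eqref{string}.

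I do not expect a genuine obstacle: the computation is elementary once two bookkeeping points are treated with care, namely (i) the implicit differentiation that produces the factor $(1-I_1)^{-1}$ together with the identification of its coefficient with $I_1$, and (ii) the index shifts $\sum_{m\geq 1}(\cdots)\mapsto I_{l+1}$ and $\sum_{j\geq 1}(\cdots)\mapsto -t_{n+1}$, the second of which invokes \eqref{Itot} one more time. All rearrangements of the infinite sums are legitimate because we work in the ring of formal power series in $t_0,t_1,\dots$ (equivalently in $I_0,I_1,\dots$), so no convergence issues arise; and if desired one may replace the second computation entirely by checking directly that the matrix defined by \eqref{string}--\eqref{vttoI} is a left inverse of the one in \eqref{vItot}.
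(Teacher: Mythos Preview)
Your proof is correct and essentially complete: the chain rule together with implicit differentiation of \eqref{eqn:Critical} and differentiation of \eqref{Itot} produces exactly the two Jacobians you claim, and the bookkeeping identifications $\sum_{n\geq 1} t_n x_\infty^{n-1}/(n-1)!=I_1$, $\sum_{m\geq 1} t_{l+m} x_\infty^{m-1}/(m-1)!=I_{l+1}$, and $\sum_{j\geq 1} I_{n+j}(-1)^{j}I_0^{j-1}/(j-1)!=-t_{n+1}$ are all straightforward reindexings. Note, however, that the paper does not supply its own proof here at all: the Lemma is quoted verbatim from \cite{Zhou1} (Corollary~2.7 and Proposition~2.8), so there is nothing in this paper to compare your argument to. Your self-contained derivation is exactly what one would do to fill in the citation, and it matches the approach one expects from \cite{Zhou1}.
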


Using these expressions,
the following formula for $L^{1D}_{-1}$ and $L^{1D}_0$ has been proved \cite[(192), (196)]{Zhou1}:
\begin{align}
L^{1D}_{-1}=&-\frac{\partial}{\partial I_0}+\frac{1}{\lambda^2}
\sum_{n=0}^{\infty}\frac{(-1)^nI_0^n}{n!}I_n, \label{Istring}\\
L^{1D}_0=&-I_0\frac{\partial}{\partial I_0}-2\frac{\partial}{\partial I_1}
+\sum_{l\geqslant 1}(l+1)I_l\frac{\partial}{\partial I_l}+1. \label{Idilaton}
\end{align}
From these one can derive \eqref{eqn:F01D}-\eqref{eqn:Fg1D}.
The main result of this Section is that
by expressing $L^{1D}_m$ for higher $m$ in $I$-coordinates,
one can find an algorithm to explicitly compute $F^{1D}_g$.
In this Subsection we first express $L^{1D}_m$ in $I$-coordinates.
The applications to the computations of $F^{1D}_g$ will be presented
in the next Subsection.

\begin{Theorem}\label{1Dcase}
In the $I$-coordinates,
the Virasoro operators $L_m$ ($m \geq 1$) for 1D topological gravity can be rewritten
  as follows:
\begin{align}\label{IVirasoro}
L^{1D}_{m}=&\lambda^2(m+1)!\bigg(\frac{1}{1-I_1}\frac{I_0^{m-1}}{(m-1)!}\big(\frac{\pd}{\pd I_0}
+\sum_{l\geqslant1}I_{l+1}\frac{\pd}{\pd I_l}\big)
+\sum_{1\leqslant l \leqslant m-1}\frac{I_0^{m-1-l}}{(m-1-l)!}\frac{\partial}{\partial I_l}\bigg)\\
&-I_{0}^{m+1}\frac{\partial}{\partial I_0}
+\sum_{i=1}^{m+1}\binom{m+1}{i}I_0^{m+1-i}
\sum_{p=1}^{\infty}\frac{(p+i)!}{p!}(I_{p}-\delta_{p,1})\frac{\partial}{\partial I_{p+i-1}}.\nonumber
\end{align}
\end{Theorem}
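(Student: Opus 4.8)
The plan is to substitute the change-of-frame formula \eqref{vItot} directly into the $t$-coordinate expression \eqref{eqn:Virasoro1D} for $L^{1D}_m$ and then collect terms. Write $L^{1D}_m = A_m + B_m$ with $A_m = \lambda^2(m+1)!\,\partial_{t_{m-1}}$ the ``quantum'' part and $B_m = \sum_{n\geqslant 0}\frac{(m+n+1)!}{n!}(t_n-\delta_{n,1})\partial_{t_{m+n}}$ the ``classical'' part. Plugging \eqref{vItot} with $k=m-1$ into $A_m$ reproduces, verbatim, the first line of \eqref{IVirasoro}, so all the work is in $B_m$.

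Next I would substitute \eqref{vItot} with $k=m+n$ into $B_m$, which splits it as $B_m^{(0)}+B_m^{(1)}+B_m^{(2)}$ according to the three groups of terms in \eqref{vItot}. The pieces $B_m^{(0)}$ and $B_m^{(1)}$ (those carrying a factor $\tfrac{1}{1-I_1}$) both contain the same scalar series $\sum_{n\geqslant 0}\frac{(m+n+1)!}{n!\,(m+n)!}(t_n-\delta_{n,1})I_0^{m+n}$. Splitting $m+n+1=(m+1)+n$ and using the critical-point equation \eqref{eqn:Critical} in the form $\sum_n t_nI_0^n/n!=I_0$ together with $I_1=\sum_n t_{n+1}I_0^n/n!$ from \eqref{def:Ik}, this series collapses to $I_0^{m+1}(I_1-1)$. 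Hence $B_m^{(0)}=-I_0^{m+1}\partial_{I_0}$ (the first term on the second line of \eqref{IVirasoro}) and $B_m^{(1)}=-I_0^{m+1}\sum_{l\geqslant 1}I_{l+1}\partial_{I_l}$.

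It then remains to check that $B_m^{(1)}+B_m^{(2)}$ equals $\sum_{i=1}^{m+1}\binom{m+1}{i}I_0^{m+1-i}\sum_{p\geqslant 1}\frac{(p+i)!}{p!}(I_p-\delta_{p,1})\partial_{I_{p+i-1}}$. For this I would: (i) interchange the $n$- and $l$-sums in $B_m^{(2)}$ so that it takes the form $\sum_{l\geqslant 1}(\cdots)\partial_{I_l}$; (ii) eliminate the remaining $t_n$'s via the inverse change of coordinates \eqref{Itot}; and (iii) extract the coefficient of each monomial $I_pI_0^{\,p+m-l}\partial_{I_l}$. After the two reindexings this coefficient becomes the alternating finite sum $T(l,p)=\sum_a \frac{(m+a+1)!}{a!\,(m+a-l)!}\cdot\frac{(-1)^{p-a}}{(p-a)!}$, and the crux is to evaluate it: writing $\frac{(m+a+1)!}{(m+a-l)!}=(l+1)!\binom{m+a+1}{l+1}$ — a falling-factorial polynomial in $a$ of degree $l+1$ which vanishes at the integer points below the natural lower end of the sum, so the sum extends down to $a=0$ without change — one identifies $T(l,p)=\frac{(l+1)!}{p!}\,\Delta^p\binom{m+a+1}{l+1}\big|_{a=0}=\frac{(l+1)!}{p!}\binom{m+1}{l+1-p}$ from the standard difference rule $\Delta^p\binom{x+c}{d}\big|_{x=0}=\binom{c}{d-p}$. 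Since $T(l,l+1)=1$, the $p=l+1$ contribution cancels the spurious $-I_0^{m+1}I_{l+1}$ inherited from $B_m^{(1)}$; for $p\leqslant l$ the substitution $i=l-p+1$ turns $\sum_pT(l,p)I_pI_0^{\,p+m-l}$ into the claimed double sum, and a short binomial-coefficient computation matches the $\delta_{p,1}$ terms (which descend from the dilaton shift $-\delta_{n,1}$ in \eqref{eqn:Virasoro1D}) on the two sides.

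I expect step (iii) to be the main obstacle: the summation ranges have to be tracked carefully through both reindexings, and one must recognize the resulting alternating sum as a $p$-th finite difference before anything simplifies; once that identification is made, evaluating $T(l,p)$ and checking the $\delta$-terms are routine. As a consistency check one could instead try to derive \eqref{IVirasoro} inductively from the Virasoro bracket $[L^{1D}_1,L^{1D}_{m-1}]=(2-m)L^{1D}_m$ together with the known cases \eqref{Istring} and \eqref{Idilaton}, but this degenerates at $m=2$ and anyway requires essentially the same combinatorics, so the direct substitution looks like the cleaner route.
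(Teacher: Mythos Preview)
Your proposal is correct and follows essentially the same route as the paper: substitute \eqref{vItot} into \eqref{eqn:Virasoro1D}, then eliminate the remaining $t_n$'s via \eqref{Itot} and evaluate the resulting alternating sum. The only cosmetic difference is that the paper proves the key identity $T(l,p)=\frac{(l+1)!}{p!}\binom{m+1}{l+1-p}$ by expanding $\binom{m+n+1}{l+1}$ via Vandermonde rather than by recognizing it as a $p$-th forward difference, and it handles the dilaton shift by subtracting $(m+2)!\,\partial_{t_{m+1}}$ as a separate block rather than carrying $-\delta_{n,1}$ through; both lead to the same cancellations you describe.
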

\begin{proof}
The first term in the definition of $L^{1D}_m$ in \eqref{eqn:Virasoro1D} is:
\begin{align}
\frac{\partial}{\partial t_{m-1}}=\frac{1}{1-I_1}\frac{I_0^{m-1}}{(m-1)!}\big(\frac{\pd}{\pd I_0}
+\sum_{l\geqslant1}I_{l+1}\frac{\pd}{\pd I_l}\big)
+\sum_{1\leqslant l \leqslant m-1}\frac{I_0^{m-1-l}}{(m-1-l)!}\frac{\partial}{\partial I_l}.
\end{align}
This is the first term of right hand of \eqref{IVirasoro}.
By \eqref{Itot} and \eqref{vItot},
\begin{align}\label{1}
&\sum_{n\geqslant 0}\frac{(m+n+1)!}{n!}t_n\frac{\partial}{\partial t_{m+n}}\\
=&\sum_{n\geqslant 0}\frac{(m+n+1)!}{n!}\sum_{k\geqslant0}
\frac{(-1)^kI_0^k}{k!}I_{n+k} \frac{I_0^{m+n}}{(m+n)!}\frac{1}{1-I_1}
\left(\frac{\partial}{\partial I_0}
+\sum_{l\geqslant1}I_{l+1}\frac{\partial}{\partial I_l}\right)\nonumber\\
&+\sum_{n\geqslant 0}\frac{(m+n+1)!}{n!}\sum_{k=0}^{\infty}
\frac{(-1)^kI_0^k}{k!}I_{n+k} \left(\sum_{1\leqslant l \leqslant m+n}
\frac{I_0^{m+n-l}}{(m+n-l)!}\frac{\partial}{\partial I_l}\right)\nonumber\\
=&\sum_{p\geqslant 0}\sum_{n+k=p}\frac{m+n+1}{n!k!}(-1)^kI_0^{m+n+k}I_{n+k} \frac{1}{1-I_1}\left(\frac{\partial}{\partial I_0}+\sum_{l\geqslant1}I_{l+1}\frac{\partial}{\partial I_l}\right)\nonumber\\
&+\sum_{l\geqslant1}\sum_{p\geqslant 0}\sum_{n+k=p}\frac{(m+n+1)!}{n!k!(m+n-l)!}(-1)^kI_0^{m+n+k-l}I_{n+k} \frac{\partial}{\partial I_l}\nonumber\\
=&\frac{(m+1+I_1)I_{0}^{m+1}}{1-I_1}\left(\frac{\partial}{\partial I_0}+\sum_{l\geqslant1}I_{l+1}\frac{\partial}{\partial I_l}\right) +\sum_{l\geqslant1,p\geqslant0}\sum_{i=0}^{l+1}\binom{m+1}{i}\binom{l+1}{i}i!\delta_{p,l+1-i} I_0^{m+p-l}I_p\frac{\partial}{\partial I_l}\nonumber\\
=&\frac{(m+1+I_{1})I_{0}^{m+1}}{1-I_1}\left(\frac{\partial}{\partial I_0}+\sum_{l\geqslant1}I_{l+1}\frac{\partial}{\partial I_l}\right) +\sum_{l\geqslant1}\sum_{i=0}^{l+1}\binom{m+1}{i}\binom{l+1}{i}i!I_0^{m+1-i}I_{l+1-i}\frac{\partial}{\partial I_l}. \nonumber
\end{align}
We have used the following identity:
\begin{equation}
\sum_{n+k=p}\frac{(m+n+1)!}{n!k!(m+n-l)!}(-1)^k=\sum_{i=0}^{l+1}\binom{m+1}{i}\binom{l+1}{i}i!\delta_{p,l+1-i}.
\end{equation}
This can be proved as follows:
\begin{align}
\sum_{n+k=p}\frac{(m+n+1)!}{n!k!(m+n-l)!}(-1)^k\nonumber
=&\sum_{n+k=p}\binom{m+n+1}{l+1}\frac{(l+1)!}{n!k!}(-1)^k\nonumber\\
=&\sum_{n+k=p}\sum_{i+j=l+1}\binom{m+1}{i}\binom{n}{j}\frac{(l+1)!}{n!k!}(-1)^k\nonumber\\
=&\sum_{i+j=l+1}\binom{m+1}{i}\frac{(l+1)!}{j!}\sum_{n+k=p}\frac{(-1)^k}{(n-j)!k!}\nonumber\\
=&\sum_{i=0}^{l+1}\binom{m+1}{i}\binom{l+1}{i}i!\delta_{p,l+1-i}\nonumber.
\end{align}
By \eqref{vItot}:
\begin{equation}
\frac{\partial}{\partial t_{m+1}}=\frac{1}{1-I_1}\frac{I_0^{m+1}}{(m+1)!}\frac{\partial}{\partial I_0}+\frac{I_0^{m+1}}{(m+1)!}\sum_{l\geqslant1}\frac{I_{l+1}}{1-I_1}\frac{\partial}{\partial I_l} +\sum_{1\leqslant l \leqslant m+1}\frac{I_0^{m+1-l}}{(m+1-l)!}\frac{\partial}{\partial I_l}.
\end{equation}
Therefore
\begin{align*}
&\sum_{n\geqslant 0}\frac{(m+n+1)!}{n!}(t_n-\delta_{n,1})\frac{\partial}{\partial t_{m+n}}\nonumber\\
=&\frac{(m+1+I_1)I_{0}^{m+1}}{1-I_1}\left(\frac{\partial}{\partial I_0}+\sum_{l\geqslant1}I_{l+1}\frac{\partial}{\partial I_l}\right) +\sum_{l\geqslant1}\sum_{i=0}^{l+1}\binom{m+1}{i}\frac{(l+1)!}{(l+1-i)!}I_0^{m+1-i}I_{l+1-i}
\frac{\partial}{\partial I_l} \nonumber\\
&-\frac{(m+2)I_{0}^{m+1}}{1-I_1}\left(\frac{\partial}{\partial I_0}+\sum_{l\geqslant1}I_{l+1}\frac{\partial}{\partial I_l}\right) -\sum_{1\leqslant l \leqslant m+1}\frac{(m+2)!}{(m+1-l)!}I_0^{m+1-l}\frac{\partial}{\partial I_l}\nonumber\\
=&-I_{0}^{m+1}\left(\frac{\partial}{\partial I_0}+\sum_{l\geqslant1}I_{l+1}\frac{\partial}{\partial I_l}\right)  +\sum_{l\geqslant1}\sum_{i=1}^{l}\binom{m+1}{i}\frac{(l+1)!}{(l+1-i)!}I_0^{m+1-i}I_{l+1-i}
\frac{\partial}{\partial I_l} \nonumber\\
& +\sum_{l\geqslant1}I_{0}^{m+1}I_{l+1}\frac{\partial}{\partial I_l} +\sum_{l\geqslant1}\binom{m+1}{l+1}\frac{(l+1)!}{0!}I_0^{m+1-l}\frac{\partial}{\partial I_l} -\sum_{1\leqslant l \leqslant m+1}\frac{(m+2)!}{(m+1-l)!}I_0^{m+1-l}\frac{\partial}{\partial I_l}\nonumber\\
=&-I_{0}^{m+1}\frac{\partial}{\partial I_0} +\sum_{i=1}^{m+1}\binom{m+1}{i}I_0^{m+1-i} \sum_{l\geqslant i}\frac{(l+1)!}{(l+1-i)!}I_{l+1-i}\frac{\partial}{\partial I_l}  -\sum_{l=1}^{m+1}\frac{(m+1)!}{(m+1-l)!}(l+1)I_0^{m+1-l}\frac{\partial}{\partial I_l}\nonumber\\
=&-I_{0}^{m+1}\frac{\partial}{\partial I_0} +\sum_{i=1}^{m+1}\binom{m+1}{i}I_0^{m+1-i} \sum_{p=1}^{\infty}\frac{(p+i)!}{p!}I_{p}\frac{\partial}{\partial I_{p+i-1}} -\sum_{l=1}^{m+1}\binom{m+1}{l}(l+1)!I_0^{m+1-l}\frac{\partial}{\partial I_l}\nonumber\\
=&-I_{0}^{m+1}\frac{\partial}{\partial I_0} +\sum_{i=1}^{m+1}\binom{m+1}{i}I_0^{m+1-i} \sum_{p=1}^{\infty}\frac{(p+i)!}{p!}(I_{p}-\delta_{p,1})\frac{\partial}{\partial I_{p+i-1}}.
\end{align*}

by \eqref{vItot},
so we have completed the proof.
\end{proof}

\subsection{Computations of $F^{1D}_g$ by Virasoro constraints in $I$-coordinates}

As applications of the expressions of the Virasoro operators expressed in $I$-coordinates
derived in last Subsection,
we use them in this Subsection to compute $F^{1D}_g$.
The formulas for $F^{1D}_0$ and $F^{1D}_1$ in the $I$-coordinate are already known.
We will focus on $F^{1D}_g(t)$ for $g\geqslant2$.

\begin{Theorem}\label{1Dgravity}
For free energy of 1D gravity of genus  $g\geqslant2$, the following equations hold:
\begin{align}
\frac{\partial F^{1D}_{g}}{\partial I_1}=&
\frac{1}{2(1-I_1)}\sum_{l=2}^{2g-1}(l+1)I_l\frac{\partial F^{1D}_{g}}{\partial I_l}, \label{a}\\
\frac{\partial F^{1D}_{g}}{\partial I_{2}}
=& \frac{1}{3(1-I_1)}\left(\frac{1}{1-I_1}\sum_{l=1}^{2g-1}I_{l+1}\frac{\partial F^{1D}_{g-1}}{\partial I_l}
+\sum_{p=2}^{2g-2}\binom{p+2}{p}I_p\frac{\partial F^{1D}_{g}}{\partial I_{p+1}}\right),\label{b}\\
\frac{\partial F^{1D}_{g}}{\partial I_{k+2}}=&\frac{1}{(k+3)(1-I_1)}
\left(\frac{\partial F^{1D}_{g-1}}{\partial I_k}
+\sum_{p=2}^{2g-2-k}\binom{p+k+2}{p}I_p\frac{\partial F^{1D}_{g}}{\partial I_{k+p+1}}\right),\label{c}
\end{align}
where $k=1,2,\cdots,2g-3$.
\end{Theorem}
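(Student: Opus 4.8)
The plan is to feed the operator identity of Theorem \ref{1Dcase} into the Virasoro constraint $L^{1D}_m Z^{1D}=0$, pass to the free energy via $Z^{1D}=e^{F^{1D}}$, extract the genus-$g$ coefficient of $\lambda^{2g-2}$, and read off the three equations. Concretely, I would first rewrite $L^{1D}_m Z^{1D}=0$ as an equation for $F^{1D}=\sum_g \lambda^{2g-2}F^{1D}_g$. The operators in \eqref{IVirasoro} are first-order in the $I$-derivatives, so $L^{1D}_m e^{F^{1D}}=e^{F^{1D}}\cdot(L^{1D}_m \text{ applied to } F^{1D})$ — there is no quadratic term because there is no $\partial^2$ piece in $L^{1D}_m$. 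Thus the constraint becomes simply $L^{1D}_m F^{1D}=0$ for each $m\geq 1$, where $L^{1D}_m$ now acts as a derivation on $F^{1D}$. The only $\lambda$-dependence on the left is the explicit prefactor $\lambda^2(m+1)!$ multiplying the first group of terms, so collecting powers of $\lambda$ gives, for each $m\geq 1$ and each $g$,
\begin{align*}
(m+1)!\left(\frac{1}{1-I_1}\frac{I_0^{m-1}}{(m-1)!}\frac{\partial F^{1D}_{g-1}}{\partial I_0}
+\frac{I_0^{m-1}}{(m-1)!}\sum_{l\geqslant1}\frac{I_{l+1}}{1-I_1}\frac{\partial F^{1D}_{g-1}}{\partial I_l}
+\sum_{1\leqslant l \leqslant m-1}\frac{I_0^{m-1-l}}{(m-1-l)!}\frac{\partial F^{1D}_{g-1}}{\partial I_l}\right)\\
-I_{0}^{m+1}\frac{\partial F^{1D}_{g}}{\partial I_0}
+\sum_{i=1}^{m+1}\binom{m+1}{i}I_0^{m+1-i}
\sum_{p=1}^{\infty}\frac{(p+i)!}{p!}(I_{p}-\delta_{p,1})\frac{\partial F^{1D}_{g}}{\partial I_{p+i-1}}=0.
\end{align*}

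Next I would exploit the weighted-homogeneity structure. Since $F^{1D}_g$ for $g\geq 2$ is a weighted-homogeneous series of degree $2g-2$ in the $I$-coordinates with $\deg I_n=n-1$, and (from Theorem \ref{thm:F1D}) involves only $I_1,\dots,I_{2g-1}$ with no $I_0$-dependence, the term $I_0^{m+1}\partial F^{1D}_g/\partial I_0$ vanishes, and similarly $\partial F^{1D}_{g-1}/\partial I_0=0$ for $g-1\geq 2$; the $g=2$ case (where $F^{1D}_1=\tfrac12\log\tfrac{1}{1-I_1}$ does depend on $I_1$ but still not on $I_0$) must be checked separately but works the same way. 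The homogeneity also caps all the sums over $l$ and $p$ at $2g-1$ (resp. $2g-2$), turning the infinite sums into the finite sums displayed in \eqref{a}--\eqref{c}. The remaining task is purely organizational: specialize $m=1$, $m=2$, $m=3$ and, after setting $I_0=0$ (which is legitimate since no $I_0$ appears), isolate the term with the single highest index. For $m=1$: the $\lambda^2$-prefactor group has $I_0^{m-1}=I_0^0=1$ and the third sum is empty, so only the $i=1$ contribution of the last sum survives after dividing by $(m+1)!=2$; this yields \eqref{a} upon collecting the $\partial/\partial I_1$ term with coefficient $2(1-I_1)$ against $\sum_l (l+1)I_l\partial/\partial I_l$ (here one uses that $(p+1)!/p!=p+1$ and the $\delta_{p,1}$ removes the $I_1\partial/\partial I_1$ self-term, exactly reproducing the shape of \eqref{Idilaton}). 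For $m=2$ one picks off $\partial/\partial I_2$ and the $g-1$ piece now genuinely contributes (this is the inhomogeneous term with $F^{1D}_{g-1}$ in \eqref{b}); for $m=3$, relabeling $m-1=k+2$, i.e. general $m\geq 3$, gives \eqref{c}.

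The main obstacle is bookkeeping in the index shifts: in the double sum $\sum_i\binom{m+1}{i}I_0^{m+1-i}\sum_p \frac{(p+i)!}{p!}(I_p-\delta_{p,1})\partial/\partial I_{p+i-1}$, after setting $I_0=0$ only $i=m+1$ survives, so the surviving operator is $\sum_p \frac{(p+m+1)!}{p!}(I_p-\delta_{p,1})\partial/\partial I_{p+m}$; one then has to verify that the $p=1$ term (namely $-(m+1)!\,\partial/\partial I_{m+1}$ after the $\delta_{p,1}$ subtraction, giving the "$+1$" normalization) combines with $\partial F^{1D}_g/\partial I_{m+1}$ to produce exactly the prefactor $(k+3)(1-I_1)$ in \eqref{c} — the $(1-I_1)$ is generated because the $p=1$ term $I_1\partial/\partial I_{p+m}$ at $p=1$ is the piece $(m+2)!\,I_1\partial/\partial I_{m+1}$ and its combination with the leading $(m+2)!\,\partial/\partial I_{m+1}$ from $p=1,\ I_p=1$... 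I would track this carefully by writing $I_p-\delta_{p,1}$ and separating the "$1$" (the dilaton-type normalization term) from the genuine $I_p$ terms, matching against the explicit low-$m$ computations of $L^{1D}_{-1},L^{1D}_0$ in \eqref{Istring}--\eqref{Idilaton} as a consistency check. Everything else — changing $\sum_{l\geq i}\frac{(l+1)!}{(l+1-i)!}I_{l+1-i}\partial/\partial I_l$ to $\sum_{p\geq 1}\frac{(p+i)!}{p!}I_p\partial/\partial I_{p+i-1}$ via $l=p+i-1$, and truncating via $\deg$ — is routine.
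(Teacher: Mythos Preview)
Your overall strategy---pass to the free energy, collect the $\lambda^{2g-2}$ coefficient, set $I_0=0$, and then vary $m$---is legitimate and is precisely the alternative route the paper sketches in the Remark following its proof. The paper's own argument is organized differently: it keeps $I_0$ arbitrary in the single constraint \eqref{Fg1D} for a fixed $m\geq 1$ and reads off \eqref{a}, \eqref{b}, \eqref{c} by comparing coefficients of $I_0^m$, $I_0^{m-1}$, $I_0^{m-1-l}$ respectively (and also records \eqref{a} independently from the $L^{1D}_0$ constraint \eqref{L0}). Either bookkeeping scheme works.

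There is, however, a genuine gap in your execution. When $I_0=0$, the factor $I_0^{m+1-i}$ in the double sum kills every term except $i=m+1$---as you yourself correctly state in your final paragraph, contradicting your earlier claim that ``only the $i=1$ contribution survives.'' Consequently, for $m=1$ the surviving operator is $\sum_{p\geq 1}\frac{(p+2)!}{p!}(I_p-\delta_{p,1})\,\partial/\partial I_{p+1}$, and isolating the $p=1$ term produces $\partial F^{1D}_g/\partial I_2$: this is \eqref{b}, not \eqref{a}. Likewise $m=k+1\geq 2$ produces \eqref{c}. Your $m$-to-equation assignment is off by one throughout, and equation \eqref{a} is simply not obtainable from any $L^{1D}_m$ with $m\geq 1$ under the specialization $I_0=0$. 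In the set-$I_0=0$ approach, \eqref{a} can only come from the $L^{1D}_0$ constraint \eqref{Idilaton}, which lies outside the scope of Theorem \ref{1Dcase} (stated for $m\geq 1$) and which you never invoke. Adding that single step fixes the argument.
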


\begin{proof}

The Virasoro constraints for partition function
$$L^{1D}_mZ^{1D}=0, \ \ \ m\geqslant0$$
can be rewritten as the following equations for free energy in I-coordinate:
\begin{align}
0&=-I_0\frac{\partial F^{1D}}{\partial I_0}-2\frac{\partial F^{1D}}{\partial I_1}+\sum_{l\geqslant 1}(l+1)I_l\frac{\partial F^{1D}}{\partial I_l}+1,\label{L0}\\
0&=\lambda^2(m+1)m\frac{I_0^{m-1}}{1-I_1}\left(\frac{\partial F^{1D}}{\partial I_0}+\sum_{l\geqslant1}I_{l+1}\frac{\partial F^{1D}}{\partial I_l} +(1-I_1)\sum_{1\leqslant l \leqslant m-1}\binom{m-1}{l}l!I_0^{-l}\frac{\partial F^{1D}}{\partial I_l}\right)\label{Lm}\\
&\ \ \ -I_0^{m+1}\frac{\partial F^{1D}}{\partial I_0}+\sum_{i=1}^{m+1}\binom{m+1}{i}\sum_{p=1}^{\infty} \frac{(i+p)!}{p!}(I_{p}-\delta_{p,1})I_0^{m-i+1}\frac{\partial F^{1D}}{\partial I_{i+p-1}}.\nonumber	
\end{align}
Recall the genus expansion of the free energy:
$$F^{1D}=\sum_{g=0}^{\infty}\lambda^{2g-2}F^{1D}_g,$$
we have by \eqref{L0}:
\begin{align}
I_0\frac{\partial F^{1D}_0}{\partial I_0}+2\frac{\partial F^{1D}_0}{\partial I_1}&=\sum_{l\geqslant 1}(l+1)I_l\frac{\partial F^{1D}_0}{\partial I_l}\\
2\frac{\partial F^{1D}_1}{\partial I_1}&=\sum_{l\geqslant 1}(l+1)I_l\frac{\partial F^{1D}_1}{\partial I_l}+1 \label{F1D1}\\
2\frac{\partial F^{1D}_g}{\partial I_1}&=\sum_{l\geqslant 1}(l+1)I_l\frac{\partial F^{1D}_g}{\partial I_l},\ \ \ g>1
\end{align}
and by \eqref{Lm}:
\begin{align}
0&=(m+1)m\frac{I_0^{m-1}}{1-I_1}\left(\sum_{l\geqslant1}I_{l+1}\frac{\partial F^{1D}_{g-1}}{\partial I_l} +(1-I_1)\sum_{1\leqslant l \leqslant m-1}\binom{m-1}{l}l!I_0^{-l}\frac{\partial F^{1D}_{g-1}}{\partial I_l}\right)\label{Fg1D}\\
&\ \ \ +\sum_{i=1}^{m+1}\binom{m+1}{i}\sum_{p=1}^{\infty} \frac{(i+p)!}{p!}(I_{p}-\delta_{p,1})I_0^{m-i+1}\frac{\partial F^{1D}_{g}}{\partial I_{i+p-1}},\nonumber
\end{align}
for $m\geqslant1,\ g\geqslant 2$.
Comparing the coefficients of $I_0^k$ in \eqref{Fg1D} for $k=m+1,m-1,m-2,\cdots,0$, we have:
\begin{align}
0=&\sum_{p\geqslant 1}(p+1)(I_p-\delta_{p,1})\frac{\partial F^{1D}_g}{\partial I_p}\\
0=&\sum_{l\geqslant1}\frac{I_{l+1}}{1-I_1}\frac{\partial F^{1D}_{g-1}}{\partial I_l}+\sum_{p\geqslant 1}\binom{p+2}{2}(I_p-\delta_{p,1})\frac{\partial F^{1D}_g}{\partial I_{p+1}}\\
0=&\frac{\partial F^{1D}_{g-1}}{\partial I_l}+\sum_{p\geqslant 1}\binom{p+l+2}{l+2}(I_p-\delta_{p,1})\frac{\partial F^{1D}_g}{\partial I_{p+l+1}}.
\end{align}
By \eqref{eqn:Fg1D}, $F_g^{1D}$ depends only on $I_1,I_2,\cdots,I_{2g-1}$, this completes the proof.
\end{proof}

\begin{Remark}Theorem \ref{1Dgravity} can be derived in another way: one can firstly rewrite $L_{-1}$ in I-coordinates, this will give $F^{1D}_0$ and show that $F_{g}$ is independent of $I_0$ for $g>0$. To solve $F_g$ for $g>0$, one can let $I_0=0$, in this case, $t_0=0$,
\be
 t_k=I_k, \ \ \ k>0
\ee
and
\be
\begin{split}
\frac{\partial}{\partial t_0}&=\frac{1}{1-I_1}\frac{\partial}{\partial I_0}+\sum_{l\geqslant1}\frac{I_{l+1}}{1-I_1}\frac{\partial}{\partial I_l},\\
\frac{\partial}{\partial t_k}&=\frac{\partial}{\partial I_k}, \ \ \ k>0
\end{split}
\ee
hence one can rewrite $\{L_m\}_{m\geqslant0}$ in I-coordinates in a simpler way.
\end{Remark}

Now we explain how to use Theorem \ref{1Dgravity} to calculate free energies of higher genus.
By \eqref{eqn:Fg1D}, $F^{1D}_g$ depends only on $I_1, \dots, I_{2g-1}$ for $g\geqslant 2$.
By \eqref{c}, we have:
\begin{align}
\frac{\partial F^{1D}_{g}}{\partial I_{2g-1}}=&\frac{1}{2g(1-I_1)}\left(\frac{\partial F^{1D}_{g-1}}{\partial I_{2g-3}}\right),\label{eqn:Fg1Dstart} \\
\frac{\partial F^{1D}_{g}}{\partial I_{2g-2}}=&\frac{1}{(2g-1)(1-I_1)}\left(\frac{\partial F^{1D}_{g-1}}{\partial I_{2g-4}}+\binom{2g}{2}I_2\frac{\partial F^{1D}_{g}}{\partial I_{2g-1}}\right),\\
\vdots & \\
\frac{\partial F^{1D}_{g}}{\partial I_{3}}=&\frac{1}{4(1-I_1)}\left(\frac{\partial F^{1D}_{g-1}}{\partial I_1}+\sum_{p=1}^{2g-3}\binom{p+3}{p}I_p\frac{\partial F^{1D}_{g}}{\partial I_{p+2}}\right).
\end{align}
For $\frac{\partial F^{1D}_{g}}{\partial I_{2}}$
and $\frac{\partial F^{1D}_{g}}{\partial I_{1}}$
we have by \eqref{a} and \eqref{b}:
\begin{align}
\frac{\partial F^{1D}_{g}}{\partial I_{2}}=& \frac{1}{3(1-I_1)}\left(\frac{1}{1-I_1}\sum_{l=1}^{2g-1}I_{l+1}\frac{\partial F^{1D}_{g-1}}{\partial I_l}+\sum_{p=2}^{2g-2}\binom{p+2}{p}I_p\frac{\partial F^{1D}_{g}}{\partial I_{p+1}}\right),\\
\frac{\partial F^{1D}_{g}}{\partial I_1}=&\frac{1}{2(1-I_1)}\sum_{l=2}^{2g-1}(l+1)I_l\frac{\partial F^{1D}_{g}}{\partial I_l}.\label{eqn:Fg1Dend}
\end{align}
By equations \eqref{eqn:Fg1Dstart}-\eqref{eqn:Fg1Dend},
we can solve $\left\{\frac{\pd F^{1D}_{g}}{\partial I_k}\right\}_{g\geqslant2,k=2g-1,2g-2,\cdots,1}$ recursively, given the computation for $F^{1D}_{g-1}$.
Since $F_{g}^{1D}$ is weighted homogeneous of degree $2g-2$ in $I_k$ with $\deg I_k = k-1$,
we have
\be
F^{1D}_g=\frac{1}{2g-2}\sum_{k=1}^{2g-1}(k-1)I_k\frac{\partial F^{1D}_{g}}{\partial I_k}.
\ee

\begin{Example}
Let us compute $F^{1D}_{2}$ by the above procedure.
We have
\begin{align}
\frac{\partial F^{1D}_{2}}{\partial I_{3}}=&\frac{1}{4(1-I_1)}\left(\frac{\partial F^{1D}_1}{\partial I_{1}}\right) =\frac{1}{8(1-I_1)^2},\\
\frac{\partial F^{1D}_{2}}{\partial I_{2}}=&\frac{1}{3(1-I_1)}\left(\frac{1}{1-I_1}I_{2}\frac{\partial F^{1D}_1}{\partial I_1}+\binom{4}{2}I_2\frac{\partial F^{1D}_{2}}{\partial I_{3}}\right) =\frac{5}{12}\frac{I_2}{(1-I_1)^3},\\
\frac{\partial F^{1D}_{2}}{\partial I_{1}}=&\frac{1}{2(1-I_1)}\left(3I_2\frac{\partial F^{1D}_{2}}{\partial I_2}+4I_3\frac{\partial F^{1D}_{2}}{\partial I_3}\right)=\frac{5}{8}\frac{I_2^2}{(1-I_1)^4}+\frac{I_3}{4(1-I_1)^{3}},
\end{align}
therefore
\begin{equation}
F^{1D}_2=\frac{1}{2}\left(I_2\frac{\partial F^{1D}_{2}}{\partial I_2}+2I_3\frac{\partial F^{1D}_{2}}{\partial I_3}\right) =\frac{5}{24}\frac{I_2^2}{(1-I_1)^3}+\frac{1}{8}\frac{I_3}{(1-I_1)^{2}}.
\end{equation}
Similarly, we have with the help of a computer program:
\begin{align}
F^{1D}_3=&\frac{15}{48}\frac{I_2^4}{(1-I_1)^6}+\frac{25}{48}\frac{I_2^2I_3}{(1-I_1)^{6}} +\frac{1}{12}\frac{I_3^2}{(1-I_1)^4} +\frac{7}{48}\frac{I_2I_4}{(1-I_1)^4}+\frac{1}{48}\frac{I_5}{(1-I_1)^3},\\
F^{1D}_4=&\frac{1105}{1152}\frac{I_2^6}{(1-I_1)^9}+\frac{985}{384}\frac{I_2^4I_3}{(1-I_1)^{8}} +\frac{445}{288}\frac{I_2^2I_3^2}{(1-I_1)^7}+\frac{11}{96}\frac{I_3^3}{(1-I_1)^6} +\frac{161}{192}\frac{I_2^3I_4}{(1-I_1)^7}+\frac{7}{12}\frac{I_2I_3I_4}{(1-I_1)^6}\\
&+\frac{21}{640}\frac{I_4^2}{(1-I_1)^{5}} +\frac{113}{576}\frac{I_2^2I_5}{(1-I_1)^6}+\frac{5}{96}\frac{I_3I_5}{(1-I_1)^5} +\frac{1}{32}\frac{I_2I_6}{(1-I_1)^5} +\frac{1}{384}\frac{I_7}{(1-I_1)^4}. \nonumber
\end{align}

\end{Example}

\subsection{Special deformation of the spectral curve of 1D topological gravity in $I$-coordinates}
In \cite{Zhou1}, the second named author defined the special deformation of 1D topological gravity by:
\be
y^{1D}=\frac{1}{\sqrt{2}}\sum_{n\geqslant0}\frac{t_n-\delta_{n,1}}{n!}z^n+\frac{\sqrt{2}}{z} +\sqrt{2}\sum_{n\geqslant1}\frac{n!}{z^{n+1}}\frac{\partial F_0^{1D}}{\partial t_{n-1}}.
\ee
This is a deformation of the Catalan curve \cite{Zhou4}.
Now we rewrite it in $I$-coordinates.
We get the following result:
\begin{Theorem}\label{thm:specialdeformation1D}
In the $I$-coordinates, the special deformation of the spectral curve of 1D
topological gravity can be written as:
\begin{align} \label{eqn:y1D-I}
y^{1D}=&\frac{\sqrt{2}}{z-I_0}+\frac{1}{\sqrt{2}}\sum_{n\geqslant1}\frac{I_n-\delta_{n,1}}{n!}(z-I_0)^n.
\end{align}
\end{Theorem}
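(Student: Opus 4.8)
The plan is to separate $y^{1D}$ into its polynomial part in $z$ and its principal part in $z^{-1}$, and to match each of these against the corresponding piece of \eqref{eqn:y1D-I}. For the polynomial part, the key observation is that it is essentially $S'$: by \eqref{def:Action} one has $S'(x)=-x+\sum_{n\geqslant0}t_n\frac{x^n}{n!}=\sum_{n\geqslant0}(t_n-\delta_{n,1})\frac{x^n}{n!}$, so the polynomial part of $y^{1D}$ is exactly $\frac{1}{\sqrt2}S'(z)$. On the other hand, differentiating the renormalized form \eqref{eqn:Action} of the action with respect to $x$ and then substituting $x=z$ gives
\be
S'(z)=\sum_{n\geqslant2}(I_{n-1}-\delta_{n,2})\frac{(z-I_0)^{n-1}}{(n-1)!}=\sum_{m\geqslant1}(I_m-\delta_{m,1})\frac{(z-I_0)^m}{m!},
\ee
which is precisely $\sqrt2$ times the polynomial part of \eqref{eqn:y1D-I}, so those two pieces agree.

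For the principal part I would first expand $\frac{\sqrt2}{z-I_0}=\sqrt2\sum_{k\geqslant0}\frac{I_0^k}{z^{k+1}}$ as a formal Laurent series in $z^{-1}$. Comparing coefficients, the claim then reduces to the genus-zero one-point identity
\be
\frac{\partial F^{1D}_0}{\partial t_k}=\frac{I_0^{k+1}}{(k+1)!},\qquad k\geqslant0,
\ee
since substituting $n!\,\partial F^{1D}_0/\partial t_{n-1}=I_0^n$ into $\frac{\sqrt2}{z}+\sqrt2\sum_{n\geqslant1}\frac{n!}{z^{n+1}}\frac{\partial F^{1D}_0}{\partial t_{n-1}}$ turns it into $\sqrt2\sum_{n\geqslant0}\frac{I_0^n}{z^{n+1}}$. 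This identity is immediate from $F^{1D}_0=S(x_\infty)$: comparing \eqref{eqn:F01D} with the value of \eqref{eqn:Action} at $x=I_0$ shows $F^{1D}_0=S(x_\infty)$ with $x_\infty=I_0$ the critical point of \eqref{eqn:Critical}; differentiating in $t_k$ and applying the chain rule, the contribution of $\partial x_\infty/\partial t_k$ is annihilated by $\frac{\partial S}{\partial x}(x_\infty)=0$, leaving $\frac{\partial F^{1D}_0}{\partial t_k}=\frac{\partial S}{\partial t_k}(x_\infty)=\frac{x_\infty^{k+1}}{(k+1)!}=\frac{I_0^{k+1}}{(k+1)!}$. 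Adding the two parts then yields \eqref{eqn:y1D-I}.

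I do not expect a genuine obstacle here; the only things requiring care are the bookkeeping around the formal expansion $\sqrt2\sum_{n\geqslant0}I_0^nz^{-n-1}=\sqrt2/(z-I_0)$, which is legitimate as an equality in $\mathbb{C}[[t_0,t_1,\dots]]((z^{-1}))$ since $I_0$ lies in $\mathbb{C}[[t_0,t_1,\dots]]$ (indeed in its maximal ideal), and the re-indexing $n\mapsto n-1$ that matches the $\sqrt2/z$ term of $y^{1D}$ against the $k=0$ term of the geometric series. The substantive input, the one-point identity $\partial F^{1D}_0/\partial t_k=I_0^{k+1}/(k+1)!$, is itself recorded in \cite{Zhou1}, so it may alternatively just be cited rather than re-derived.
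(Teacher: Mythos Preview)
Your argument is correct. The paper itself does not prove this theorem directly: its entire proof reads ``This is just the $N=1$ case of \cite[Theorem 2.1]{Zhou4}.'' So your route is genuinely different in that it is self-contained within the present paper. You split $y^{1D}$ into its regular and principal parts in $z$, identify the regular part with $\frac{1}{\sqrt2}S'(z)$ and read off its $(z-I_0)$-expansion from \eqref{eqn:Action}, and reduce the principal part to the one-point identity $\partial F^{1D}_0/\partial t_k=I_0^{k+1}/(k+1)!$, which you derive cleanly from $F^{1D}_0=S(x_\infty)$ via the chain rule and criticality. What your approach buys is that a reader need not chase the external reference; what the paper's approach buys is brevity and the observation that the $1$D case sits inside the general Hermitian one-matrix result. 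Both are valid; your version would actually strengthen the exposition here. One cosmetic point: ``polynomial part'' should perhaps be ``regular part'' or ``non-negative part,'' since the series $\sum_{n\geqslant0}(t_n-\delta_{n,1})z^n/n!$ is not literally a polynomial.
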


\begin{proof}
This is just the $N=1$ case of \cite[Theorem 2.1]{Zhou4}.
\end{proof}

\section{Computations in Hermitian One-Matrix Models by the Renormalized Coupling Constants}
\label{sec:HMM}

In this section we recall the results on Hermitian one-matrix models in \cite{Zhou2, Zhou3, Zhou4}.
The partition function of this theory satisfies Virasoro constraints.
Similar to the case of topological 1D gravity,
we rewrite the Virasoro constraints for Hermitian one-matrix models in $I$-coordinates
and use them to derive the explicit formulas for the free energy in $I$-coordinates.

\subsection{Free energy functions of the Hermitian one-matrix models}

For standard references on matrix models,
see e.g. \cite{Mehta, DGZ}.
Here we follow the notations in \cite{Zhou2, Zhou3, Zhou4}.
The partition function of the Hermitian $N\times N$-matrix model is defined by the formal Gaussian integral:
\be
Z^{N}=\frac{\int_{\mathbb{H}_N}dM \exp{\left(\frac{1}{g_s}tr\left(-\frac{1}{2}M^2+\sum_{n=0}^{\infty}t_n\frac{M^{n+1}}{(n+1)!}\right)\right)}} {\int_{\mathbb{H}_N}dM \exp{\left(-\frac{1}{2g_s}trM^2\right)}},
\ee
where $\mathbb{H}_N$ is the space of Hermitian $N\times N$-matrices.
One can see that for $N=1$,
\be
Z^{N=1}=Z^{1D}.
\ee
The following result is well known, for the proof, one can see \cite{Mehta, DGZ}.
\begin{Proposition}\label{PropMM}
For the Hermitian one-matrix integrals, one has:
\be
\int_{\mathbb{H}_N}dM \exp{\left(\frac{1}{g_s}trV(M)\right)}
=\int_{\mathbb{R}^{N}}\prod_{1\leqslant i<j\leqslant N}(\lambda_i-\lambda_{j})^2
\exp{\left(\frac{1}{g_s}\sum_{i=1}^{N} V(\lambda_i)\right)}\prod_{i=1}^{N}d\lambda_i,
\ee
where $\lambda_1,\lambda_2,\cdots,\lambda_N$ are eigenvalues of $M$.
\end{Proposition}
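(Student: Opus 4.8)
The plan is to establish the identity by diagonalizing the Hermitian matrix and computing the Jacobian of the resulting change of variables; the squared Vandermonde factor emerges from this Jacobian, while the ``angular'' integration over the unitary group factors out as an overall constant because $\mathrm{tr}\,V(M)$ is conjugation-invariant. Since the $t_n$ are formal parameters, I would first reduce to a genuine polynomial potential: expanding $\exp\bigl(\frac{1}{g_s}\mathrm{tr}\sum_{n\geq 0} t_n M^{n+1}/(n+1)!\bigr)$ as a formal power series in the $t_n$, it suffices to prove the identity for the Gaussian weight $\exp(-\frac{1}{2g_s}\mathrm{tr}\,M^2)$ with arbitrarily many polynomial insertions $\mathrm{tr}\,M^{k}$; each such integrand is still a conjugation-invariant function of $M$, and the Gaussian factor makes the integral absolutely convergent.

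Next I would use the spectral theorem on the open dense subset of $\mathbb{H}_N$ consisting of matrices with pairwise distinct eigenvalues (its complement has Lebesgue measure zero): every such $M$ can be written $M = U\Lambda U^\dagger$ with $\Lambda = \mathrm{diag}(\lambda_1,\dots,\lambda_N)$, $\lambda_i\in\mathbb{R}$ and $U\in U(N)$, where $U$ is determined up to right multiplication by the maximal torus $T=U(1)^N$ and $\Lambda$ up to permutation of its entries. The core step is the Jacobian at a diagonal point $M=\Lambda$: writing a nearby unitary as $U=e^{A}$ with $A$ anti-Hermitian, one has $dM = d\Lambda + [A,\Lambda] + O(A^2)$, whose diagonal entries are just $d\lambda_i$ while the $(i,j)$-entry of $[A,\Lambda]$ equals $(\lambda_j-\lambda_i)A_{ij}$. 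Hence, in the coordinates $(\lambda_1,\dots,\lambda_N)$ together with the real and imaginary parts of the $A_{ij}$ for $i<j$, the differential of the parametrization is block diagonal --- the identity on the $\lambda$-block and multiplication by $(\lambda_j-\lambda_i)$ on each complex off-diagonal coordinate --- so its Jacobian is $\prod_{i<j}|\lambda_i-\lambda_j|^2=\prod_{i<j}(\lambda_i-\lambda_j)^2$. This gives $dM = \prod_{i<j}(\lambda_i-\lambda_j)^2\,d\lambda_1\cdots d\lambda_N\,d\mu(U)$, with $d\mu$ the $U(N)$-invariant measure on $U(N)/T$ induced by the Haar measure on $U(N)$.

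Finally I would integrate out the unitary variable. Because $\mathrm{tr}\,V(M)=\mathrm{tr}\,V(\Lambda)=\sum_{i=1}^N V(\lambda_i)$ does not depend on $U$, the $U$-integration contributes only the constant $N!\,c_N$ with $c_N=\int_{U(N)/T}d\mu(U)$, the factor $N!$ accounting for restricting the $\lambda_i$ to a Weyl chamber (equivalently, for the $N!$-fold overcounting of unordered spectra). One thus obtains the claimed formula up to this universal constant, which is harmless for everything in this paper: it is common to the numerator and to the Gaussian denominator in the definition of $Z^N$ and therefore cancels, and with the standard normalizations of Lebesgue measure on $\mathbb{H}_N$ and of Haar measure it can moreover be computed explicitly. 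The only delicate points --- the main obstacle, such as it is --- are the measure-zero exceptional locus of coincident eigenvalues and the careful bookkeeping of the torus-and-permutation redundancy in $M=U\Lambda U^\dagger$; the rest is the elementary linear-algebra computation above, which is why the result is quoted as well known with references to \cite{Mehta, DGZ}.
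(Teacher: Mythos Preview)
Your argument is correct and is precisely the standard proof; the paper itself does not give a proof at all but simply states that the result is well known and refers to \cite{Mehta, DGZ}, where exactly this diagonalization-plus-Jacobian computation appears. Your observation that the stated identity suppresses the constant coming from the angular integration over $U(N)/T$ (and the $N!$ from the Weyl group) is also accurate, and your explanation that it cancels in the ratio defining $Z^N$ is the right way to resolve it.
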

By taking $V(M)=-\frac{1}{2}M^2+\sum_{n=0}^{\infty}t_n\frac{M^{n+1}}{(n+1)!}$ in this Proposition,
we get the following analogue of the renormalization of topological 1D gravity:
\begin{Proposition}\label{Prop:RenormalHMM}
For the Hermitian one-matrix integrals, one has:
\begin{align}
&\int_{\mathbb{H}_N}dM\exp{\left(\frac{1}{g_s}tr\left(-\frac{1}{2}M^2
+\sum_{n=0}^{\infty}t_n\frac{M^{n+1}}{(n+1)!}\right)\right)}\\
=&\exp\left(\frac{N}{g_s}\sum_{k=0}^{\infty}\frac{(-1)^k}{(k+1)!}(I_k+\delta_{k,1})I_{0}^{k+1}\right)\cdot
\int_{\mathbb{H}_N}dM\exp\left(\frac{1}{g_s}tr
\left(-\frac{1}{2}M^2+\sum_{n=1}^{\infty}I_n\frac{M^{n+1}}{(n+1)!}\right)\right).\nonumber
\end{align}
\end{Proposition}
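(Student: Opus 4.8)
The plan is to reduce Proposition \ref{Prop:RenormalHMM} to the single-eigenvalue renormalization identity \eqref{eqn:Action}, which is already established for 1D topological gravity. By Proposition \ref{PropMM}, the Hermitian one-matrix integral can be rewritten as an integral over the eigenvalues $\lambda_1,\dots,\lambda_N$, in which the potential enters only through the sum $\sum_{i=1}^N V(\lambda_i)$ with $V(x) = -\frac{1}{2}x^2 + \sum_{n\geq 0} t_n \frac{x^{n+1}}{(n+1)!}$. This is precisely the action function $S(x)$ of \eqref{def:Action}, and its renormalized form \eqref{eqn:Action} reads
\be
V(x)=\sum_{k=0}^{\infty}\frac{(-1)^k}{(k+1)!}(I_k+\delta_{k,1})I_{0}^{k+1}+\sum_{n=2}^{\infty}(I_{n-1}-\delta_{n,2})\frac{(x-I_0)^{n}}{n!},
\ee
where $I_0$ and the $I_k$ are defined by \eqref{def:Ik} and \eqref{eqn:Critical}. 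The point is that the constant term (the value $S(x_\infty)$) is independent of the eigenvalue, so when summed over $i$ it contributes a factor $N\cdot S(x_\infty)$, which exponentiates to the prefactor $\exp\left(\frac{N}{g_s}\sum_{k}\frac{(-1)^k}{(k+1)!}(I_k+\delta_{k,1})I_0^{k+1}\right)$.

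First I would substitute \eqref{eqn:Action} into $\sum_{i=1}^N V(\lambda_i)$, splitting it into the constant part times $N$ and the part $\sum_{i=1}^N \left(-\frac{1}{2}(\lambda_i-I_0)^2 + \sum_{n\geq 1} I_n \frac{(\lambda_i-I_0)^{n+1}}{(n+1)!}\right)$, after re-indexing the tail sum (replacing $n$ by $n+1$ and noting that the $\delta_{n,2}$ term reassembles the $-\frac{1}{2}$ quadratic piece). Next I would pull the constant factor out of the eigenvalue integral and perform the change of variables $\mu_i = \lambda_i - I_0$ for each $i$; since $I_0$ is a constant (a formal power series in the $t_n$, not depending on the integration variables), this is a translation with Jacobian $1$, and crucially the Vandermonde factor $\prod_{i<j}(\lambda_i-\lambda_j)^2 = \prod_{i<j}(\mu_i-\mu_j)^2$ is translation-invariant. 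Hence the remaining integral over the $\mu_i$ has exactly the form of an eigenvalue integral for the potential $\tilde V(\mu) = -\frac{1}{2}\mu^2 + \sum_{n\geq 1} I_n \frac{\mu^{n+1}}{(n+1)!}$. Finally I would apply Proposition \ref{PropMM} in reverse to repackage this eigenvalue integral as the matrix integral $\int_{\mathbb{H}_N} dM \exp\left(\frac{1}{g_s} tr\left(-\frac{1}{2}M^2 + \sum_{n\geq 1} I_n \frac{M^{n+1}}{(n+1)!}\right)\right)$, which is the claimed right-hand side.

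The only genuine subtlety — and the step I would be most careful about — is the formal-power-series bookkeeping: the identity \eqref{eqn:Action} is an equality of formal expressions obtained as the limit of iterated completions of the square, and $I_0$ is itself a formal power series in $\{t_n\}$ with no constant term, so the translation $\lambda_i \mapsto \lambda_i + I_0$ must be interpreted order-by-order in the formal variables. One should check that all the rearrangements (re-indexing the infinite sum, pulling the eigenvalue-independent constant out, expanding $(\lambda_i - I_0)^n$) are legitimate manipulations of formal Gaussian integrals in the sense used throughout \cite{Zhou1}, i.e. that each coefficient in the $t$-expansion is a finite computation. Once this is granted, the argument is essentially a transcription of the 1D case ($N=1$) verbatim, with the extra observation that the Vandermonde is unaffected; indeed for $N=1$ this recovers exactly the renormalization of $S(x)$ used to define the $I$-coordinates, so the Proposition is the natural $N$-fold generalization and no new idea beyond translation-invariance of the Vandermonde is needed.
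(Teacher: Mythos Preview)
Your proposal is correct and follows essentially the same approach as the paper: reduce to the eigenvalue integral via Proposition \ref{PropMM}, apply the 1D renormalization identity \eqref{eqn:Action} to each $V(\lambda_i)$, use translation-invariance of the Vandermonde to shift $\lambda_i\mapsto\lambda_i-I_0$, and then convert back to a matrix integral. Your added discussion of the formal-power-series bookkeeping is more careful than the paper's own proof, which simply carries out the manipulation without comment.
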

\begin{proof}
By \ref{PropMM} and renormalization of the action function of 1D gravity, one has:
\begin{align*}
&\int_{\mathbb{H}_N}dM \exp{\left(\frac{1}{g_s}tr\left(-\frac{1}{2}M^2+\sum_{n=0}^{\infty}t_n\frac{M^{n+1}}{(n+1)!}\right)\right)}\\
=&\int_{\mathbb{R}^{N}}\prod_{1\leqslant i<j\leqslant N}(\lambda_i-\lambda_{j})^2
\exp{\left(\frac{1}{g_s}\sum_{i=1}^{N}
\left(-\frac{1}{2}\lambda_i^2+\sum_{n=0}^{\infty}t_n\frac{\lambda_i^{n+1}}{(n+1)!}\right)\right)}
\prod_{i=1}^{N}d\lambda_i\\
=&\int_{\mathbb{R}^{N}}\prod_{1\leqslant i<j\leqslant N}(\lambda_i-I_0-(\lambda_{j}-I_0))^2
\exp\left(\frac{1}{g_s}\sum_{i=1}^{N}
\sum_{k=0}^{\infty}\frac{(-1)^k}{(k+1)!}(I_k+\delta_{k,1})I_{0}^{k+1}\right)\cdot\\
&\exp\left(\frac{1}{g_s}\sum_{i=1}^{N}
\left(-\frac{1}{2}(\lambda_i-I_0)^2+\sum_{n=1}^{\infty}I_n\frac{(\lambda_i-I_0)^{n+1}}{(n+1)!}\right)\right)
\prod_{i=1}^{N}d\lambda_i\\
=&\exp\left(\frac{N}{g_s}\sum_{k=0}^{\infty}\frac{(-1)^k}{(k+1)!}(I_k+\delta_{k,1})I_{0}^{k+1}\right)\cdot\\
&\int_{\mathbb{R}^{N}}\prod_{1\leqslant i<j\leqslant N}(\lambda_i-\lambda_{j})^2
\exp\left(\frac{1}{g_s}\sum_{i=1}^{N}
\left(-\frac{1}{2}\lambda_i^2+\sum_{n=1}^{\infty}I_n\frac{\lambda_i^{n+1}}{(n+1)!}\right)\right)
\prod_{i=1}^{N}d\lambda_i\\
=&\exp\left(\frac{N}{g_s}\sum_{k=0}^{\infty}\frac{(-1)^k}{(k+1)!}(I_k+\delta_{k,1})I_{0}^{k+1}\right)\cdot
\int_{\mathbb{H}_N}dM
\exp\left(\frac{1}{g_s}tr
\left(-\frac{1}{2}M^2+\sum_{n=1}^{\infty}I_n\frac{M^{n+1}}{(n+1)!}\right)\right).
\end{align*}
\end{proof}
The free energy $F^{N}$ of Hermitian one-matrix models is defined by:
\be
F^N:=\log{Z^{N}}.
\ee
There is a genus expansions for $F^N$:
\be\label{Ngenusexpansion}
F^{N}=\sum_{g=0}^{\infty}g_s^{g-1}F^{N}_g.
\ee
where $\{F_g^{N}\}_{g\geqslant0}$ are formal power series of $t_0,t_1, \cdots$.
This is called  the thin genus expansion in \cite{Zhou3}.
By a result in \cite{Zhou3}, if we define
\be
\deg{t_n}=n-1, \ \ \ n=0,1,2,\cdots
\ee
then $F^{N}_{g}$ is weighted homogeneous in $t_0,t_1, \cdots$ with
\be
\deg{F_{g}^{1D}}=2g-2,\ \ \ g=0,1,2,\cdots
\ee
The following analogue of the Itzykson-Zuber Ansatz and Theorem \ref{thm:F1D}
is proved in \cite{Zhou3}:

\begin{Theorem}(\cite[Theorem 5.1 and 5.2]{Zhou3})
\begin{align}
F^{N}_0=&N\sum_{k=0}^{\infty}\frac{(-1)^k}{(k+1)!}(I_k+\delta_{k,1})I_{0}^{k+1},\label{eqn:F0N} \\
F^{N}_1=&\frac{N^2}{2}\log{\frac{1}{1-I_1}},\label{eqn:F1N}\\
F^{N}_g=&\sum_{\sum_{2\leqslant k\leqslant 2g-1}m_k(k-1)=2g-2} \left<\prod_{j=2}^{2g-1}\tau_{j}^{m_j}\right>^{N}_g\cdot \prod_{j=2}^{2g-1}\frac{1}{m_j!}\left(\frac{I_{j}}{(1-I_1)^{\frac{j+1}{2}}}\right)^{m_j} , \ g\geqslant2 \label{eqn:FgN}
\end{align}
where the correlators are defined by:
\be
\left<\tau_{a_1}\cdots\tau_{a_n}\right>^{N}_g=\frac{\partial^n F^{N}_g}{\partial t_{a_1}\cdots \partial t_{a_n}}\bigg|_{t=0}.
\ee
\end{Theorem}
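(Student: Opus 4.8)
The plan is to mimic the treatment of 1D topological gravity in Section \ref{T1D}: first use the renormalization already carried out in Proposition \ref{Prop:RenormalHMM} to peel off the $I_0$-dependence and pin down $F^N_0$, and then rewrite the Virasoro constraints for $Z^N$ in the $I$-coordinates and solve them genus by genus.

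For \eqref{eqn:F0N} I would argue as follows. Proposition \ref{Prop:RenormalHMM} gives
\[
Z^N=\exp\!\Big(\tfrac{N}{g_s}\sum_{k\geq0}\tfrac{(-1)^k}{(k+1)!}(I_k+\delta_{k,1})I_0^{k+1}\Big)\cdot\widetilde Z^N,
\]
where $\widetilde Z^N$ is the \emph{same} matrix integral with $t_0$ replaced by $0$ and $t_n$ replaced by $I_n$ for $n\geq1$. Since $t_0=0$ forces $x_\infty=0$ in \eqref{eqn:Critical}, hence $I_n=t_n$ by \eqref{def:Ik}, the factor $\widetilde Z^N$ is literally $Z^N$ evaluated at the point $(0,I_1,I_2,\dots)$ of the big phase space. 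Taking logarithms and comparing coefficients of $g_s^{g-1}$ in the thin genus expansion \eqref{Ngenusexpansion},
\[
F^N_0(t)=N\sum_{k\geq0}\tfrac{(-1)^k}{(k+1)!}(I_k+\delta_{k,1})I_0^{k+1}+F^N_0(0,I_1,I_2,\dots),\qquad F^N_g(t)=F^N_g(0,I_1,I_2,\dots)\ \ (g\geq1).
\]
Now $F^N_g$ is weighted homogeneous of degree $2g-2$ with $\deg t_n=n-1$, while every monomial in $t_1,t_2,\dots$ has non-negative degree; hence the degree $-2$ series $F^N_0(0,I_1,\dots)$ vanishes, which proves \eqref{eqn:F0N} and at the same time shows that for $g\geq1$ the series $F^N_g$ is independent of $I_0$ and weighted homogeneous of degree $2g-2$ in $I_1,I_2,\dots$ with $\deg I_k=k-1$. (The same conclusions can alternatively be read off from the $I$-coordinate form of $L^N_{-1}$, obtained exactly as in \eqref{Istring}.)

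Next I would rewrite $L^N_0$ and $L^N_m$ ($m\geq1$) in the $I$-coordinates. The part of each $L^N_m$ that is linear in the $\partial/\partial t_k$ and carries the coupling constants $t_k-\delta_{k,1}$ has the same shape as in 1D, so the computation in the proof of Theorem \ref{1Dcase} transcribes verbatim; what is new for $N>1$ are the extra terms coming from the variation of the Vandermonde factor $\prod_{i<j}(\lambda_i-\lambda_j)^2$, namely a term bilinear in the $\partial/\partial t_k$ together with its $N$-dependent boundary contributions (all absent at $N=1$, consistently with \eqref{eqn:Virasoro1D}). Converting these extra terms with \eqref{vItot}, one finds that $L^N_0$ becomes \eqref{Idilaton} with the constant $1$ replaced by $N^2$, while each $L^N_m$ becomes \eqref{IVirasoro} augmented by the images of the extra terms. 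Feeding the thin genus expansion into $L^N_mZ^N=0$ and, as in the proof of Theorem \ref{1Dgravity}, equating the coefficients of the various powers of $I_0$ (all of which must vanish for $g\geq1$) then yields: from $L^N_0$, the equation $2\,\partial F^N_1/\partial I_1=\sum_{l\geq1}(l+1)I_l\,\partial F^N_1/\partial I_l+N^2$, whence $F^N_1=f(I_1)$ (since $\deg F^N_1=0$), $f'=N^2/\big(2(1-I_1)\big)$, $f(0)=0$, i.e.\ \eqref{eqn:F1N}; and, from $L^N_m$ with $m\geq1$, recursions of the same type as \eqref{a}--\eqref{c}, now expressing $\partial F^N_g/\partial I_{k+2}$, $\partial F^N_g/\partial I_2$, $\partial F^N_g/\partial I_1$ in terms of $\partial F^N_{g-1}/\partial I_\bullet$, of products $\partial F^N_{g_1}/\partial I_\bullet\cdot\partial F^N_{g_2}/\partial I_\bullet$ with $g_1+g_2<g$ (from the bilinear term), and of lower $\partial F^N_g/\partial I_\bullet$, all divided by $1-I_1$.

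Finally, for $g\geq2$ I would induct on $g$: assuming that $F^N_{g'}$ for $g'<g$ depends only on $I_1,\dots,I_{2g'-1}$ and has the shape \eqref{eqn:FgN}, the recursions force $\partial F^N_g/\partial I_j=0$ for $j>2g-1$ (the lower-genus inputs vanish in that range by the inductive bounds combined with the degree count), so $F^N_g$ involves only $I_1,\dots,I_{2g-1}$; solving the recursions downward from $I_{2g-1}$ and using the Euler identity $F^N_g=\tfrac1{2g-2}\sum_{k=1}^{2g-1}(k-1)I_k\,\partial F^N_g/\partial I_k$ then produces a finite sum of monomials in the $I_j$ carrying exactly the powers of $1-I_1$ dictated by the degree bookkeeping, i.e.\ the form \eqref{eqn:FgN}; the coefficients are identified with the correlators $\langle\prod_j\tau_j^{m_j}\rangle^N_g$ by differentiating in the $I_j$ and setting $I=0$, using that $\partial/\partial t_k=\partial/\partial I_k$ at the origin by \eqref{vItot}. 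The main obstacle I anticipate is precisely the bookkeeping for the $N>1$ corrections to the Virasoro operators in $I$-coordinates --- in particular, checking that the bilinear term and its $N$-dependent pieces, once rewritten, slot into the same $I_0$-coefficient comparison without spoiling either the finite range $I_1,\dots,I_{2g-1}$ or the half-integer exponents of $1-I_1$; once that is established, the remainder is a routine transcription of the 1D arguments.
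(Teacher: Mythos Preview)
Your treatment of $F^N_0$ via Proposition~\ref{Prop:RenormalHMM} and of $F^N_1$ via the $I$-coordinate form of $L^N_0$ is correct and matches the paper's route (the paper cites the result from \cite{Zhou3}, where, as stated in the Introduction, the proof is based solely on the puncture equation $L^N_{-1}$ and the dilaton equation $L^N_0$ in $I$-coordinates; the argument is spelled out in full for the 2D analogue in Theorem~\ref{IZ}).

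For $g\geq2$ your route diverges from the paper's, and there is a gap. You propose to feed in the higher constraints $L^N_m$ ($m\geq1$), obtain recursions of the type \eqref{a}--\eqref{c}/\eqref{d}--\eqref{f}, and induct on $g$. But the step ``produces a finite sum of monomials in the $I_j$ carrying exactly the powers of $1-I_1$ dictated by the degree bookkeeping'' is not justified: weighted homogeneity of degree $2g-2$ with $\deg I_k=k-1$ does \emph{not} by itself force the $I_1$-dependence to organize into the specific factors $(1-I_1)^{-(j+1)/2}$; it is compatible with many other combinations of $I_1$ and higher $I_j$'s of the same total degree. To close this gap via your route you would have to check, recursion by recursion, that the inductive inputs from $F^N_{g'}$ ($g'<g$) together with the explicit $1/(1-I_1)$ prefactors in \eqref{d}--\eqref{f} reproduce exactly the half-integer exponents --- a nontrivial verification that you have not carried out. (Note also that in the paper the recursions of Theorem~\ref{HMM} are stated \emph{after} assuming \eqref{eqn:FgN}, which is what allows the sums to be truncated.)

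The paper's argument avoids all of this: for $g\geq2$ one uses only $L^N_0$, which in $I$-coordinates (equation~\eqref{NIdilaton}, after dropping $\partial/\partial I_0$) gives the single first-order linear PDE
\[
2\,\frac{\partial F^N_g}{\partial I_1}=\sum_{l\geq1}(l+1)I_l\,\frac{\partial F^N_g}{\partial I_l}\qquad(g\geq2).
\]
Its characteristic invariants are precisely $I_j/(1-I_1)^{(j+1)/2}$, so $F^N_g$ is an arbitrary function of these; homogeneity of degree $2g-2$ then forces it to be a polynomial in $I_2/(1-I_1)^{3/2},\dots,I_{2g-1}/(1-I_1)^{g}$; evaluating at $I_1=0$ (where $t_k=I_k$) identifies the coefficients with the correlators. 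This is exactly the argument written out in the proof of Theorem~\ref{IZ} (with $(2j+1)/3$ replaced by $(j+1)/2$). Your higher-$L_m$ machinery is what the paper uses \emph{afterwards}, in Theorem~\ref{HMM}, to actually compute the correlators, not to establish the structural form \eqref{eqn:FgN}.
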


In the literature another type of genus expansion is used.
By introducing the 't Hooft coupling constant
\be
t = N g_s,
\ee
$F^N$ can be rewritten as:
\be\label{tgenusexpansion}
F^{N}=\sum_{g=0}^{\infty}g_s^{2g-2}F^{t}_g.
\ee
where $\{F_g^{t}\}_{g\geqslant0}$ are formal power series of $t_0,t_1, \cdots$ and $t$.
This is called  the fat genus expansion in \cite{Zhou3}.
We will study both thin and fat genus expansions for the free energy function of Hermitian one-matrix
model by the correspondence Virasoro constraints.

\subsection{Virasoro constraints of Hermitian one-matrix model}
\begin{Theorem}(Virasoro constraints for thin genus expansion\cite{Zhou3})
The partition function $Z^{N}$ of Hermitian one-matrix model satisfies the following equations for $m\geqslant-1$:
\be
L^{N}_mZ^{N}=0,
\ee
where
\begin{align}
L^{N}_{-1}&=\frac{Nt_0}{g_s}+\sum_{n\geqslant 1}(t_n-\delta_{n,1})\frac{\partial}{\partial t_{n-1}},\\
L^{N}_{0}&=N^2+\sum_{n\geqslant 0}(n+1)(t_n-\delta_{n,1})\frac{\partial}{\partial t_{n}},\\
L^{N}_{m}&=2Ng_sm!\frac{\partial}{\partial t_{m-1}}+\sum_{n\geqslant 0}\frac{(m+n+1)!}{n!}(t_n-\delta_{n,1})\frac{\partial}{\partial t_{m+n}}
+g_s^2\sum_{k=1}^{m-1}k!(m-k)!\frac{\partial}{\partial t_{k-1}}\frac{\partial}{\partial t_{m-k-1}}.
\end{align}
for $m\geqslant1$. Furthermore, $\{L^{N}_m\}_{m\geqslant-1}$ satisfies the following commutation relations:
\be
\left[L^{N}_m,L^{N}_n\right]=(m-n)L^{N}_{m+n}.
\ee
for $m,n\geqslant-1$.
\end{Theorem}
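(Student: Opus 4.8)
The plan is to obtain the constraints $L^N_m Z^N=0$ as Ward identities for the formal matrix integral, and then to verify the Witt-algebra relations by a graded bracket computation. Write $\tilde Z^N = \int_{\mathbb H_N} dM\,\exp\!\big(\tfrac1{g_s}\mathrm{tr}\,V(M)\big)$ for the numerator in the definition of $Z^N$, where $V(M) = -\tfrac12 M^2 + \sum_{n\ge 0} t_n \tfrac{M^{n+1}}{(n+1)!}$, so that $V'(M) = \sum_{n\ge 0}\tfrac{t_n-\delta_{n,1}}{n!}M^n$; everything below is meant order by order in the perturbative (formal Gaussian) expansion. The normalizing Gaussian integral in the denominator does not involve the $t_n$, so the constant scalar it contributes pulls out of any operator in the $t$-variables and it suffices to show $L^N_m\tilde Z^N=0$. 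The one input from the integral is the dictionary that $\partial_{t_n}$ inserts $\tfrac1{g_s(n+1)!}\mathrm{tr}(M^{n+1})$: an insertion of $\mathrm{tr}(M^{j})$ equals $g_s\,j!\,\partial_{t_{j-1}}$ applied to $\tilde Z^N$, and an insertion of $\mathrm{tr}(M^{j})\mathrm{tr}(M^{l})$ equals $g_s^2\,j!\,l!\,\partial_{t_{j-1}}\partial_{t_{l-1}}$.

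For each $m\ge -1$ I would use invariance of $\tilde Z^N$ under the change of variable $M\mapsto M+\epsilon M^{m+1}$. Computing the first-order variation --- either directly on $\mathbb H_N$, starting from $0=\sum_{a,b}\int dM\,\partial_{M_{ab}}\big[(M^{m+1})_{ab}\,e^{\frac1{g_s}\mathrm{tr}V(M)}\big]$ with the convention $\partial M_{cd}/\partial M_{ab}=\delta_{ca}\delta_{db}$, or, using Proposition \ref{PropMM}, on the eigenvalue side via $\lambda_i\mapsto\lambda_i+\epsilon\lambda_i^{m+1}$ --- one finds in both descriptions that the Jacobian and the Vandermonde (respectively the measure Jacobian) combine so that the expectation of
\[
\sum_{k=0}^{m}\mathrm{tr}(M^{k})\,\mathrm{tr}(M^{m-k})\ +\ \frac{1}{g_s}\,\mathrm{tr}\!\big(M^{m+1}V'(M)\big)
\]
vanishes, the $k$-sum being empty when $m=-1$. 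Pulling out the two boundary terms $k=0$ and $k=m$, where $\mathrm{tr}(M^{0})=N$, and applying the dictionary turns this, for $m\ge 1$, into
\[
\Big(2Ng_s\,m!\,\partial_{t_{m-1}}+g_s^2\!\sum_{k=1}^{m-1}k!(m-k)!\,\partial_{t_{k-1}}\partial_{t_{m-k-1}}+\sum_{n\ge 0}\tfrac{(m+n+1)!}{n!}(t_n-\delta_{n,1})\partial_{t_{m+n}}\Big)\tilde Z^N=0,
\]
which is exactly $L^N_m\tilde Z^N=0$. The cases $m=0$ (where $M^{m+1}=M$, a rescaling whose Jacobian supplies the constant $N^2$) and $m=-1$ (where $M^{m+1}=I_N$, a translation with trivial Jacobian, the $n=0$ piece of $\mathrm{tr}\,V'(M)$ supplying the term $Nt_0/g_s$) yield $L^N_0$ and $L^N_{-1}$ in the same way.

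For the bracket relations I would grade by setting $\deg\partial_{t_n}=-(n+1)$, so that each $L^N_m$ is homogeneous of degree $-m$, and split $L^N_m=D_m+g_s^2Q_m$ with $Q_m=\sum_{k=1}^{m-1}k!(m-k)!\,\partial_{t_{k-1}}\partial_{t_{m-k-1}}$ having constant coefficients and $D_m$ the first-order-plus-lower remainder. Then $[Q_m,Q_n]=0$, so $[L^N_m,L^N_n]=[D_m,D_n]+g_s^2\big([D_m,Q_n]-[D_n,Q_m]\big)$, and it remains to check $[D_m,D_n]=(m-n)D_{m+n}$ and $[D_m,Q_n]-[D_n,Q_m]=(m-n)Q_{m+n}$. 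The first is the bracket of the classical part: once the $t$-derivatives of one operator act on the linear coefficients $(t_p-\delta_{p,1})$ of the other it reduces to a resummation of a product of two factorials against a binomial coefficient, the same computation underlying the 1D Virasoro algebra, now also carrying the $N$-linear pieces that generate the terms $N^2$ and $Nt_0/g_s$ exactly when $m+n\in\{0,-1\}$. The second holds because differentiating a coefficient $(t_p-\delta_{p,1})$ in $D_m$ by one of the two derivatives of $Q_n$ shifts an index by precisely $m$ and converts $Q_n$ into $Q_{m+n}$ with the correct factor; structurally this is forced, the $L^N_m$ being the images of the Witt generators $\ell_m=-z^{m+1}\partial_z$ under the loop-equation construction and $g_s^2 Q_m$ the anomaly-free correction.

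The main difficulty is combinatorial rather than structural: one must keep the boundary terms $k=0,m$ of the Jacobian sum separated so that no spurious $\partial_{t_{-1}},\partial_{t_{-2}}$ appear (these are precisely what becomes the $2Ng_s\,m!\,\partial_{t_{m-1}}$ contribution), fix the formal differentiation convention on $\mathbb H_N$ so that the Jacobian term emerges as $\sum_{k=0}^{m}\mathrm{tr}(M^{k})\mathrm{tr}(M^{m-k})$, and, in the commutator computation, push through the factorial-binomial resummations together with the separate treatment of $m+n\in\{0,-1\}$, where the non-differential parts of $L^N_0$ and $L^N_{-1}$ must be reproduced.
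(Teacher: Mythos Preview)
Your proposal is a correct and standard derivation of the Virasoro constraints for the Hermitian one-matrix model via the loop-equation/Ward-identity method, together with a reasonable outline for the bracket computation. However, there is nothing to compare it against: in the paper this theorem is simply quoted from \cite{Zhou3} and carries no proof. The paper treats it as an input result from the literature, on the same footing as the 1D Virasoro constraints cited from \cite{NY,Zhou1} and the 2D Virasoro constraints cited from \cite{DVV1,Kon}.

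That said, what you wrote is exactly the argument one finds in the matrix-model literature (e.g.\ \cite{DGZ,Mehta}) and in \cite{Zhou3}: the infinitesimal reparametrization $M\mapsto M+\epsilon M^{m+1}$, the Jacobian contribution $\sum_{k=0}^{m}\mathrm{tr}(M^{k})\,\mathrm{tr}(M^{m-k})$, the separation of the boundary terms $k=0,m$ that produce the linear $2Ng_s m!\,\partial_{t_{m-1}}$ piece, and the dictionary between trace insertions and $t$-derivatives are all carried out correctly. Your splitting $L^N_m=D_m+g_s^2Q_m$ and the observation $[Q_m,Q_n]=0$ is also the natural way to organize the commutator check. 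The only place to be slightly careful, which you flagged yourself, is the bookkeeping at $m+n\in\{-1,0\}$ where the scalar terms $Nt_0/g_s$ and $N^2$ must emerge from the cross brackets; this works out once one tracks the $n=0$ term of the linear piece of $D_m$ against the $N$-dependent constants.
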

\begin{Theorem}(Virasoro constraints for fat genus expansion\cite{Zhou3})
The partition function $Z^{N}$ of Hermitian one-matrix model with $t=Ng_s$ satisfies the following equations for $m\geqslant-1$:
\be
L^{t}_mZ^{N}=0,
\ee
where
\begin{align}
L^{t}_{-1}&=\frac{tt_0}{g_s^2}+\sum_{n\geqslant 1}(t_n-\delta_{n,1})\frac{\partial}{\partial t_{n-1}},\\
L^{t}_{0}&=\frac{t^2}{g_s^2}+\sum_{n\geqslant 0}(n+1)(t_n-\delta_{n,1})\frac{\partial}{\partial t_{n}},\\
L^{t}_{m}&=2tm!\frac{\partial}{\partial t_{m-1}}+\sum_{n\geqslant 0}\frac{(m+n+1)!}{n!}(t_n-\delta_{n,1})\frac{\partial}{\partial t_{m+n}}
+g_s^2\sum_{k=1}^{m-1}k!(m-k)!\frac{\partial}{\partial t_{k-1}}\frac{\partial}{\partial t_{m-k-1}}.
\end{align}
for $m\geqslant1$. Furthermore, $\{L^{t}_m\}_{m\geqslant-1}$ satisfies the following commutation relations:
\be
\left[L^{t}_m,L^{t}_n\right]=(m-n)L^{t}_{m+n}.
\ee
for $m,n\geqslant-1$.
\end{Theorem}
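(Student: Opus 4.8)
The plan is to deduce this from the Virasoro constraints for the thin genus expansion recorded just above. The key point is that, after the substitution $t=Ng_s$, the operators $L^t_m$ coincide \emph{identically} with the operators $L^N_m$: indeed $\frac{tt_0}{g_s^2}=\frac{Ng_s t_0}{g_s^2}=\frac{Nt_0}{g_s}$ in $L^t_{-1}$, $\frac{t^2}{g_s^2}=\frac{N^2g_s^2}{g_s^2}=N^2$ in $L^t_0$, and $2tm!=2Ng_sm!$ in $L^t_m$ for $m\geqslant1$, while the transport term $\sum_{n\geqslant0}\frac{(m+n+1)!}{n!}(t_n-\delta_{n,1})\frac{\partial}{\partial t_{m+n}}$ and the quadratic term $g_s^2\sum_{k=1}^{m-1}k!(m-k)!\frac{\partial}{\partial t_{k-1}}\frac{\partial}{\partial t_{m-k-1}}$ are literally the same in the two presentations. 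Hence $L^t_mZ^N=L^N_mZ^N=0$ for all $m\geqslant-1$ by the thin-expansion theorem, and likewise $[L^t_m,L^t_n]=[L^N_m,L^N_n]=(m-n)L^N_{m+n}=(m-n)L^t_{m+n}$.

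For completeness I would also record the self-contained derivation directly from the matrix integral, which proves the thin and the fat versions simultaneously. Starting from $Z^N\propto\int_{\mathbb{H}_N}dM\,\exp(\tfrac{1}{g_s}\mathrm{tr}\,V(M))$ with $V(M)=-\tfrac12 M^2+\sum_n t_n\frac{M^{n+1}}{(n+1)!}$, one uses for each $m\geqslant-1$ the vanishing total derivative $0=\int dM\sum_{i,j}\frac{\partial}{\partial M_{ij}}\big[(M^{m+1})_{ij}\exp(\tfrac{1}{g_s}\mathrm{tr}\,V(M))\big]$. Differentiating $(M^{m+1})_{ij}$ yields the splitting sum $\sum_{a=0}^{m}\mathrm{tr}\,M^a\cdot\mathrm{tr}\,M^{m-a}$, and differentiating the exponential yields $\tfrac{1}{g_s}\mathrm{tr}(M^{m+1}V'(M))$ with $V'(M)=-M+\sum_n t_nM^n/n!$. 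Converting each insertion $\mathrm{tr}\,M^{k}$ with $k\geqslant1$ into $g_s k!\,\frac{\partial}{\partial t_{k-1}}$ and $\mathrm{tr}\,M^0$ into $N$, the interior terms $a=1,\dots,m-1$ assemble into $g_s^2\sum_{k=1}^{m-1}k!(m-k)!\frac{\partial}{\partial t_{k-1}}\frac{\partial}{\partial t_{m-k-1}}$, the boundary terms $a=0$ and $a=m$ into $2Ng_sm!\frac{\partial}{\partial t_{m-1}}$, and the term $\tfrac{1}{g_s}\mathrm{tr}(M^{m+1}V'(M))$ into $\sum_{n\geqslant0}\frac{(m+n+1)!}{n!}(t_n-\delta_{n,1})\frac{\partial}{\partial t_{m+n}}$; rewriting $N=t/g_s$ gives exactly $L^t_m$, and the algebra relations follow from a direct commutator computation.

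The hard part in the direct route is the bookkeeping of the low-index and boundary contributions: isolating the $a=0$ and $a=m$ terms of the splitting sum, which produce the factors of $N$ and account for the structural difference between $L_{-1},L_0$ and $L_m$ for $m\geqslant1$, and tracking how the $-\tfrac12 M^2$ summand of $V$, equivalently the $-M$ in $V'$, feeds into the $\delta_{n,1}$ shift in the transport term. Once these are handled the matching against the stated $L^t_m$, and the verification of the Virasoro algebra, are routine; given the excerpt, the one-line reduction via $t=Ng_s$ is the most economical proof.
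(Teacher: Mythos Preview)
Your proposal is correct. The paper itself does not supply a proof of this statement: both the thin-expansion and fat-expansion Virasoro constraints are quoted as results from \cite{Zhou3}, without argument. Your reduction via the substitution $t=Ng_s$ is exactly the point---the fat-expansion operators $L^t_m$ are nothing more than the thin-expansion operators $L^N_m$ rewritten in terms of the 't~Hooft coupling, so the annihilation of $Z^N$ and the Virasoro algebra relations carry over verbatim. The supplementary derivation from the matrix integral that you sketch is the standard one and would serve as an independent proof of the thin version as well; the bookkeeping caveats you flag (boundary terms $a=0,m$ producing the factors of $N$, and the $-\tfrac12 M^2$ piece yielding the $\delta_{n,1}$ shift) are the genuine points to watch, and you have them right.
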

Similarly as in the topological 1D gravity theory, one can rewrite the Virasoro operators in I-coordinates:
\begin{Theorem} The Virasoro operators for thin genus expansion can be written
in $I$-coordinates as follows:
\begin{align}
L^{N}_{-1}=&-\frac{\partial}{\partial I_0} +\frac{N}{g_s}\sum_{n=0}^{\infty}\frac{(-1)^nI_0^n}{n!}I_n,\label{NIstring}\\
L^{N}_0=&-I_0\frac{\partial}{\partial I_0}-2\frac{\partial}{\partial I_1}+\sum_{l\geqslant 1}(l+1)I_l\frac{\partial}{\partial I_l}+N^2, \label{NIdilaton}\\
L_{m}^N=&2Ng_sm!\left(\frac{1}{1-I_1}\frac{I_0^{m-1}}{(m-1)!}d_{X} +\sum_{1\leqslant l \leqslant m-1}\frac{I_0^{m-1-l}}{(m-1-l)!}\frac{\partial}{\partial I_l}\right)\\
&-I_{0}^{m+1}\frac{\partial}{\partial I_0} +\sum_{i=1}^{m+1}\binom{m+1}{i}I_0^{m+1-i} \sum_{p=1}^{\infty}\frac{(p+i)!}{p!}(I_{p}-\delta_{p,1})\frac{\partial}{\partial I_{p+i-1}}\nonumber\\
&+g_s^2\binom{m+1}{3}\left(\frac{(m-2)I_0^{m-3}}{1-I_1}d_{X}\frac{\partial}{\partial I_{1}} +\frac{I_0^{m-2}}{(1-I_1)^2}d_{X}^2 +\left(\frac{(m-2)I_0^{m-3}}{(1-I_1)^2}+\frac{I_0^{m-2}I_2}{(1-I_1)^3}\right) d_{X}\right) \nonumber\\
&+2g_s^2\sum_{j=2}^{m-2}\binom{m+1}{j+3}(j+1)!\frac{I_{0}^{m-2-j}}{1-I_1} \left(\frac{\partial}{\partial I_{j-1}}+d_{X}\frac{\partial}{\partial I_j}\right)\nonumber\\
&+g_s^2\sum_{i=1}^{m-2}\sum_{j=1}^{m-2-i}\binom{m+1}{i+j+3}(i+1)!(j+1)!I_0^{m-2-i-j}\frac{\partial}{\partial I_i}\frac{\partial}{\partial I_j},\nonumber
\end{align}
where
\be
d_{X}=\frac{\partial}{\partial I_0}+\sum_{l\geqslant1}I_{l+1}\frac{\partial}{\partial I_l}.
\ee
Moreover, by taking $N=\frac{t}{g_s}$, one get the Virasoro operators for fat genus expansion
in $I$-coordinates:
\begin{align}
L^{t}_{-1}=&-\frac{\partial}{\partial I_0} +\frac{t}{g_s^2}\sum_{n=0}^{\infty}\frac{(-1)^nI_0^n}{n!}I_n,\label{tIstring}\\
L^{t}_0=&-I_0\frac{\partial}{\partial I_0}-2\frac{\partial}{\partial I_1}+\sum_{l\geqslant 1}(l+1)I_l\frac{\partial}{\partial I_l}+\frac{t^2}{g_s^2}, \label{tIdilaton}\\
L_{m}^t=&2tm!\left(\frac{1}{1-I_1}\frac{I_0^{m-1}}{(m-1)!}d_{X} +\sum_{1\leqslant l \leqslant m-1}\frac{I_0^{m-1-l}}{(m-1-l)!}\frac{\partial}{\partial I_l}\right)\\
&-I_{0}^{m+1}\frac{\partial}{\partial I_0} +\sum_{i=1}^{m+1}\binom{m+1}{i}I_0^{m+1-i} \sum_{p=1}^{\infty}\frac{(p+i)!}{p!}(I_{p}-\delta_{p,1})\frac{\partial}{\partial I_{p+i-1}}\nonumber\\
&+g_s^2\binom{m+1}{3}\left(\frac{(m-2)I_0^{m-3}}{1-I_1}d_{X}\frac{\partial}{\partial I_{1}} +\frac{I_0^{m-2}}{(1-I_1)^2}d_{X}^2 +\left(\frac{(m-2)I_0^{m-3}}{(1-I_1)^2}+\frac{I_0^{m-2}I_2}{(1-I_1)^3}\right) d_{X}\right) \nonumber\\
&+2g_s^2\sum_{j=2}^{m-2}\binom{m+1}{j+3}(j+1)!\frac{I_{0}^{m-2-j}}{1-I_1} \left(\frac{\partial}{\partial I_{j-1}}+d_{X}\frac{\partial}{\partial I_j}\right)\nonumber\\
&+g_s^2\sum_{i=1}^{m-2}\sum_{j=1}^{m-2-i}\binom{m+1}{i+j+3}(i+1)!(j+1)!I_0^{m-2-i-j}\frac{\partial}{\partial I_i}\frac{\partial}{\partial I_j},\nonumber
\end{align}
\end{Theorem}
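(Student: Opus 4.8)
The plan is to reduce everything as far as possible to the computations already carried out for topological 1D gravity. The operators $L^N_{-1}$ and $L^N_0$ differ from $L^{1D}_{-1}$ and $L^{1D}_0$ only in their scalar parts: the differential parts $\sum_{n\ge1}(t_n-\delta_{n,1})\tfrac{\partial}{\partial t_{n-1}}$ and $\sum_{n\ge0}(n+1)(t_n-\delta_{n,1})\tfrac{\partial}{\partial t_n}$ are literally identical. Hence the change-of-variables identities behind \eqref{Istring} and \eqref{Idilaton} in \cite{Zhou1}, together with $t_0=\sum_{n\ge0}\tfrac{(-1)^nI_0^n}{n!}I_n$ (the $n=0$ instance of \eqref{Itot}), yield \eqref{NIstring} and \eqref{NIdilaton} after replacing $\tfrac1{\lambda^2}$ by $\tfrac{N}{g_s}$ and $1$ by $N^2$; the fat versions \eqref{tIstring} and \eqref{tIdilaton} then come from the substitution $N\mapsto t/g_s$.

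For $m\ge1$ I would write $L^N_m=2Ng_sm!\,\tfrac{\partial}{\partial t_{m-1}}+A_m+g_s^2B_m$ with $A_m=\sum_{n\ge0}\tfrac{(m+n+1)!}{n!}(t_n-\delta_{n,1})\tfrac{\partial}{\partial t_{m+n}}$ and $B_m=\sum_{k=1}^{m-1}k!(m-k)!\,\tfrac{\partial}{\partial t_{k-1}}\tfrac{\partial}{\partial t_{m-k-1}}$. Rewriting the single vector field $\tfrac{\partial}{\partial t_{m-1}}$ via \eqref{vItot}, in the convenient split form $\tfrac{\partial}{\partial t_a}=\tfrac{I_0^a}{a!(1-I_1)}d_{X}+\sum_{1\le l\le a}\tfrac{I_0^{a-l}}{(a-l)!}\tfrac{\partial}{\partial I_l}$, produces the first displayed line of the claimed formula for $L^N_m$. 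The term $A_m$ is, up to notation, exactly the expression converted to $I$-coordinates inside the proof of Theorem \ref{1Dcase} (using \eqref{Itot}, \eqref{vItot}, and the Vandermonde-type identity recorded there), so it equals $-I_0^{m+1}\tfrac{\partial}{\partial I_0}+\sum_{i=1}^{m+1}\binom{m+1}{i}I_0^{m+1-i}\sum_{p\ge1}\tfrac{(p+i)!}{p!}(I_p-\delta_{p,1})\tfrac{\partial}{\partial I_{p+i-1}}$, which is the second displayed line.

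The genuinely new work is the conversion of $g_s^2B_m$. I would expand both factors of each composition $\tfrac{\partial}{\partial t_{k-1}}\circ\tfrac{\partial}{\partial t_{m-k-1}}$ in the split form above and multiply out as operators, retaining the first-order terms generated whenever $d_{X}$ or a $\tfrac{\partial}{\partial I_l}$ in the left factor differentiates an $I_0$- or $(1-I_1)$-dependent coefficient of the right factor; this needs only $d_{X}I_0=1$, $d_{X}I_l=I_{l+1}$ $(l\ge1)$, $d_{X}\tfrac1{1-I_1}=\tfrac{I_2}{(1-I_1)^2}$, $\tfrac{\partial}{\partial I_l}I_0=\delta_{l,0}$, $\tfrac{\partial}{\partial I_l}\tfrac1{1-I_1}=\tfrac{\delta_{l,1}}{(1-I_1)^2}$. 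Collecting the result by operator type — $d_{X}^2$, $d_{X}\tfrac{\partial}{\partial I_l}$, $\tfrac{\partial}{\partial I_i}\tfrac{\partial}{\partial I_j}$, and residual first-order pieces — leaves for each type a finite sum over $k$ of products of factorial ratios times a power of $I_0$, evaluated by Vandermonde-convolution identities. For example the tail$\,\times\,$tail contribution carries no correction (since $\tfrac{\partial}{\partial I_i}$ with $i\ge1$ annihilates the $I_0$-powers in the right tail) and yields the coefficient $(i+1)!(j+1)!\binom{m+1}{i+j+3}I_0^{m-2-i-j}$ of $\tfrac{\partial}{\partial I_i}\tfrac{\partial}{\partial I_j}$ via $\sum_a\binom{i+1+a}{i+1}\binom{m-i-1-a}{j+1}=\binom{m+1}{i+j+3}$, while the self-composition of the $d_{X}$-parts gives the $d_{X}^2$ coefficient $\binom{m+1}{3}\tfrac{I_0^{m-2}}{(1-I_1)^2}$ via $\sum_{k=1}^{m-1}k(m-k)=\binom{m+1}{3}$.

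The main obstacle is precisely this last reorganization. The $d_{X}$-part$\,\times\,$tail and tail$\,\times\,d_{X}$-part contributions, together with the corrections obtained when $d_{X}$ hits the tail coefficients, must be shown to assemble into the combinations $\tfrac{\partial}{\partial I_{j-1}}+d_{X}\tfrac{\partial}{\partial I_j}$ (with the doubled coefficient $2\binom{m+1}{j+3}(j+1)!\tfrac{I_0^{m-2-j}}{1-I_1}$) and, from the self-composition of the $d_{X}$-parts, into $d_{X}^2$, $d_{X}\tfrac{\partial}{\partial I_1}$, and $d_{X}$ with the coefficients displayed in the theorem; checking this requires a few more elementary binomial identities together with careful re-indexing of the double sums, and it is where essentially all the effort goes. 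Once the thin-expansion formula for $L^N_m$ is established, the fat-expansion formula for $L^t_m$ is immediate, since $L^t_m$ is by definition $L^N_m$ with $N$ replaced by $t/g_s$, under which $2Ng_s\mapsto2t$, $N/g_s\mapsto t/g_s^2$, $N^2\mapsto t^2/g_s^2$, and the $g_s^2$-terms are unchanged.
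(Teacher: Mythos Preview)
Your proposal is correct and follows essentially the same route as the paper's proof: the paper also reduces $L^N_{-1}$, $L^N_0$, the term $2Ng_sm!\,\partial/\partial t_{m-1}$, and $A_m$ to the 1D computations (citing Theorem~\ref{1Dcase}), and then devotes the bulk of the work to expanding $B_m=\sum_{k=1}^{m-1}k!(m-k)!\,\partial_{t_{k-1}}\partial_{t_{m-k-1}}$ via \eqref{vItot}, organizing the resulting operator products into seven labeled pieces, and collapsing each $k$-sum with the single identity $\sum_{k=a}^{m-b}\tfrac{k!(m-k)!}{(k-a)!(m-k-b)!}=\binom{m+1}{a+b+1}a!b!$, of which your Vandermonde convolutions (e.g.\ $\sum_k k(m-k)=\binom{m+1}{3}$) are the relevant special cases. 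The only cosmetic difference is bookkeeping: you group by operator type ($d_X^2$, $d_X\partial_{I_l}$, $\partial_{I_i}\partial_{I_j}$, first-order corrections), whereas the paper groups by which factors of the product are being paired, but the underlying manipulations and identities are the same.
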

\begin{proof}
We have proved the following identity in the proof of theorem \ref{1Dcase}:
\begin{align}
\sum_{n\geqslant 0}\frac{(m+n+1)!}{n!}(t_n-\delta_{n,1})\frac{\partial}{\partial t_{m+n}}
=&-I_{0}^{m+1}\frac{\partial}{\partial I_0} +\sum_{i=1}^{m+1}\binom{m+1}{i}I_0^{m+1-i} \sum_{p=1}^{\infty}\frac{(p+i)!}{p!}(I_{p}-\delta_{p,1})\frac{\partial}{\partial I_{p+i-1}}.
\end{align}
Now we rewrite $\sum_{k=1}^{m-1}k!(m-k)!\frac{\partial}{\partial t_{k-1}}\frac{\partial}{\partial t_{m-k-1}}$ in I-coordinates:

\begin{align*}
&\sum_{k=1}^{m-1}k!(m-k)!\frac{\partial}{\partial t_{k-1}}\frac{\partial}{\partial t_{m-k-1}}\\
=&\sum_{k=1}^{m-1}k!(m-k)! \left(\frac{1}{1-I_1}\frac{I_0^{k-1}}{(k-1)!}
\left(\frac{\partial}{\partial I_0}+\sum_{l\geqslant1}I_{l+1}\frac{\partial}{\partial I_l}\right) +\sum_{1\leqslant i \leqslant k-1}\frac{I_0^{k-1-i}}{(k-1-i)!}\frac{\partial}{\partial I_i}\right)
\\
&\cdot \left(\frac{1}{1-I_1}\frac{I_0^{m-k-1}}{(m-k-1)!}
\left(\frac{\partial}{\partial I_0}+\sum_{l\geqslant1}I_{l+1}\frac{\partial}{\partial I_l}\right)
+\sum_{1\leqslant j \leqslant m-k-1}\frac{I_0^{m-k-1-j}}{(m-k-1-j)!}
\frac{\partial}{\partial I_j}\right)\\
=&\sum_{k=1}^{m-1} \left(\frac{kI_0^{k-1}}{1-I_1}
\left(\frac{\partial}{\partial I_0}+\sum_{l\geqslant1}I_{l+1}\frac{\partial}{\partial I_l}\right)
 +\sum_{1\leqslant i \leqslant k-1}\frac{k!I_0^{k-1-i}}{(k-1-i)!}
 \frac{\partial}{\partial I_i}\right)\\
&\cdot \left(\frac{(m-k)I_0^{m-k-1}}{1-I_1}
\left(\frac{\partial}{\partial I_0}+\sum_{l\geqslant1}I_{l+1}\frac{\partial}{\partial I_l}\right)
+\sum_{1\leqslant j \leqslant m-k-1}\frac{(m-k)!I_0^{m-k-1-j}}{(m-k-1-j)!}
\frac{\partial}{\partial I_j}\right)\\
=&\sum_{k=1}^{m-1}k(m-k)\frac{I_0^{m-2}}{(1-I_1)^2}
\left(\frac{\partial}{\partial I_0}+\sum_{l\geqslant1}I_{l+1}\frac{\partial}{\partial I_l}\right)^2
\\
&+\sum_{k=1}^{m-1}\sum_{1\leqslant j \leqslant m-k-1}\frac{k(m-k)!}{(m-k-1-j)!}
\frac{I_0^{m-2-j}}{1-I_1}
\left(\frac{\partial}{\partial I_0}+\sum_{l\geqslant1}I_{l+1}\frac{\partial}{\partial I_l}\right)
\frac{\partial}{\partial I_j}\\
&+\sum_{k=1}^{m-1}\sum_{1\leqslant i \leqslant k-1}
\frac{(m-k)k!}{(k-1-i)!}\frac{I_0^{m-2-i}}{1-I_1} \frac{\partial}{\partial I_i}
\left(\frac{\partial}{\partial I_0}+\sum_{l\geqslant1}I_{l+1}\frac{\partial}{\partial I_l}\right)
\\
&+\sum_{k=1}^{m-1}\sum_{i=1}^{k-1}\sum_{j=1}^{m-k-1}
\frac{(m-k)!k!}{(k-1-i)!(m-k-1-j)!}I_0^{m-2-i-j} \frac{\partial}{\partial I_i}
\frac{\partial}{\partial I_j} \\
&+\sum_{k=1}^{m-1}k(m-k)\frac{I_0^{m-3}}{(1-I_1)^3}\left((m-k-1)(1-I_1)+I_0I_2\right)
\left(\frac{\partial}{\partial I_0}+\sum_{l\geqslant1}I_{l+1}\frac{\partial}{\partial I_l}\right)
 \\
&+\sum_{k=1}^{m-1}\sum_{1\leqslant j \leqslant m-k-1}\frac{k(m-k)!}{(m-k-2-j)!}
\frac{I_0^{m-3-j}}{1-I_1} \frac{\partial}{\partial I_j}\\
&+\sum_{k=2}^{m-1}\frac{(m-k)k!}{(k-2)!}\frac{I_0^{m-3}}{(1-I_1)^2}
\left(\frac{\partial}{\partial I_0}+\sum_{l\geqslant1}I_{l+1}\frac{\partial}{\partial I_l}\right).
\end{align*}
Denote these summations by (a)-(g) respectively, then using the following identity:
\begin{equation}
\sum_{k=a}^{m-b}\frac{k!(m-k)!}{(k-a)!(m-k-b)!}=\binom{m+1}{a+b+1}a!b!,
\end{equation}
we have:
\begin{align}
(a)=&\binom{m+1}{3}\frac{I_0^{m-2}}{(1-I_1)^2}
\left(\frac{\partial}{\partial I_0}+\sum_{l\geqslant1}I_{l+1}\frac{\partial}{\partial I_l}\right)^2,
\\
(b)=&\sum_{j=1}^{m-2}\binom{m+1}{j+3}I_{0}^{m-2-j}\frac{1}{1-I_1}
\left(\frac{\partial}{\partial I_0}+\sum_{l\geqslant1}I_{l+1}\frac{\partial}{\partial I_l}\right)
(j+1)!\frac{\partial}{\partial I_j},\\
(c)=&\sum_{i=1}^{m-2}\binom{m+1}{i+3}I_{0}^{m-2-i}\frac{1}{1-I_1}(i+1)!
\frac{\partial}{\partial I_i}
\left(\frac{\partial}{\partial I_0}+\sum_{l\geqslant1}I_{l+1}
\frac{\partial}{\partial I_l}\right)\\
=&\sum_{i=1}^{m-2}\binom{m+1}{i+3}I_{0}^{m-2-i}\frac{1}{1-I_1}
\left(\frac{\partial}{\partial I_0}+\sum_{l\geqslant1}I_{l+1}
\frac{\partial}{\partial I_l}\right)(i+1)!\frac{\partial}{\partial I_i}\nonumber\\
&+\sum_{i=2}^{m-2}\binom{m+1}{i+3}I_{0}^{m-2-i}\frac{1}{1-I_1}(i+1)!
\frac{\partial}{\partial I_{i-1}},\nonumber\\
(d)=&\sum_{i=1}^{m-2}\sum_{j=1}^{m-2-i}\sum_{k=i+1}^{m-j-1}
\frac{(m-k)!k!}{(k-1-i)!(m-k-1-j)!}I_0^{m-2-i-j} \frac{\partial}{\partial I_i}
\frac{\partial}{\partial I_j}\\
=&\sum_{i=1}^{m-2}\sum_{j=1}^{m-2-i}\binom{m+1}{i+j+3}I_0^{m-2-i-j}(i+1)!(j+1)!
\frac{\partial}{\partial I_i}\frac{\partial}{\partial I_j},\nonumber\\
(e)=&\left(2\binom{m+1}{4}\frac{I_0^{m-3}}{(1-I_1)^2}+\binom{m+1}{3}
\frac{I_0^{m-2}I_2}{(1-I_1)^3}\right) \left(\frac{\partial}{\partial I_0}+\sum_{l\geqslant1}I_{l+1}\frac{\partial}{\partial I_l}\right),\\
(f)=&\sum_{j=1}^{m-3}\binom{m+1}{j+4}\frac{I_0^{m-3-j}}{1-I_1} (j+2)!
\frac{\partial}{\partial I_j},\\
(g)=&2\binom{m+1}{4}\frac{I_0^{m-3}}{(1-I_1)^2}
\left(\frac{\partial}{\partial I_0}+\sum_{l\geqslant1}I_{l+1}\frac{\partial}{\partial I_l}\right).
\end{align}
Plus all these equations together, we complete the proof.
\end{proof}

\subsection{Computations of $F^{N}_g$ by Virasoro constraints for thin genus expansion in $I$-coordinates}
As applications of the expressions of the Virasoro operators expressed in $I$-coordinates
derived in last Subsection,
we use them in this Subsection to compute $F^{N}_g$.
\begin{Theorem}\label{HMM}
For the free energies $\{F_{g}^N\}_{g\geqslant2}$ of the Hermitian one-matrix model, the following equations hold:

\begin{align}
\frac{\partial F^{N}_{g}}{\partial I_1}=&\frac{1}{2(1-I_1)}\sum_{n=2}^{2g-1}(n+1)I_n\frac{\partial F^{N}_{g}}{\partial I_n},\label{d}\\
\frac{\partial F^N_{g}}{\partial I_{2}}=& \frac{N}{3(1-I_1)}\frac{1}{1-I_1} d_{X}\left(F^N_{g-1}\right) +\sum_{n=2}^{2g-2} \frac{(n+2)!}{3!n!}\frac{I_n}{1-I_1}\frac{\partial F^N_{g}}{\partial I_{n+1}},\\
\frac{\partial F^N_{g}}{\partial I_{3}}=& \frac{N}{6(1-I_1)}\frac{\partial F^N_{g-1}}{\partial I_1} +\sum_{n=2}^{2g-3}\frac{(n+3)!}{4!n!}\frac{I_n}{1-I_1}\frac{\partial F^N_{g}}{\partial I_{2+n}}+\frac{1}{4!(1-I_1)^3}\sum_{g_1=1}^{g-2}d_{X}\left(F^N_{g_1}\right) d_{X}\left(F^N_{g-1-g_1}\right)  \\
&+\frac{1}{4!(1-I_1)}\left(\frac{1}{1-I_1}d_{X}\right)^2\left(F_{g-2}^N\right) +\delta_{g,2}\frac{N}{4!(1-I_1)^2},\nonumber\\
\frac{\partial F^N_{g}}{\partial I_{4}} =&\frac{N}{10(1-I_1)}\frac{\partial F^N_{g-1}}{\partial I_{2}}+\sum_{n=2}^{2g-4}\frac{(n+4)!}{5!n!}\frac{I_n}{1-I_1}\frac{\partial F^N_{g}}{\partial I_{n+3}}+\frac{1}{30(1-I_1)^2}\sum_{g_1=1}^{g-2}\frac{\partial F^N_{g_1}}{\partial I_{1}} d_{X}\left(F^N_{g-1-g_1}\right)\label{e}\\
&  +\frac{1}{60(1-I_1)^3} \left(d_{X}\left(F^N_{g-2}\right)+2(1-I_1)d_{X}\left(\frac{\partial F^N_{g-2}}{\partial I_{1}}\right) \right),\nonumber\\
\frac{\partial F^N_{g}}{\partial I_{p+4}}=&\frac{2N(p+3)!}{(p+5)!(1-I_1)}\frac{\partial F^N_{g-1}}{\partial I_{p+2}}+\sum_{n=2}^{2g-4-p}\frac{(p+n+4)!}{n!(p+5)!}\frac{I_n}{1-I_1}\frac{\partial F^N_{g}}{\partial I_{p+n+3}}\label{f}\\
&+\sum_{g_1=1}^{g-2}\sum_{k=2}^{p+1}\frac{k!(p+3-k)!}{(p+5)!(1-I_1)}\frac{\partial F^N_{g_1}}{\partial I_{k-1}}\frac{\partial F^N_{g-1-g_1}}{\partial I_{p+2-k}} +\frac{2}{(1-I_1)^2}\frac{(p+2)!}{(p+5)!}\sum_{g_1=1}^{g-2}\frac{\partial F^N_{g_1}}{\partial I_{p+1}} d_{X}\left(F^N_{g-1-g_1}\right)\nonumber\\
& +\sum_{k=2}^{p+1}\frac{k!(p+3-k)!}{(p+5)!(1-I_1)}\frac{\partial^2 F^N_{g-2}}{\partial I_{k-1}\partial I_{p+2-k}} +\frac{1}{(1-I_1)^2}\frac{(p+2)!}{(p+5)!} \left(\frac{\partial F^N_{g-2}}{\partial I_{p}}+2d_{X}\left(\frac{\partial F^N_{g-2}}{\partial I_{p+1}}\right) \right),\nonumber
\end{align}
where $p=1,2,\cdots,2g-5$.
\end{Theorem}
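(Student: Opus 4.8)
The plan is to follow the proof of Theorem~\ref{1Dgravity} verbatim, but starting from the $I$-coordinate expressions for the operators $L^N_m$ established in the previous theorem; the only genuinely new feature is the $g_s^2$ second-order part.

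First I would rewrite the constraints $L^N_m Z^N=0$ ($m\geq1$) as equations for $F^N=\log Z^N$. Since each operator occurring in $L^N_m$ is built from the commuting derivations $\partial/\partial I_l$ and from $d_X$, which is itself a first-order derivation, one has $e^{-F}\,\partial_I\partial_J\,e^F=\partial_I\partial_J F+\partial_I F\,\partial_J F$, $e^{-F}\,d_X^2\,e^F=d_X^2F+(d_XF)^2$, and $e^{-F}\,d_X\partial_{I_j}\,e^F=d_X\partial_{I_j}F+(d_XF)\,\partial_{I_j}F$. Substituting the formula for $L^N_m$ then turns each constraint into a polynomial identity in $I_0$ whose coefficients are first- and second-order differential expressions in $F^N$ together with $d_XF^N$.

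Next I would insert the thin genus expansion $F^N=\sum_h g_s^{h-1}F^N_h$ and extract the coefficient of $g_s^{g-1}$. The $2Ng_sm!$-block then produces $N$ times first derivatives of $F^N_{g-1}$, the $g_s$-free first-order block produces first derivatives of $F^N_g$, and the $g_s^2$-block produces the single-genus second derivatives of $F^N_{g-2}$ together with the convolutions $\sum_{h_1+h_2=g-1}(\,\cdot\,)$ of first derivatives. From the convolutions I would split off the terms in which one factor has genus $0$: from \eqref{eqn:F0N} (together with the expansion \eqref{Itot} of the $t_n$ in the $I_k$) one computes $d_XF^N_0=N(1-I_1)I_0$ and $\partial F^N_0/\partial I_l=\frac{(-1)^l}{(l+1)!}NI_0^{l+1}$ for $l\geq1$, and these recombine with the $2Ng_sm!$-block to produce precisely the $N$-prefactored terms appearing in the equations for $\partial F^N_g/\partial I_k$, $k\geq2$; the $I_0^0$-component $N(1-I_1)$ of $d_X^2F^N_{g-2}$ in the case $g=2$ then produces the term $\delta_{g,2}\frac{N}{4!(1-I_1)^2}$ in the equation for $\partial F^N_g/\partial I_3$. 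At this point I would invoke the two facts already at hand: for $g\geq1$ the series $F^N_g$ is independent of $I_0$, so $d_XF^N_g=\sum_{l\geq1}I_{l+1}\,\partial F^N_g/\partial I_l$ and all $\partial/\partial I_0$ terms drop; and for $g\geq2$ it depends only on $I_1,\dots,I_{2g-1}$ by \eqref{eqn:FgN}, which truncates every sum to finitely many terms. The $g_s^{g-1}$-identity is then a genuine polynomial in $I_0$, and comparing the coefficients of $I_0^{m+1},I_0^{m},I_0^{m-1},I_0^{m-2},I_0^{m-3},\dots,I_0^{m-3-p},\dots$ for $m$ large gives, in this order: the trivial relation $\partial F^N_g/\partial I_0=0$; equation \eqref{d}; the equation for $\partial F^N_g/\partial I_2$; the equation for $\partial F^N_g/\partial I_3$ (where the $d_X^2$-term and the first convolution first enter); the equation for $\partial F^N_g/\partial I_4$; and finally the generic equation \eqref{f} for $\partial F^N_g/\partial I_{p+4}$, $p\geq1$. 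The binomial coefficients are simplified throughout by the identity $\sum_{k=a}^{m-b}\frac{k!(m-k)!}{(k-a)!(m-k-b)!}=\binom{m+1}{a+b+1}a!b!$ already used in the previous theorem.

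I expect the main obstacle to be purely bookkeeping: keeping track of which of the many $g_s^2$ second-order summands, and which genus-$0$ factors of the convolution, contribute to a given power of $I_0$, all while carrying the correct genus shifts ($g-1$ for the $N$-terms, sum $g-1$ for the convolution, and $g-2$ for the honest second derivatives); and handling the boundary values $p=1,2,3$, where the generic pattern \eqref{f} degenerates into the separately displayed equations for $\partial F^N_g/\partial I_3$ and $\partial F^N_g/\partial I_4$. Each of the listed equations closes with finitely many terms because $F^N_g$ involves only $I_1,\dots,I_{2g-1}$, and no idea beyond the $I$-coordinate form of the $L^N_m$ is needed.
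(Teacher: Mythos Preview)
Your plan is correct and will work; it mirrors the proof of Theorem~\ref{1Dgravity} step for step, and your identification of the genus-$0$ contributions (in particular $d_XF^N_0=N(1-I_1)I_0$ and the resulting $\delta_{g,2}$-term) is right.

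The paper, however, organizes the computation differently. Rather than keeping $I_0$ free, substituting the full $I$-coordinate expression of $L^N_m$, and then comparing coefficients of $I_0^{m+1-i}$ for $m$ large, the paper first passes to the free-energy form of the constraint in the original $t$-variables, splits off $F^N_0$ explicitly, and then \emph{sets $I_0=0$}. Under this specialization $t_0=0$, $t_k=I_k$ for $k\geq1$, and $\partial/\partial t_k=\partial/\partial I_k$ for $k\geq1$ while $\partial/\partial t_0=\frac{1}{1-I_1}d_X$; the first and second $t$-derivatives of $F^N_0$ are computed directly at $I_0=0$ (only $\partial^2F^N_0/\partial t_0^2|_{I_0=0}=\frac{N}{1-I_1}$ survives), and one obtains a separate equation for each value of $m=1,2,3,\ldots$, which after relabeling give exactly \eqref{d}--\eqref{f}. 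Your route and the paper's are dual: you hold $m$ large and read off powers of $I_0$, while the paper kills $I_0$ and reads off the index $m$. The paper's choice sidesteps the lengthy $I$-coordinate formula for $L^N_m$ and avoids having to disentangle the many $I_0$-powers inside the $g_s^2$-block, at the cost of recomputing the $F^N_0$-derivatives at $I_0=0$; your choice reuses the previous theorem verbatim but carries more bookkeeping.
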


\begin{proof}
The first equation has been proved in \cite{Zhou3}. For the rest equations,
we first rewrite Virasoro constraints for partition function into Virasoro constraints for free energy:
\begin{align}
0=(Z^{N})^{-1}L^{N}_{m}Z^{N}=&2Ng_sm!\frac{\partial F^{N}}{\partial t_{m-1}}+\sum_{n\geqslant 0}\frac{(m+n+1)!}{n!}(t_n-\delta_{n,1})\frac{\partial F^{N}}{\partial t_{m+n}}  \\
& +g_s^2\sum_{k=1}^{m-1}k!(m-k)!\frac{\partial F^{N}}{\partial t_{k-1}}\frac{\partial F^{N}}{\partial t_{m-k-1}}+g_s^2\sum_{k=1}^{m-1}k!(m-k)!\frac{\partial^2 F^{N}}{\partial t_{k-1}\partial t_{m-k-1}},
\nonumber
\end{align}
Let
\be
\tilde{F}^{N}=\sum_{g\geqslant1}g_{s}^{g-1}F^{N}_g,
\ee
then $\tilde{F}^{N}$ does not depend on $I_0$, and
\begin{align}
0=&2Nm!\frac{\partial F_0^{N}}{\partial t_{m-1}}+
\frac{1}{g_s}\sum_{n\geqslant 0}\frac{(m+n+1)!}{n!}(t_n-\delta_{n,1})
\frac{\partial F_0^{N}}{\partial t_{m+n}}
 +\sum_{k=1}^{m-1}k!(m-k)!\frac{\partial F_0^{N}}{\partial t_{k-1}}
\frac{\partial F_0^{N}}{\partial t_{m-k-1}}\label{eqn:VirasoroF}\\
&+2g_s\sum_{k=1}^{m-1}k!(m-k)!\frac{\partial F_0^{N}}{\partial t_{k-1}}
\frac{\partial \tilde{F}^{N}}{\partial t_{m-k-1}}
+g_s\sum_{k=1}^{m-1}k!(m-k)!
\frac{\partial^2 F_0^{N}}{\partial t_{k-1}\partial t_{m-k-1}},\nonumber\\
&+2Ng_sm!\frac{\partial \tilde{F}^{N}}{\partial t_{m-1}}+
\sum_{n\geqslant 0}\frac{(m+n+1)!}{n!}(t_n-\delta_{n,1})
\frac{\partial \tilde{F}^{N}}{\partial t_{m+n}}  \nonumber\\
& +g_s^2\sum_{k=1}^{m-1}k!(m-k)!\frac{\partial \tilde{F}^{N}}{\partial t_{k-1}}
\frac{\partial \tilde{F}^{N}}{\partial t_{m-k-1}}
+g_s^2\sum_{k=1}^{m-1}k!(m-k)!\frac{\partial^2 \tilde{F}^{N}}{\partial t_{k-1}\partial t_{m-k-1}},
\nonumber
\end{align}
Now we let $I_0=0$, under this condition, one has:
\begin{align*}
t_0=&0, \\
 t_k=&I_k, \ \ \ k>0
\end{align*}
and
\begin{align}
\frac{\partial}{\partial t_0}&=\frac{1}{1-I_1}\frac{\partial}{\partial I_0}+\sum_{l\geqslant1}\frac{I_{l+1}}{1-I_1}\frac{\partial}{\partial I_l},\\
\frac{\partial}{\partial t_k}&=\frac{\partial}{\partial I_k}, \ \ \ k>0
\end{align}
By \eqref{eqn:F0N}
\begin{align}
\frac{\partial F_0^N}{\partial I_0}&=Nt_0,\\
\frac{\partial F_0^N}{\partial I_k}&=N\frac{(-1)^kI_{0}^{k+1}}{(k+1)!}, \ \ k\geqslant1.
\end{align}
So one has:
\be
\frac{\partial F_0^N}{\partial I_0}\bigg|_{I_0=0}=\frac{\partial F_0^N}{\partial I_k}\bigg|_{I_0=0}=0,
\ee
and therefore,
\be
\frac{\partial F_0^N}{\partial t_0}\bigg|_{I_0=0}=\frac{\partial F_0^N}{\partial t_k}\bigg|_{I_0=0}=0.
\ee
Moreover,
\begin{align}
\frac{\partial^2 F_0^N}{\partial I_0^2}=&N(1-t_1),\\
\frac{\partial^2 F_0^N}{\partial I_0\partial I_k}=&N\frac{(-1)^kI_{0}^{k}}{k!}, \ \ k\geqslant1\\
\frac{\partial^2 F_0^N}{\partial I_k\partial I_l}=&0, \ \ k,l\geqslant1.
\end{align}
When restrict $I_0=0$, these equations give
\begin{align}
\frac{\partial^2 F_0^N}{\partial I_0^2}\bigg|_{I_0=0}=&N(1-I_1),\\
\frac{\partial^2 F_0^N}{\partial I_0\partial I_k}\bigg|_{I_0=0}=&
\frac{\partial^2 F_0^N}{\partial I_k\partial I_l}\bigg|_{I_0=0}=0, \ \ k,l\geqslant1.
\end{align}
Therefore
\begin{align}
\frac{\partial^2 F_0^N}{\partial t_0^2}\bigg|_{I_0=0}=&\left(\frac{1}{1-I_1}\frac{\partial}{\partial I_0}+\sum_{l\geqslant1}\frac{I_{l+1}}{1-I_1}\frac{\partial}{\partial I_l}\right)^2(F_0^N)\bigg|_{I_0=0}\\
=&\frac{1}{(1-I_1)^2}\left(\frac{\partial}{\partial I_0}+\sum_{l\geqslant1}I_{l+1}\frac{\partial}{\partial I_l}\right)^2(F_0^N)\bigg|_{I_0=0} +\frac{I_2}{(1-I_1)^3}\left(\frac{\partial}{\partial I_0}+\sum_{l\geqslant1}I_{l+1}\frac{\partial}{\partial I_l}\right)(F_0^N)\bigg|_{I_0=0}\nonumber\\
=&\frac{1}{(1-I_1)^2}\frac{\partial^2 F_0^N}{\partial I_0^2}\bigg|_{I_0=0}\nonumber\\
=&\frac{N}{1-I_1},\nonumber\\
\frac{\partial^2 F_0^N}{\partial t_0\partial t_k}\bigg|_{I_0=0}=&\left(\frac{1}{1-I_1}\frac{\partial}{\partial I_0}+\sum_{l\geqslant1}\frac{I_{l+1}}{1-I_1}\frac{\partial}{\partial I_l}\right)\left(\frac{\partial F_0^N}{\partial I_k}\right)\bigg|_{I_0=0}\\
=&\left(\frac{1}{1-I_1}\frac{\partial}{\partial I_0}+\sum_{l\geqslant1}\frac{I_{l+1}}{1-I_1}\frac{\partial}{\partial I_l}\right)\left(N\frac{(-1)^kI_{0}^{k+1}}{(k+1)!}\right)\bigg|_{I_0=0}\nonumber\\
=&0 \ \ k\geqslant1,\nonumber\\
\frac{\partial^2 F_0^N}{\partial t_k\partial t_l}\bigg|_{I_0=0}=&\frac{\partial^2 F_0^N}{\partial I_k\partial I_l}\bigg|_{I_0=0}=0, \ \ k,l\geqslant1.
\end{align}
Hence \eqref{eqn:VirasoroF} becomes
\begin{align}
0=&\delta_{m,2}\frac{Ng_s}{1-I_1}+2Ng_sm!\frac{\partial \tilde{F}^{N}}{\partial t_{m-1}}+
\sum_{n\geqslant 1}\frac{(m+n+1)!}{n!}(t_n-\delta_{n,1})
\frac{\partial \tilde{F}^{N}}{\partial t_{m+n}}  \\
& +g_s^2\sum_{k=1}^{m-1}k!(m-k)!\frac{\partial \tilde{F}^{N}}{\partial t_{k-1}}
\frac{\partial \tilde{F}^{N}}{\partial t_{m-k-1}}
+g_s^2\sum_{k=1}^{m-1}k!(m-k)!\frac{\partial^2 \tilde{F}^{N}}{\partial t_{k-1}\partial t_{m-k-1}},\nonumber
\end{align}
or
\begin{align}
0=&\delta_{m,2}\delta_{g,1}\frac{N}{1-I_1}+2Nm!\frac{\partial F_g^{N}}{\partial t_{m-1}}+
\sum_{n\geqslant 0}\frac{(m+n+1)!}{n!}(t_n-\delta_{n,1})
\frac{\partial F_{g+1}^{N}}{\partial t_{m+n}}\\
& +\sum_{g_1+g_2=g}\sum_{k=1}^{m-1}k!(m-k)!\frac{\partial F_{g_1}^{N}}{\partial t_{k-1}}
\frac{\partial F_{g_2}^{N}}{\partial t_{m-k-1}}
+\sum_{k=1}^{m-1}k!(m-k)!\frac{\partial^2 F_{g-1}^{N}}{\partial t_{k-1}\partial t_{m-k-1}},  \nonumber
\end{align}
with
\begin{align}
\frac{\partial}{\partial t_0}&=\frac{1}{1-I_1}d_{X},\\
\frac{\partial}{\partial t_k}&=\frac{\partial}{\partial I_k}, \ \ \ k>0
\end{align}
for $m=1$:
\be
\begin{split}
0=&2N\sum_{l\geqslant1}\frac{I_{l+1}}{1-I_1}\frac{\partial F^N_{g}}{\partial I_l} +\sum_{n\geqslant 1}\frac{(n+2)!}{n!}(I_n-\delta_{n,1})\frac{\partial F^N_{g+1}}{\partial I_{n+1}},
\end{split}
\ee
for $m=2$:
\be
\begin{split}
0=&4N\frac{\partial F^N_{g}}{\partial I_1} +\sum_{n\geqslant 1}\frac{(n+3)!}{n!}(I_n-\delta_{n,1})\frac{\partial F^N_{g+1}}{\partial I_{2+n}}+\sum_{g_1=1}^{g-1}\sum_{l\geqslant1}\frac{I_{l+1}}{1-I_1}\frac{\partial F^N_{g_1}}{\partial I_l} \sum_{k\geqslant1}\frac{I_{k+1}}{1-I_1}\frac{\partial F^N_{g-g_1}}{\partial I_k} \\
&+\left(\sum_{l\geqslant1}\frac{I_{l+1}}{1-I_1}\frac{\partial}{\partial I_l}\right)^2\left(F_{g-1}^N\right)+\delta_{g,1}\frac{N}{1-I_1},
\end{split}
\ee
for $m=3$:
\be
\begin{split}
0=&12N\frac{\partial F^N_{g}}{\partial I_{2}}+\sum_{n\geqslant 0}\frac{(n+4)!}{n!}(I_n-\delta_{n,1})\frac{\partial F^N_{g+1}}{\partial I_{3+n}} +4\sum_{g_1=1}^{g-1}\frac{\partial F^N_{g_1}}{\partial I_{1}}\sum_{l\geqslant1}\frac{I_{l+1}}{1-I_1}\frac{\partial F^N_{g-g_1}}{\partial I_l} \\
&+4\sum_{l\geqslant1}\frac{I_{l+1}}{1-I_1}\frac{\partial^2 F^N_{g-1}}{\partial I_l\partial I_{1}} +2\sum_{l\geqslant1}\frac{I_{l+1}}{(1-I_1)^2}\frac{\partial F^N_{g-1}}{\partial I_l},
\end{split}
\ee
for $m\geqslant4$:
\be
\begin{split}
0=&2Nm!\frac{\partial F^N_{g}}{\partial I_{m-1}}+\sum_{n\geqslant 0}\frac{(m+n+1)!}{n!}(I_n-\delta_{n,1})\frac{\partial F^N_{g+1}}{\partial I_{m+n}}  +\sum_{g_1=1}^{g-1}\sum_{k=2}^{m-2}k!\frac{\partial F^N_{g_1}}{\partial I_{k-1}}(m-k)!\frac{\partial F^N_{g-g_1}}{\partial I_{m-k-1}} \\
&+2\sum_{g_1=1}^{g-1}(m-1)!\frac{\partial F^N_{g_1}}{\partial I_{m-2}}\sum_{l\geqslant1}\frac{I_{l+1}}{1-I_1}\frac{\partial F^N_{g-g_1}}{\partial I_l} +\sum_{k=2}^{m-2}k!(m-k)!\frac{\partial^2 F^N_{g-1}}{\partial I_{k-1}\partial I_{m-k-1}} \\
&+2(m-1)!\sum_{l\geqslant1}\frac{I_{l+1}}{1-I_1}\frac{\partial^2 F^N_{g-1}}{\partial I_l\partial I_{m-2}} +(m-1)!\frac{1}{1-I_1}\frac{\partial F^N_{g-1}}{\partial I_{m-3}}.
\end{split}
\ee
The proof is completed.
\end{proof}

Now we explain how to use Theorem \ref{HMM} to calculate free energies of higher genus.
By \eqref{eqn:FgN}, $F^{N}_g$ depends only on $I_1, \dots, I_{2g-1}$ for $g\geqslant 2$.
By \eqref{f}, we have:

\begin{align}
\frac{\partial F^N_{g}}{\partial I_{2g-1}}=&\frac{2N(2g-2)!}{(2g)!(1-I_1)}\frac{\partial F^N_{g-1}}{\partial I_{2g-3}}+\frac{1}{(1-I_1)^2}\frac{(2g-3)!}{(2g)!} \frac{\partial F^N_{g-2}}{\partial I_{2g-5}} +\sum_{g_1=1}^{g-2}\sum_{k=2}^{2g-4}\frac{k!(2g-2-k)!}{(2g)!(1-I_1)}\frac{\partial F^N_{g_1}}{\partial I_{k-1}}\frac{\partial F^N_{g-1-g_1}}{\partial I_{2g-3-k}}\label{eqn:FgNstart}\\
& +\frac{2(2g-3)!}{(2g)!(1-I_1)^2}\sum_{g_1=1}^{g-2}\frac{\partial F^N_{g_1}}{\partial I_{2g-4}} d_{X}\left(F^N_{g-1-g_1}\right) +\sum_{k=2}^{2g-4}\frac{k!(2g-2-k)!}{(2g)!(1-I_1)}\frac{\partial^2 F^N_{g-2}}{\partial I_{k-1}\partial I_{2g-3-k}},\nonumber\\
\frac{\partial F^N_{g}}{\partial I_{2g-2}}=&\frac{2N(2g-3)!}{(2g-1)!(1-I_1)}\frac{\partial F^N_{g-1}}{\partial I_{2g-4}}+\frac{(2g)!}{2!(2g-1)!}\frac{I_2}{1-I_1}\frac{\partial F^N_{g}}{\partial I_{2g-1}}\\
&+\sum_{g_1=1}^{g-2}\sum_{k=2}^{2g-5}\frac{k!(2g-3-k)!}{(2g-1)!(1-I_1)}\frac{\partial F^N_{g_1}}{\partial I_{k-1}}\frac{\partial F^N_{g-1-g_1}}{\partial I_{2g-4-k}} +\frac{2}{(1-I_1)^2}\frac{(2g-4)!}{(2g-1)!}\frac{\partial F^N_{g-2}}{\partial I_{2g-5}} d_{X}\left(F^N_{1}\right)\nonumber\\
& +\sum_{k=2}^{2g-5}\frac{k!(2g-3-k)!}{(2g-1)!(1-I_1)}\frac{\partial^2 F^N_{g-2}}{\partial I_{k-1}\partial I_{2g-4-k}} +\frac{1}{(1-I_1)^2}\frac{(2g-4)!}{(2g-1)!} \left(\frac{\partial F^N_{g-2}}{\partial I_{2g-6}}+2d_{X}\left(\frac{\partial F^N_{g-2}}{\partial I_{2g-5}}\right) \right),\nonumber\\
\vdots& \nonumber\\
\frac{\partial F^N_{g}}{\partial I_{5}}=&\frac{N}{15(1-I_1)}\frac{\partial F^N_{g-1}}{\partial I_{3}}+\sum_{n=2}^{2g-5}\frac{(n+5)!}{6!n!}\frac{I_n}{1-I_1}\frac{\partial F^N_{g}}{\partial I_{n+4}}\\
&+\sum_{g_1=1}^{g-2}\frac{1}{180(1-I_1)}\frac{\partial F^N_{g_1}}{\partial I_{1}}\frac{\partial F^N_{g-1-g_1}}{\partial I_{1}} +\frac{1}{60(1-I_1)^2}\sum_{g_1=1}^{g-2}\frac{\partial F^N_{g_1}}{\partial I_{2}} d_{X}\left(F^N_{g-1-g_1}\right)\nonumber\\
& +\frac{1}{180(1-I_1)}\frac{\partial^2 F^N_{g-2}}{\partial I_{1}\partial I_{1}} +\frac{1}{120(1-I_1)^2} \left(\frac{\partial F^N_{g-2}}{\partial I_{1}}+2d_{X}\left(\frac{\partial F^N_{g-2}}{\partial I_{2}}\right) \right).\nonumber
\end{align}
For $\frac{\partial F^{1D}_{g}}{\partial I_{i}}, (i=4,3,2,1)$,
we have by \eqref{d}-\eqref{e}:
\begin{align}
\frac{\partial F^N_{g}}{\partial I_{4}} =&\frac{N}{10(1-I_1)}\frac{\partial F^N_{g-1}}{\partial I_{2}}+\sum_{n=2}^{2g-4}\frac{(n+4)!}{5!n!}\frac{I_n}{1-I_1}\frac{\partial F^N_{g}}{\partial I_{n+3}}+\frac{1}{30(1-I_1)^2}\sum_{g_1=1}^{g-2}\frac{\partial F^N_{g_1}}{\partial I_{1}} d_{X}\left(F^N_{g-1-g_1}\right)\\
&  +\frac{1}{60(1-I_1)^3} \left(d_{X}\left(F^N_{g-2}\right)+2(1-I_1)d_{X}\left(\frac{\partial F^N_{g-2}}{\partial I_{1}}\right) \right),\nonumber\\
\frac{\partial F^N_{g}}{\partial I_{3}}=& \frac{N}{6(1-I_1)}\frac{\partial F^N_{g-1}}{\partial I_1} +\sum_{n=2}^{2g-3}\frac{(n+3)!}{4!n!}\frac{I_n}{1-I_1}\frac{\partial F^N_{g}}{\partial I_{2+n}}+\frac{1}{4!(1-I_1)^3}\sum_{g_1=1}^{g-2}d_{X}\left(F^N_{g_1}\right) d_{X}\left(F^N_{g-1-g_1}\right)  \\
&+\frac{1}{4!(1-I_1)}\left(\frac{1}{1-I_1}d_{X}\right)^2\left(F_{g-2}^N\right) +\delta_{g,2}\frac{N}{4!(1-I_1)^2},\nonumber\\
\frac{\partial F^N_{g}}{\partial I_{2}}=& \frac{N}{3(1-I_1)}\frac{1}{1-I_1} d_{X}\left(F^N_{g-1}\right) +\sum_{n=2}^{2g-2} \frac{(n+2)!}{3!n!}\frac{I_n}{1-I_1}\frac{\partial F^N_{g}}{\partial I_{n+1}},\\
\frac{\partial F^{N}_{g}}{\partial I_1}=&\frac{1}{2(1-I_1)}\sum_{n=2}^{2g-1}(n+1)I_n\frac{\partial F^{N}_{g}}{\partial I_n}.\label{eqn:FgNend}
\end{align}
By equations \eqref{eqn:FgNstart}-\eqref{eqn:FgNend},
we can solve $\left\{\frac{\pd F^{N}_{g}}{\partial I_k}\right\}_{g\geqslant2,k=2g-1,2g-2,\cdots,1}$ recursively, given the computation for $F^{N}_{k<g}$.
Since $F_{g}^{N}$ is weighted homogeneous of degree $2g-2$ in $I_k$ with $\deg I_k = k-1$,
we have
\begin{equation}
F^{N}_g=\frac{1}{2g-2}\sum_{k=1}^{2g-1}(k-1)I_k\frac{\partial F^{N}_{g}}{\partial I_k}.
\end{equation}

\begin{Example}
Let us compute $F^{N}_{2}$ by the above procedure.
We have
\begin{align}
\frac{\partial F^N_{2}}{\partial I_{3}}=& \frac{N}{6(1-I_1)}\frac{\partial F^N_{1}}{\partial I_1} +\frac{N}{4!(1-I_1)^2}=\frac{N+2N^3}{24(1-I_1)^2},\\
\frac{\partial F^N_{2}}{\partial I_{2}}=& \frac{N}{3(1-I_1)}\frac{1}{1-I_1} d_{X}\left(F^N_{1}\right) + 2\frac{I_2}{1-I_1}\frac{\partial F^N_{2}}{\partial I_{3}}=\frac{(N+4N^3)I_2}{12(1-I_1)^3},\\
\frac{\partial F^{N}_{2}}{\partial I_1}=&\frac{1}{2(1-I_1)}\sum_{l=2}^{3}(l+1)I_l\frac{\partial F^{N}_{2}}{\partial I_l}=\frac{(N+4N^3)I_2^2}{8(1-I_1)^4}+\frac{(N+2N^3)I_3}{12(1-I_1)^3},
\end{align}
therefore
\begin{align}
F_{2}^{N}=\frac{1}{2}\left(I_2\frac{\partial F^N_{2}}{\partial I_{2}}+2I_3\frac{\partial F^N_{2}}{\partial I_{3}}\right) =\frac{(N+4N^3)I_2^2}{24(1-I_1)^3}+\frac{(N+2N^3)I_3}{24(1-I_1)^2}.
\end{align}
Similarly, we have with the help of a computer program:
\begin{align}
F_{3}^N=&\left(216N^4+189N^2\right)\hat{I}_2^4+\left(216N^4+234N^2\right)\hat{I}_2^2\hat{I}_3 +\left(18N^4+30N^2\right)\hat{I}_3^2 \\
&+\left(45N^4+60N^2\right)\hat{I}_2\hat{I}_4+\left(5N^4+10N^2\right)\hat{I}_5 ,\nonumber\\
F_{4}^N=&\left(13608N^5+26892N^3+\frac{8505}{2}N\right)\hat{I}_2^6+ \left(22032 N^5+49248 N^3+8505N\right)\hat{I}_2^4\hat{I}_3\\
&+\left(7776N^5+20304N^3+3960 N\right)\hat{I}_2^2\hat{I}_3^2+ \left(288N^5+1056 N^3+240N\right)\hat{I}_3^3\nonumber\\
&+\left(5400N^5+13770 N^3+2565 N\right)\hat{I}_2^3\hat{I}_4+ \left(2160N^5+6480  N^3+1440N\right)\hat{I}_2\hat{I}_3\hat{I}_4\nonumber\\
&+\left(90N^5+300 N^3+\frac{165}{2} N\right)\hat{I}_4^2+ \left(1080N^5+3330N^3+675N\right)\hat{I}_2^2\hat{I}_5\nonumber\\
&+\left(144N^5+600 N^3+156N\right)\hat{I}_3\hat{I}_5+ \left(168 N^5+630N^3+147N\right)
\hat{I}_2\hat{I}_6\nonumber\\
&+\left(14N^5+70N^3+21N\right)\hat{I}_7,\nonumber
\end{align}
where
\be
\hat{I}_k:=\frac{1}{(k+1)!}\frac{I_k}{(1-I_1)^{(k+1)/2}}
\ee
\end{Example}

\subsection{Special deformation in thin genus expansion of the Hermitian one-matrix models in $I$-coordinates}
In \cite{Zhou4}, the second named author studied the special deformation of the Hermitian one-matrix models which is defined by:
\begin{equation}
y^{N}=\frac{1}{\sqrt{2}}\sum_{n\geqslant0}\frac{t_n-\delta_{n,1}}{n!}z^n+\frac{\sqrt{2}N}{z} +\sqrt{2}\sum_{n\geqslant1}\frac{n!}{z^{n+1}}\frac{\partial F_0^{N}}{\partial t_{n-1}}.
\end{equation}
Now we rewrite it in $I$-coordinates, we have:
\begin{Theorem}\label{thm:thinspecialdeformationN}
In $I$-coordinates, the special deformation of the Hermitian one-matrix models is as follows:
\begin{align}
y^{N}=&\frac{\sqrt{2}N}{z-I_0}+\frac{1}{\sqrt{2}}\sum_{n\geqslant1}\frac{I_n-\delta_{n,1}}{n!}(z-I_0)^n.
\end{align}
\end{Theorem}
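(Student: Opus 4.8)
\textit{Proof proposal.} The plan is to split $y^{N}$ into its polynomial part in $z$ and its part singular at $z=0$ (a power series in $z^{-1}$ with no constant term), and to match each piece against the corresponding piece of the asserted right-hand side.

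The polynomial part $\frac{1}{\sqrt 2}\sum_{n\geqslant 0}\frac{t_n-\delta_{n,1}}{n!}z^n$ is nothing but $\frac{1}{\sqrt 2}S'(z)$, where $S$ is the action \eqref{def:Action}. Here I would use the renormalization identity \eqref{eqn:Action}, which is the Taylor expansion of $S$ at $x=I_0$; differentiating it term by term gives $S'(z)=\sum_{m\geqslant 1}(I_m-\delta_{m,1})\frac{(z-I_0)^m}{m!}$, so the polynomial part of $y^{N}$ equals $\frac{1}{\sqrt 2}\sum_{m\geqslant 1}\frac{I_m-\delta_{m,1}}{m!}(z-I_0)^m$, which is exactly the polynomial part of the claimed formula.

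For the singular part $\frac{\sqrt 2 N}{z}+\sqrt 2\sum_{n\geqslant 1}\frac{n!}{z^{n+1}}\frac{\partial F_0^{N}}{\partial t_{n-1}}$, the key input is the genus-zero one-point function $\frac{\partial F_0^{N}}{\partial t_{n-1}}=\frac{N}{n!}I_0^{n}$ for $n\geqslant 1$. To obtain it I would invoke Proposition \ref{Prop:RenormalHMM}: the shifted action appearing there carries no linear term, so its critical point is at the origin and the renormalized integral contributes nothing to the thin genus-zero part; hence $F_0^{N}=N\,S(x_\infty)$, with $S(x_\infty)=\sum_{k\geqslant 0}\frac{(-1)^k}{(k+1)!}(I_k+\delta_{k,1})I_{0}^{k+1}$ the prefactor of Proposition \ref{Prop:RenormalHMM} (in agreement with \eqref{eqn:F0N}). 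Differentiating $F_0^{N}=N\,S(x_\infty)$ with respect to $t_{n-1}$ and using $\frac{\partial S}{\partial x}(x_\infty)=0$ together with $\frac{\partial S}{\partial t_{n-1}}=\frac{x^{n}}{n!}$ yields $\frac{\partial F_0^{N}}{\partial t_{n-1}}=\frac{N}{n!}x_\infty^{n}=\frac{N}{n!}I_0^{n}$. Substituting this in and summing the geometric series in $z^{-1}$,
\[
\frac{\sqrt 2 N}{z}+\sqrt 2\sum_{n\geqslant 1}\frac{n!}{z^{n+1}}\cdot\frac{N I_0^{n}}{n!}=\sqrt 2 N\sum_{k\geqslant 0}\frac{I_0^{k}}{z^{k+1}}=\frac{\sqrt 2 N}{z-I_0},
\]
the singular part of the asserted expression. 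Adding the two pieces proves the theorem. (Equivalently, this follows by specializing \cite[Theorem 2.1]{Zhou4}, of which Theorem \ref{thm:specialdeformation1D} is the $N=1$ case.)

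The computations are all routine; the only point needing care is the identification $F_0^{N}=N\,S(x_\infty)$ and hence the closed form of $\frac{\partial F_0^{N}}{\partial t_{n-1}}$, i.e. verifying that the renormalized matrix integral of Proposition \ref{Prop:RenormalHMM} has vanishing thin genus-zero free energy. This is precisely where one uses that $I_0=x_\infty$ is by construction the critical point of $S$, so that the shifted action in Proposition \ref{Prop:RenormalHMM} has no linear term.
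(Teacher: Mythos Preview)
Your argument is correct. The paper's own proof is a one-line citation: ``This is just \cite[Theorem 2.1]{Zhou4},'' which you also note at the end of your proposal. What you have written is essentially a self-contained unpacking of that citation: the polynomial part is handled by the Taylor expansion of $S'$ at $x_\infty=I_0$ (equivalently by \eqref{Itot}), and the singular part by the identity $\partial F_0^{N}/\partial t_{n-1}=N I_0^{n}/n!$, which follows from $F_0^{N}=N\,S(x_\infty)$ together with $S'(x_\infty)=0$. So you and the paper take the same route; your version simply supplies the details the paper delegates to \cite{Zhou4}.
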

\begin{proof}
This is just \cite[Theorem 2.1]{Zhou4}.
\end{proof}
\subsection{Special deformation in fat genus expansion of the Hermitian one-matrix models in $I$-coordinates}
In \cite{Zhou4}, the second named author also studied another special deformation of the Hermitian one-matrix models based on
the fat genus expansion, which is defined by
\be\label{fatspecialdeformation}
y^{t}=\frac{1}{\sqrt{2}}\sum_{n\geqslant0}\frac{t_n-\delta_{n,1}}{n!}z^n+\frac{\sqrt{2}t}{z} +\sqrt{2}\sum_{n\geqslant1}\frac{n!}{z^{n+1}}\frac{\partial F_0^{t}}{\partial t_{n-1}}.
\ee
We can also rewrite this special deformation in $I$-coordinates, and we get:
\begin{Theorem}\label{thm:fatspecialdeformationN}
In $I$-coordinates, the special deformation \eqref{fatspecialdeformation} can be written as:
\begin{align}
y^{t}=&\frac{\sqrt{2}t}{z-I_0}+\frac{1}{\sqrt{2}}\sum_{n\geqslant1}\frac{I_n-\delta_{n,1}}{n!}(z-I_0)^n +\frac{1}{(z-I_0)^2}
\frac{1}{1-I_1} d_X(\tilde{F}^{t}_{0}) +\sum_{l\geqslant1}\frac{(l+1)!}{(z-I_0)^{l+2}}\frac{\partial \tilde{F}^{t}_{0}}{\partial I_l}.
\end{align}
Here $\tilde{F}^{t}_{0}$ is defined as follows:
\be
\tilde{F^{t}_{0}}=F^{t}_{0}-F^{t}_{0,0},
\ee
where $F_{0}^t$ is free energy of genus $0$ in fat genus expansion and $F^{t}_{0,0}$ is defined as follows:
\be
F^{t}_{0,0}:=t\int t_0 d I_0=tF^{1D}_0.
\ee
\end{Theorem}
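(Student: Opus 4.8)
The plan is to reduce \eqref{fatspecialdeformation} to the already-proved 1D special deformation (Theorem \ref{thm:specialdeformation1D}), plus one correction term that arises because — unlike the thin genus-zero free energy $F_0^N=NF_0^{1D}$ — the planar free energy $F_0^t$ is \emph{not} linear in its parameter $t$.

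First I would treat the ``action part'' $\frac{1}{\sqrt2}\sum_{n\geqslant0}\frac{t_n-\delta_{n,1}}{n!}z^n$. This equals $\frac{1}{\sqrt2}S'(z)$, and differentiating the expansion \eqref{eqn:Action} of $S$ about its critical point $x=I_0$ gives $S'(z)=\sum_{n\geqslant1}(I_n-\delta_{n,1})\frac{(z-I_0)^n}{n!}$, which is precisely the second summand of the target. Next I would pin down $F_0^t$: extracting the coefficient of $g_s^{-2}$ in the string constraint $L_{-1}^tZ^N=0$ written in $I$-coordinates \eqref{tIstring} gives $\frac{\partial F_0^t}{\partial I_0}=t\sum_{n\geqslant0}\frac{(-1)^nI_0^n}{n!}I_n=tt_0$; integrating in $I_0$ and comparing with \eqref{eqn:F0N} shows $F_0^t=tF_0^{1D}+\tilde F_0^t$ with $\tilde F_0^t$ independent of $I_0$, which is exactly the decomposition encoded by the paper's $F^t_{0,0}=tF^{1D}_0$.

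Substituting this into \eqref{fatspecialdeformation}, the non-action part of $y^t$ becomes $\bigl(\frac{\sqrt2 t}{z}+\sqrt2\,t\sum_{n\geqslant1}\frac{n!}{z^{n+1}}\frac{\partial F_0^{1D}}{\partial t_{n-1}}\bigr)+\sqrt2\sum_{n\geqslant1}\frac{n!}{z^{n+1}}\frac{\partial\tilde F_0^t}{\partial t_{n-1}}$. The first bracket equals $\frac{\sqrt2 t}{z-I_0}$: by Theorem \ref{thm:specialdeformation1D} together with the first step, the $t$- and $I$-coordinate expressions of $y^{1D}$ share the same action part, so their remaining tails agree, i.e. $\frac{\sqrt2}{z}+\sqrt2\sum_{n\geqslant1}\frac{n!}{z^{n+1}}\frac{\partial F_0^{1D}}{\partial t_{n-1}}=\frac{\sqrt2}{z-I_0}$, and multiplying by $t$ gives the claim. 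For the second piece I would insert the change-of-frame formula \eqref{vItot}, $\frac{\partial}{\partial t_{n-1}}=\frac{I_0^{n-1}}{(n-1)!}\frac{d_X}{1-I_1}+\sum_{1\leqslant l\leqslant n-1}\frac{I_0^{n-1-l}}{(n-1-l)!}\frac{\partial}{\partial I_l}$, and then resum the series in $z$ using $\sum_{k\geqslant1}\frac{kI_0^{k-1}}{z^{k+1}}=\frac{1}{(z-I_0)^2}$ and $\sum_{n\geqslant l+1}\frac{n!}{z^{n+1}}\frac{I_0^{n-1-l}}{(n-1-l)!}=\frac{(l+1)!}{(z-I_0)^{l+2}}$, the latter being the binomial expansion $\sum_{m\geqslant0}\binom{m+l+1}{l+1}w^m=(1-w)^{-(l+2)}$ at $w=I_0/z$. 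This assembles the second piece into $\frac{1}{(z-I_0)^2}\frac{1}{1-I_1}d_X(\tilde F_0^t)+\sum_{l\geqslant1}\frac{(l+1)!}{(z-I_0)^{l+2}}\frac{\partial\tilde F_0^t}{\partial I_l}$, completing the proof.

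The step I expect to be the real obstacle is the identification of $F_0^t$: one must carefully reconcile the ``thin'' expansion $F^N=\sum_g g_s^{g-1}F_g^N$ with the ``fat'' expansion $F^N=\sum_g g_s^{2g-2}F_g^t$ under $t=Ng_s$, read off the planar part, and verify that $F^t_{0,0}=tF^{1D}_0$ accounts for exactly the $I_0$-dependence of $F_0^t$ (so that $\tilde F_0^t$ is a genuine function of $I_1,I_2,\dots$ only). Everything after that is the coordinate change \eqref{vItot} and the two elementary resummations above.
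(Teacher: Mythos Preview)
Your proposal is correct and, in fact, supplies the full computation that the paper omits: the paper's own proof is the single line ``Similar to the proof of Theorem \ref{thm:specialdeformation1D}'', which in turn just cites \cite{Zhou4}. Your decomposition $F_0^t=tF_0^{1D}+\tilde F_0^t$ via the $g_s^{-2}$ part of the string equation \eqref{tIstring}, the reuse of Theorem \ref{thm:specialdeformation1D} to collapse the $tF_0^{1D}$-tail to $\frac{\sqrt2\,t}{z-I_0}$, and the resummations based on \eqref{vItot} are exactly the intended route.

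One small point worth recording: carrying the $\sqrt2$ from \eqref{fatspecialdeformation} through your last resummation gives
\[
\sqrt2\left(\frac{1}{(z-I_0)^2}\frac{1}{1-I_1}d_X(\tilde F_0^t)+\sum_{l\geqslant1}\frac{(l+1)!}{(z-I_0)^{l+2}}\frac{\partial\tilde F_0^t}{\partial I_l}\right),
\]
not the coefficient $1$ printed in the statement; this is a typo in the theorem as stated, not a flaw in your argument. Your identification of the ``real obstacle'' is also apt: the equality $F^t_{0,0}=tF_0^{1D}$ and the $I_0$-independence of $\tilde F_0^t$ are precisely what the Remark following Proposition \ref{Prop:fatg=0} spells out (via Proposition \ref{Prop:RenormalHMM}), so you have correctly located the one nontrivial ingredient.
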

\begin{proof}
Similar to the proof of Theorem \ref{thm:specialdeformation1D}.
\end{proof}
In fact, in \cite[Section 7]{Zhou3}, the second named author has pointed out that one can get $\tilde{F}^t_0$ from the thin genus expansion, and one has the following result:
\begin{Proposition}\label{Prop:fatg=0}
$\tilde{F^{t}_{0}}$ has an expansion in 't Hooft coupling constant $t$:
\be
\tilde{F^{t}_{0}}=\sum_{k\geqslant1}t^{k+1}F^{t}_{0,k},
\ee
with $F^{t}_{0,k}$ weighted homogeneous of degree $2k-2$,
and it satisfies the following equations:
\begin{align}
\frac{\partial F^{t}_{0,k}}{\partial I_1}=&\frac{1}{2(1-I_1)}\sum_{n=2}^{2g-1}(n+1)I_n\frac{\partial F^{t}_{0,k}}{\partial I_n}+\delta_{k,1}\frac{1}{2(1-I_1)},\\
\frac{\partial F^t_{0,k}}{\partial I_{2}}=& \frac{1}{3(1-I_1)}\frac{1}{1-I_1} d_{X}\left(F^t_{0,k-1}\right) +\sum_{n=2}^{2k-2} \frac{(n+2)!}{3!n!}\frac{I_n}{1-I_1}\frac{\partial F^t_{0,k}}{\partial I_{n+1}},\\
\frac{\partial F^t_{0,k}}{\partial I_{3}}=& \frac{1}{6(1-I_1)}\frac{\partial F^t_{0,k-1}}{\partial I_1} +\sum_{n=2}^{2k-3}\frac{(n+3)!}{4!n!}\frac{I_n}{1-I_1}\frac{\partial F^t_{0,k}}{\partial I_{2+n}}+\frac{1}{4!(1-I_1)^3}\sum_{k_1=1}^{k-2}d_{X}\left(F^t_{0,k_1}\right) d_{X}
\left(F^t_{0,k-1-k_1}\right),  \\
\frac{\partial F^t_{0,k}}{\partial I_{4}} =&\frac{1}{10(1-I_1)}\frac{\partial F^t_{0,k-1}}{\partial I_{2}}+\sum_{n=2}^{2k-4}\frac{(n+4)!}{5!n!}\frac{I_n}{1-I_1}\frac{\partial F^t_{0,k}}{\partial I_{n+3}}+\frac{1}{30(1-I_1)^2}\sum_{k_1=1}^{g-2}\frac{\partial F^t_{0,k_1}}{\partial I_{1}} d_{X}\left(F^t_{0,k-1-k_1}\right),\\
\frac{\partial F^t_{0,k}}{\partial I_{p+4}}=&\frac{2(p+3)!}{(p+5)!(1-I_1)}\frac{\partial F^t_{0,k-1}}{\partial I_{p+2}}+\sum_{n=2}^{2k-4-p}\frac{(p+n+4)!}{n!(p+5)!}\frac{I_n}{1-I_1}\frac{\partial F^t_{0,k}}{\partial I_{p+n+3}}\nonumber\\
&+\sum_{k_1=1}^{k-2}\sum_{i=2}^{p+1}\frac{i!(p+3-i)!}{(p+5)!(1-I_1)}\frac{\partial F^t_{0,k_1}}{\partial I_{i-1}}\frac{\partial F^t_{0,k-1-k_1}}{\partial I_{p+2-i}} +\frac{2}{(1-I_1)^2}\frac{(p+2)!}{(p+5)!}\sum_{k_1=1}^{k-2}\frac{\partial F^t_{0,k_1}}{\partial I_{p+1}} d_{X}\left(F^t_{0,k-1-k_1}\right).
\end{align}
\end{Proposition}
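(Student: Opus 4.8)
The plan is to derive the statement from Theorem~\ref{HMM} by passing from the thin genus expansion to the fat one via the substitution $t=Ng_s$. The first step is to record how the two expansions are related: by \cite{Zhou3}, each $F^N_g$ is a polynomial in $N$ of degree $g+1$, whose top coefficient is the planar (genus-zero ribbon graph) contribution, as one sees in $F^N_0=NF^{1D}_0$, $F^N_1=\tfrac{N^2}{2}\log\tfrac1{1-I_1}$ and the computed $F^N_2,F^N_3,F^N_4$. Writing $F^N_g=\sum_j N^jc_{g,j}(I)$ and inserting $N=t/g_s$ into $F^N=\sum_{g\ge0}g_s^{g-1}F^N_g$, the part carried by $g_s^{-2}$, i.e.\ the fat genus-zero part $F^t_0$, is precisely the sum of the terms with $j=g+1$:
\be
F^t_0=\sum_{g\ge0}t^{g+1}c_{g,g+1},\qquad c_{g,g+1}=[N^{g+1}]F^N_g.
\ee
Since $c_{0,1}=F^{1D}_0$, the $g=0$ summand equals $tF^{1D}_0=F^t_{0,0}$, so $\tilde F^t_0=\sum_{k\ge1}t^{k+1}F^t_{0,k}$ with $F^t_{0,k}:=[N^{k+1}]F^N_k$. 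The weighted homogeneity of degree $2k-2$ is inherited from $\deg F^N_k=2k-2$, which does not involve $N$.

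The second step is to extract the coefficient of $N^{k+1}$ from both sides of each of the equations \eqref{d}--\eqref{f} of Theorem~\ref{HMM} taken with $g=k$. A term $\tfrac{c}{1-I_1}\partial_{I_j}F^N_k$ contributes $\tfrac{c}{1-I_1}\partial_{I_j}F^t_{0,k}$; a term carrying an explicit factor $N$ times $\partial_{I_j}F^N_{k-1}$ or $d_X(F^N_{k-1})$ contributes $\partial_{I_j}F^t_{0,k-1}$ resp.\ $d_X(F^t_{0,k-1})$, since $[N^{k+1}](N\,F^N_{k-1})=[N^k]F^N_{k-1}=F^t_{0,k-1}$ and $d_X$ contains no $N$; a bilinear term $\partial F^N_{g_1}\!\cdot\partial F^N_{k-1-g_1}$ contributes $\partial F^t_{0,g_1}\!\cdot\partial F^t_{0,k-1-g_1}$, because the top $N$-powers $g_1+1$ and $k-g_1$ of the two factors sum to exactly $k+1$; and every term containing $F^N_{k-2}$, as well as the isolated $\delta_{g,2}\tfrac{N}{4!(1-I_1)^2}$ in the $\partial_{I_3}$-equation, has $N$-degree at most $k-1$ and hence drops. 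Matching the surviving contributions term by term reproduces all the listed equations except for the extra $\delta_{k,1}$ in the one for $\partial_{I_1}F^t_{0,k}$.

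That last term I would obtain separately from the $L_0$-constraint \eqref{tIdilaton}: rewriting $L^t_0Z^N=0$ for $F^N=\log Z^N$ and keeping the $g_s^{-2}$ part yields $-I_0\partial_{I_0}F^t_0-2\partial_{I_1}F^t_0+\sum_{l\ge1}(l+1)I_l\partial_{I_l}F^t_0+t^2=0$; the contribution of $F^t_{0,0}=tF^{1D}_0$ cancels by the genus-zero 1D dilaton relation $I_0\partial_{I_0}F^{1D}_0+2\partial_{I_1}F^{1D}_0=\sum_{l\ge1}(l+1)I_l\partial_{I_l}F^{1D}_0$, and $\tilde F^t_0$ is $I_0$-independent by \eqref{eqn:FgN} (exactly as in the proof of Theorem~\ref{HMM}), so $-2\partial_{I_1}\tilde F^t_0+\sum_{l\ge1}(l+1)I_l\partial_{I_l}\tilde F^t_0+t^2=0$; extracting the coefficient of $t^{k+1}$ and isolating the $l=1$ term gives the asserted formula with its $\delta_{k,1}$. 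The point that needs the most care is the $N$-degree bookkeeping in the second step: one must be certain that each bilinear term contributes to $[N^{k+1}]$ only through the product of the top-$N$ parts of its two factors and that no $F^N_{k-2}$-term leaks into this order. This is exactly where the input from \cite{Zhou3} that $F^N_g$ is a polynomial in $N$ of degree $g+1$ is indispensable; granting it, the rest is a routine coefficient comparison parallel to the proof of Theorem~\ref{HMM}.
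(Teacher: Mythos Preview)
Your argument is correct. The paper itself does not prove the Proposition in detail; it attributes the link between $\tilde F^t_0$ and the thin genus expansion to \cite[Section~7]{Zhou3}, and then in the Remark following the Proposition it sketches an alternative derivation by writing $Z^N=e^{F^N_0/g_s}\tilde Z^N$ via Proposition~\ref{Prop:RenormalHMM} and applying the fat Virasoro operators $\tilde L^t_m$ directly to $\tilde Z^N$. Your route---taking the equations of Theorem~\ref{HMM} at $g=k$ and reading off the coefficient of $N^{k+1}$, together with the separate $L_0$-argument for the $\delta_{k,1}$ term---is precisely an implementation of the first (thin-expansion) viewpoint, so it is essentially the approach the paper points to rather than the one in the Remark. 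Both arguments unwind to the same Virasoro identities once one strips off $I_0$; your version has the advantage of reusing Theorem~\ref{HMM} verbatim, while the Remark's version makes the $I_0$-independence of $\tilde Z^N$ explicit from the outset and avoids any $N$-degree bookkeeping.

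One small point worth making explicit: Theorem~\ref{HMM} is stated for $g\ge 2$, so strictly speaking your coefficient extraction yields the listed equations only for $k\ge 2$. For $k=1$ you correctly obtain the $\partial_{I_1}$-equation from \eqref{tIdilaton}; the remaining $\partial_{I_j}$-equations at $k=1$ are vacuous because $F^t_{0,1}=\tfrac12\log\tfrac1{1-I_1}$ depends only on $I_1$ and, at $I_0=0$ (the setting implicit in Theorem~\ref{HMM} and hence in the Proposition), $d_X(F^t_{0,0})=d_X(F^{1D}_0)$ vanishes. Stating this explicitly would close the argument cleanly.
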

This Proposition gives a recursively way to compute $F^{t}_{0,g}$,
which is similar to the computation of $F^{N}_{g}$. We have:
\begin{align}
F^{t}_{0,1}=&\frac{1}{2}\log\frac{1}{1-I_1},\\
F^{t}_{0,2}=&\frac{1}{6}\frac{I_2^2}{(1-I_1)^3}+\frac{1}{12}\frac{I_3}{(1-I_1)^{2}},\\
F^{t}_{0,3}=&\frac{1}{6}\frac{I_2^4}{(1-I_1)^6}+\frac{1}{4}\frac{I_2^2I_3}{(1-I_1)^{6}} +\frac{1}{32}\frac{I_3^2}{(1-I_1)^4} +\frac{1}{16}\frac{I_2I_4}{(1-I_1)^4}+\frac{1}{144}\frac{I_5}{(1-I_1)^3},\\
F^{t}_{0,4}=&\frac{7}{24}\frac{I_2^6}{(1-I_1)^9}+\frac{17}{24}\frac{I_2^4I_3}{(1-I_1)^{8}} +\frac{3}{8}\frac{I_2^2I_3^2}{(1-I_1)^7}+\frac{1}{48}\frac{I_3^3}{(1-I_1)^6} +\frac{5}{24}\frac{I_2^3I_4}{(1-I_1)^7} \nonumber+\frac{1}{8}\frac{I_2I_3I_4}{(1-I_1)^6}\\
&+\frac{1}{160}\frac{I_4^2}{(1-I_1)^{5}} +\frac{1}{24}\frac{I_2^2I_5}{(1-I_1)^6}+\frac{1}{120}\frac{I_3I_5}{(1-I_1)^5} +\frac{1}{180}\frac{I_2I_6}{(1-I_1)^5} +\frac{1}{2880}\frac{I_7}{(1-I_1)^4}.
\end{align}
In general,
\begin{align}
F^{t}_{0,k}=a_1\frac{I_{2k-1}}{(1-I_1)^k}+a_2\frac{I_2I_{2k-2}}{(1-I_1)^{k+1}} +a_3\frac{I_3I_{2k-3}}{(1-I_1)^{k+1}}+\cdots.
\end{align}

\begin{align}
a_1=&\frac{1}{k!(k+1)!},\\
a_2=&\frac{k^2}{k!(k+1)!},\\
a_3=&\frac{g^2(k-1)}{2\cdot k!(k+1)!}.
\end{align}
\begin{Remark}
One can also get the Proposition \ref{Prop:fatg=0} by Virasoro constraints for fat genus expansion as
follows.
By Proposition \ref{Prop:RenormalHMM}, one can rewrite the partition function $Z^N$ as
\begin{equation}
Z^N=e^{\frac{1}{g_s}F^{N}_0}\tilde{Z}^N,
\end{equation}
with $\tilde{Z}^N$ does not depend on $I_0$. Then $\tilde{Z}^N$ can be viewed as partition function
of Hermitian one-matrix model with $g_1=0$ and $g_k=\frac{I_{k-1}}{(k-1)!}$. The free energy function
$\tilde{F}^N=\log \tilde{Z}^N$ has a genus expansion (the fat genus expansion)
\be
\tilde{F}^N=\sum_{g\geqslant0}g_s^{2g-2}\tilde{F}_g^t,
\ee
then the two definitions of $\tilde{F}_0^t$ are coincident.
By the proof of Theorem \ref{HMM}, $\tilde{Z}^N$ satisfies the following constraints:
\begin{equation}
\tilde{L}^t_{m}\tilde{Z}^N=0,\ \ \ m\geqslant0
\end{equation}
where
\begin{align}
\tilde{L}^{t}_0=&-2\frac{\partial}{\partial I_1}+\sum_{l\geqslant 1}(l+1)I_l\frac{\partial}{\partial I_l}+\frac{t^2}{g_s^2}, \\
\tilde{L}_{1}^t=&\frac{2t}{1-I_1}d_{X}+
\sum_{n\geqslant 1}\frac{(n+2)!}{n!}(I_n-\delta_{n,1})
\frac{\partial }{\partial I_{n+1}} ,\\
\tilde{L}_{2}^t=&4t\frac{\partial }{\partial I_{1}}+
\sum_{n\geqslant 1}\frac{(n+3)!}{n!}(I_n-\delta_{n,1})
\frac{\partial }{\partial I_{n+2}}
+g_s^2\left(\frac{1}{1-I_1}d_{X}\right)^2+\frac{t}{1-I_1},\\
\tilde{L}_{m}^t=&2tm!\frac{\partial }{\partial I_{m-1}}+
\sum_{n\geqslant 1}\frac{(m+n+1)!}{n!}(I_n-\delta_{n,1})
\frac{\partial }{\partial I_{m+n}}\\
&+g_s^2(m-1)!\left(\frac{1}{1-I_1}d_{X}\frac{\partial }{\partial I_{m-2}}+
\frac{\partial }{\partial I_{m-2}}\frac{1}{1-I_1}d_{X}\right)  \nonumber\\
& +g_s^2\sum_{k=2}^{m-2}k!(m-k)!\frac{\partial }{\partial I_{k-1}}
\frac{\partial }{\partial I_{m-k-1}}+\delta_{m,2}\frac{t}{1-I_1},\nonumber
\end{align}
for $m\geqslant3$.
This gives another proof of the Proposition \ref{Prop:fatg=0}.
\end{Remark}

\section{Itzykson-Zuber Ansatz in  2D Topological Gravity}

\label{sec:IZ}

In this Section we will prove the validity of Itzykson-Zuber Ansatz in 2D topological gravity.

\subsection{Preliminary results of the 2D topological gravity}

According to Witten \cite{Wit1},
the mathematical theory of the 2D topological gravity studies the following intersection numbers on the Deligne-Mumford moduli spaces:
\begin{equation}
\left<\tau_{d_1}\cdots\tau_{d_{n}}\right>^{2D}_g:=\int_{\overline{\mathcal{M}}_{g,n}}\psi_1^{d_1}\wedge\cdots \wedge \psi_n^{d_{n}}.
\end{equation}
The free energy of the 2D topological gravity is the generating series of these intersection numbers:
\begin{equation}
F^{2D}(t)=\sum_{g=0}^{\infty}F^{2D}_g:=\sum_{g=0}^{\infty}\sum_{n_0,n_1,n_2\cdots} \left<\tau_0^{n_0}\tau_{1}^{n_{1}}\tau_{2}^{n_{2}}\cdots \right>^{2D}_g \frac{t_{0}^{n_0}}{n_0!}\frac{t_{1}^{n_1}}{n_1!}\frac{t_{2}^{n_2}}{n_2!}\cdots.
\end{equation}
It is well known that intersection number $\left<\tau_0^{n_0}\tau_{1}^{n_{1}}\tau_{2}^{n_{2}}\cdots \right>^{2D}_g$ satisfies the following selection rule:
\begin{equation}
\sum_{i}(i-1)n_i=3g-3.
\end{equation}
In other words, if we define
\begin{equation}
\deg{t_i}=i-1.
\end{equation}
then $F^{2D}_{g}$ is weighted homogeneous of degree $3g-3$.
The partition function of the topological 2D gravity is defined by:
\begin{equation}
Z^{2D}:=e^{F^{2D}}.
\end{equation}

\subsection{Virasoro constraints for  topological 2D gravity}
The following theorem is the famous Witten Conjecture/Kontsevich Theorem
\cite{Wit1, Kon}.
See also  \cite{DVV1} or \cite{IZ}.

\begin{Theorem}
The partition function $Z^{2D}$ of the topological 2D gravity satisfies the following equations:
\begin{equation}
L^{2D}_{n}Z^{2D}=0,\ \ \ n\geqslant-1
\end{equation}
where
\begin{align}
L^{2D}_{-1}&=\frac{t_0^2}{2}+\sum_{n\geqslant 1}(t_n-\delta_{n,1})\frac{\partial}{\partial t_{n-1}},\\
L^{2D}_{0}&=\frac{1}{8}+\sum_{n\geqslant 0}(2n+1)(t_n-\delta_{n,1})\frac{\partial}{\partial t_{n}},\\
L^{2D}_{m}&=\sum_{n\geqslant 0}\frac{(2n+2m+1)!!}{(2n-1)!!}(t_n-\delta_{n,1})\frac{\partial}{\partial t_{m+n}} +\frac{1}{2}\sum_{k+l=m-1}(2k+1)!!(2l+1)!!\frac{\partial^2}{\partial t_k\partial t_l},
\end{align}
for $m\geqslant1$.Furthermore, $\{L^{2D}_m\}_{m\geqslant-1}$ satisfies the following commutation relations:
\begin{equation}
\left[L^{2D}_m,L^{2D}_n\right]=(m-n)L^{2D}_{m+n},
\end{equation}
for $m,n\geqslant-1$.
\end{Theorem}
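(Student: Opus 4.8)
The statement has two logically independent halves: the annihilation equations $L^{2D}_nZ^{2D}=0$, and the bracket relations $[L^{2D}_m,L^{2D}_n]=(m-n)L^{2D}_{m+n}$. I would settle the bracket relations first, since they are purely algebraic. Writing, for $m\geq 1$, $L^{2D}_m=\sum_{n\geq 0}c^m_n(t_n-\delta_{n,1})\frac{\partial}{\partial t_{m+n}}+\frac12\sum_{k+l=m-1}(2k+1)!!(2l+1)!!\frac{\partial^2}{\partial t_k\partial t_l}$ with $c^m_n=\frac{(2n+2m+1)!!}{(2n-1)!!}$, one expands $[L^{2D}_m,L^{2D}_n]$ by Leibniz. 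The (first order)$\cdot$(first order) terms recombine into the first-order part of $(m-n)L^{2D}_{m+n}$ once one verifies the relevant double-factorial identities among the coefficients $c^m_n$; the cross terms between a first-order piece of one operator and the second-order piece of the other supply the second-order part of $(m-n)L^{2D}_{m+n}$, while the two second-order pieces commute. It remains to treat the boundary cases involving $L^{2D}_{-1}$ and $L^{2D}_0$ by hand, keeping track of the quadratic term $\frac{t_0^2}{2}$ and the constant $\frac18$. This is bookkeeping rather than insight.

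For the constraints $L^{2D}_nZ^{2D}=0$ themselves, I would not attempt anything self-contained: this is the Witten Conjecture, established by Kontsevich \cite{Kon}, and I would import it. The most economical packaging, due to Dijkgraaf-Verlinde-Verlinde \cite{DVV1}, is the equivalence between the full Virasoro system and the conjunction of the KdV hierarchy for $u=\partial_{t_0}^2F^{2D}$ with the string equation $L^{2D}_{-1}Z^{2D}=0$. In that packaging, $L^{2D}_{-1}Z^{2D}=0$ is the string equation, which follows from comparing $\psi$-classes under the forgetful morphism $\overline{\mathcal M}_{g,n+1}\to\overline{\mathcal M}_{g,n}$, the extra $\frac{t_0^2}{2}$ recording the initial value $\langle\tau_0^3\rangle_0=1$; $L^{2D}_0Z^{2D}=0$ is the dilaton equation, with the $\frac18$ the standard genus-one anomaly; and, granted the KdV flows, the remaining operators $L^{2D}_m$ ($m\geq1$) are not independent input but are generated recursively from the lower ones by means of the bracket relations just established. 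Thus the system reduces to the KdV integrability of $Z^{2D}$ together with two elementary geometric identities.

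The genuine obstacle is precisely that KdV integrability, which is Kontsevich's theorem and lies far beyond any short verification. Kontsevich's route realizes $\tau_{WK}$ as the asymptotic expansion of the matrix Airy integral $\int dM\,\exp\big(\operatorname{tr}(-\tfrac16 M^3+\tfrac12\Lambda M^2)\big)$, matches it term by term with the $\psi$-class generating series via the combinatorics of trivalent ribbon graphs (Strebel differentials inducing the cell decomposition of $\overline{\mathcal M}_{g,n}$), and extracts the KdV hierarchy from the Baker-Akhiezer function of that integral. A route closer to the spirit of the present paper would instead derive the operators $L^{2D}_m$ directly as the Schwinger-Dyson (loop) equations of the Kontsevich model, integration by parts in $M$ producing exactly these annihilation identities without any reference to KdV; but in either case the substantive analytic content---that the Kontsevich integral computes the intersection numbers---is imported. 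For this reason the theorem is best treated here as a cited input, and the real work of the section is the reformulation of these operators in the renormalized $I$-coordinates, carried out below.
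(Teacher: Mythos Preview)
Your proposal is correct and aligns with the paper's own treatment: the paper does not prove this theorem but simply cites it as the Witten Conjecture/Kontsevich Theorem, referring to \cite{Wit1, Kon, DVV1, IZ}. Your additional sketch of the bracket-relation bookkeeping and the DVV reduction to KdV plus the string equation is accurate supplementary commentary, but the paper itself imports the result wholesale without even that much detail.
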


\subsection{The Itzykson-Zuber Ansatz}
It was announced in \cite{Zhou1} that the first two Virasoro constraints $L^{2D}_{-1}$ and $L^{2D}_{0}$
in $I$-coordinates can be used to solve free energies in genus $g=0,1$ and establish the Itzykson-Zuber ansatz.
Here we present some details which are similar to the 1D topological gravity case
in \cite{Zhou1}.

\begin{Theorem} \label{IZ}(Itzykson-Zuber Ansatz \cite{IZ})
For free energy of 2D gravity, we have:
\begin{align}
F^{2D}_0=&\frac{1}{6}I_0^3-\sum_{n\geqslant0}\frac{(-1)^nI_0^{n+2}}{(n+2)!} I_n
+\frac{1}{2}\sum_{j,k\geqslant0}\frac{(-1)^{j+k}I_0^{j+k+1}}{j!k!(j+k+1)}I_jI_{k},\label{eqn:F02D}\\
F^{2D}_{1}=&\frac{1}{24}\log\frac{1}{1-I_1}.\label{eqn:F12D}
\end{align}
and  for $g \geq 2$,
\begin{equation}\label{eqn:Fg2D}
F^{2D}_g(t)=\sum_{\sum_{2\leqslant k\leqslant 3g-2}(k-1)l_k=3g-3}\left<\tau_2^{l_2}\tau_{3}^{l_{3}}\cdots\tau_{3g-2}^{l_{3g-2}} \right>^{2D}_g \prod_{j=2}^{3g-2}\frac{1}{l_j!}\left(\frac{I_j}{(1-I_1)^{\frac{2j+1}{3}}}\right)^{l_j}.
\end{equation}
\end{Theorem}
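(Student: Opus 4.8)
The plan is to follow the route used for 1D topological gravity in \cite{Zhou1}: rewrite the first two Virasoro operators $L^{2D}_{-1}$ and $L^{2D}_{0}$ in the $I$-coordinates and then read off the consequences genus by genus. Since the relation between $\{t_n\}$ and $\{I_n\}$ in 2D gravity is formally the same as in 1D --- namely \eqref{Icoordinates} ($I_n=\sum_{k\geqslant0}t_{n+k}I_0^k/k!$, equivalently $I_0=u_0$) together with its inverse \eqref{Itot} --- the change-of-coordinate formulas \eqref{string}, \eqref{vttoI}, \eqref{vItot} are available verbatim. First I would note that $\sum_{n\geqslant1}(t_n-\delta_{n,1})\frac{\partial}{\partial t_{n-1}}=-\frac{\partial}{\partial I_0}$ by \eqref{string} and that $t_0=\sum_{k\geqslant0}\frac{(-1)^kI_0^k}{k!}I_k$ by \eqref{Itot}, so that
\[
L^{2D}_{-1}=-\frac{\partial}{\partial I_0}+\frac12\Bigl(\sum_{k\geqslant0}\frac{(-1)^kI_0^k}{k!}I_k\Bigr)^2 .
\]
For $L^{2D}_0$ I would write $2n+1=2(n-1)+3$, use that $\sum_n(n-1)t_n\frac{\partial}{\partial t_n}=\sum_l(l-1)I_l\frac{\partial}{\partial I_l}$ is the weighted-degree operator in either set of coordinates, and substitute \eqref{vItot} for $\frac{\partial}{\partial t_n}$ together with the identities $\sum_k t_k\frac{I_0^k}{k!}=I_0$ and $\sum_k t_{l+k}\frac{I_0^{k}}{k!}=I_l$ from \eqref{Icoordinates}; the $I_0$-dependent pieces cancel and one is left with the exact analogue of \eqref{Idilaton},
\[
L^{2D}_0=\frac18-2I_0\frac{\partial}{\partial I_0}+\sum_{l\geqslant1}(2l+1)I_l\frac{\partial}{\partial I_l}-3\frac{\partial}{\partial I_1}.
\]

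Dividing $L^{2D}_{-1}Z^{2D}=0$ by $Z^{2D}$ gives $\frac{\partial F^{2D}}{\partial I_0}=\frac12\bigl(\sum_k\frac{(-1)^kI_0^k}{k!}I_k\bigr)^2$. Comparing weighted degrees --- with $\deg I_n=n-1$, the genus-$g$ part of the left side has degree $3g-2$ whereas the right side is homogeneous of degree $-2$ --- forces $\frac{\partial F^{2D}_g}{\partial I_0}=0$ for every $g\geqslant1$, while integrating the genus-zero part in $I_0$ and collecting the $k=0$ contributions reproduces \eqref{eqn:F02D}; the constant of integration vanishes because $F^{2D}_0$ is weighted homogeneous of degree $-3$ with $\deg I_0=-1$ and $\deg I_j\geqslant0$ for $j\geqslant1$, so it has no $I_0$-free part. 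Dividing $L^{2D}_0Z^{2D}=0$ by $Z^{2D}$ and using $\frac{\partial F^{2D}_g}{\partial I_0}=0$ for $g\geqslant1$ gives, for those $g$,
\[
-3\frac{\partial F^{2D}_g}{\partial I_1}+\sum_{l\geqslant1}(2l+1)I_l\frac{\partial F^{2D}_g}{\partial I_l}=-\frac18\delta_{g,1}.
\]
For $g=1$, weighted homogeneity of degree $0$ forces $F^{2D}_1=F^{2D}_1(I_1)$, so this reduces to $3(1-I_1)\frac{dF^{2D}_1}{dI_1}=\frac18$, which integrates to \eqref{eqn:F12D} (the constant is $0$ since $F^{2D}_1$ vanishes at $t=0$, i.e.\ at $I=0$).

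For $g\geqslant2$ the same equation reads $\frac{\partial F^{2D}_g}{\partial I_1}=\frac{1}{3(1-I_1)}\sum_{l\geqslant2}(2l+1)I_l\frac{\partial F^{2D}_g}{\partial I_l}$. Weighted homogeneity of degree $3g-3$ together with the $I_0$-independence forces $F^{2D}_g$ to be a polynomial in $I_2,\dots,I_{3g-2}$ with coefficients that are formal power series in $I_1$; writing $F^{2D}_g=\sum_{\mathbf a}P_{\mathbf a}(I_1)\prod_{j\geqslant2}I_j^{a_j}$ and extracting the coefficient of $\prod_j I_j^{a_j}$ turns the equation into the linear ODE $P_{\mathbf a}'(I_1)=\frac{\sum_l(2l+1)a_l}{3(1-I_1)}P_{\mathbf a}(I_1)$, with solution $P_{\mathbf a}(I_1)=C_{\mathbf a}(1-I_1)^{-\frac13\sum_l(2l+1)a_l}$. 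Since $\frac13\sum_l(2l+1)a_l=\sum_j a_j\frac{2j+1}{3}$ --- which, by the homogeneity relation $3g-3=\sum_j(j-1)a_j$, equals the integer $2g-2+\sum_j a_j$ --- this is exactly $C_{\mathbf a}\prod_j\bigl(I_j(1-I_1)^{-(2j+1)/3}\bigr)^{a_j}$, so $F^{2D}_g$ has the asserted form \eqref{eqn:Fg2D}, the sum being constrained to $2\leqslant j\leqslant3g-2$ by homogeneity. Evaluating at $t=0$, where $I=0$ and, by \eqref{vttoI}, $\frac{\partial}{\partial I_j}$ restricts to $\frac{\partial}{\partial t_j}$ for $j\geqslant1$ while the $I_0$-dependent correction terms in the chain rule vanish (they never differentiate $I_0$), identifies $C_{\mathbf a}=\frac{1}{\prod_j a_j!}\langle\prod_j\tau_j^{a_j}\rangle^{2D}_g$.

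I expect the only real computational effort to be the rewriting of $L^{2D}_0$ in the $I$-coordinates --- the bookkeeping for the term $\sum_n(2n+1)t_n\frac{\partial}{\partial t_n}$, which is where \eqref{Icoordinates} and the definition of the $I_l$ enter --- together with the small care needed in the genus-zero step to see that the constant of integration really vanishes; once the string and dilaton equations are in $I$-coordinates, the passage to the ansatz is the same elementary ODE-in-$I_1$ argument as in the 1D case of \cite{Zhou1}.
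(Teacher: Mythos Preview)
Your proposal is correct and follows essentially the same strategy as the paper: rewrite $L^{2D}_{-1}$ and $L^{2D}_{0}$ in $I$-coordinates and extract the consequences genus by genus. The only tactical differences are that the paper, rather than computing your full formula for $L^{2D}_0$, takes the shortcut of restricting to $I_0=0$ (where $t_k=I_k$ and $\partial/\partial t_k=\partial/\partial I_k$ for $k\geq1$) to reach the same dilaton-type equation, and then solves it as a recursion on powers of $I_1$ rather than as your first-order ODE in $I_1$; both routes are equivalent.
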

\begin{proof}
We first rewrite $L_{-1}^{2D}$ in I-coordinates:
\begin{equation}
L^{2D}_{-1}=-\frac{\partial}{\partial I_0}+\frac{1}{2}\left(\sum_{n=0}^{\infty}\frac{(-1)^nI_0^n}{n!}I_n\right)^2.
\end{equation}
This gives
\begin{equation}
\frac{\partial F^{2D}}{\partial I_0}=\frac{1}{2}\left(\sum_{n=0}^{\infty}\frac{(-1)^nI_0^n}{n!}I_n\right)^2.
\end{equation}
By analyzing degree of $F^{2D}_g$, one has
\begin{align}
\frac{\partial F^{2D}_0}{\partial I_0}&=\frac{1}{2}\left(\sum_{n=0}^{\infty}\frac{(-1)^nI_0^n}{n!}I_n\right)^2,\label{2DF0}\\
\frac{\partial F^{2D}_g}{\partial I_0}&=0,\ \ \ g>0. \label{2DFg-I0}
\end{align}
This gives
\begin{align}
F^{2D}_0=&\frac{1}{6}I_0^3-\sum_{n\geqslant0}\frac{(-1)^nI_0^{n+2}}{(n+2)!} I_n +\frac{1}{2}\sum_{n,k\geqslant0}\frac{(-1)^{n+k}I_0^{n+k+1}}{n!k!(n+k+1)}I_nI_{k},
\end{align}
and $F^{2D}_g$ is independent on $I_0$.
By
$$L^{2D}_{0}Z^{2D}=0,$$
one has
\begin{equation}
0=\frac{1}{8}+\sum_{n\geqslant 0}(2n+1)(t_n-\delta_{n,1})\frac{\partial F^{2D}}{\partial t_{n}}.
\end{equation}
Let $I_0=0$, we have
\begin{equation}
0=\frac{1}{8}+\sum_{n\geqslant 1}(2n+1)(I_n-\delta_{n,1})\frac{\partial F^{2D}}{\partial I_{n}}.
\end{equation}
This is equal to
\begin{align}
0=&\sum_{n\geqslant 1}(2n+1)(I_n-\delta_{n,1})\frac{\partial F_1^{2D}}{\partial I_{n}}+\frac{1}{8}\delta_{g,1}.\label{L02DinI}
\end{align}
Write $F^{2D}_g$ as formal power series in $I_1$, with coefficients a priori formal series in $I_2,I_3,\cdots$:
\begin{equation}
F_{g}^{2D}=\sum_{n=0}^{\infty}a_{g,n}(I_2,I_3,\cdots)I_1^n.
\end{equation}
Write
\begin{equation}
a_{g,0}=\sum \alpha_{l_2,l_3,\cdots,l_m}\prod_{i=2}^m\frac{I_i^{l_i}}{l_i!}.
\end{equation}
Since $I_0=0$ is equivalent to $t_0=0$ and $ t_k=I_k (k>0)$,
\begin{equation}
\alpha_{l_2,l_3,\cdots,l_m}=\left<\tau_2^{l_2}\tau_{3}^{l_{3}}\cdots\tau_{m}^{l_{m}} \right>^{2D}_g.
\end{equation}
This vanishes unless the following selection rule is satisfied:
\begin{equation}
\sum_{i=2}^{m}(i-1)l_i=3g-3.
\end{equation}
So $l_i=0$ unless $i\leqslant3g-2$. Therefore
\begin{equation}
a_{g,0}=\sum \left<\tau_2^{l_2}\tau_{3}^{l_{3}}\cdots\tau_{3g-2}^{l_{3g-2}} \right>^{2D}_g \prod_{i=2}^{3g-2}\frac{I_i^{l_i}}{l_i!},
\end{equation}
where the summation is taken over all nonnegative integers $l_2,l_3,\cdots,l_{3g-2}$ such that
\begin{equation}
\sum_{i=2}^{3g-2}(i-1)l_i=3g-3.
\end{equation}
Let
\begin{equation}
E=\sum_{k\geqslant2}\frac{2k+1}{3}I_k\frac{\partial }{\partial I_k}.
\end{equation}
Then the equation \eqref{L02DinI} gives us the following recursion relations:
\begin{align}
a_{g,1}=&E(a_{g,0})+\frac{1}{24}\delta_{g,1},\\
ma_{g,m}=&(m-1)a_{g,m-1}+E(a_{g,m-1}).
\end{align}
When $g=1$, we have $a_{1,0}=0$ by selection rule. One can see that $a_{1,n}=\frac{1}{24n}$ for $n\geqslant1$. Therefore
\begin{equation}
F_{1}^{2D}=\sum_{n=1}^{\infty}\frac{1}{24n}I_1^n=\frac{1}{24}\log{\frac{1}{1-I_1}}.
\end{equation}
When $g>1$, one finds
\begin{equation}
\begin{split}
a_{g,n}=&\sum_{\sum_{j=2}^{3g-2}(j-1)l_j=3g-3} \left<\tau_2^{l_2}\tau_{3}^{l_{3}}\cdots\tau_{3g-2}^{l_{3g-2}} \right>^{2D}_g (-1)^n\binom{g-1-\sum_{j=2}^{3g-2}jl_j}{n} \prod_{i=2}^{3g-2}\frac{I_i^{l_i}}{l_i!}.
\end{split}
\end{equation}
This proves
\begin{align}
F_{g}^{2D}=&\sum_{\sum_{j=2}^{3g-2}(j-1)l_j=3g-3} \left<\tau_2^{l_2}\tau_{3}^{l_{3}}\cdots\tau_{3g-2}^{l_{3g-2}} \right>^{2D}_g \frac{1}{(1-I_1)^{\sum_{j=2}^{3g-2}jl_j-g+1}} \prod_{i=2}^{3g-2}\frac{I_i^{l_i}}{l_i!}\\
=&\sum_{\sum_{j=2}^{3g-2}(j-1)l_j=3g-3} \left<\tau_2^{l_2}\tau_{3}^{l_{3}}\cdots\tau_{3g-2}^{l_{3g-2}} \right>^{2D}_g \prod_{i=2}^{3g-2}\frac{1}{l_i!}\left(\frac{I_j}{(1-I_1)^{(2j+1)/2}}\right)^{l_j}.\nonumber
\end{align}
\end{proof}

\begin{Remark}
The formula for $F^{2D}_0$ in \eqref{eqn:F02D} is equivalent to the version announced in \cite{Zhou1}.
It is different from the original formula given by Itzyskon-Zuber \cite{IZ}.
Their version will be proved in Corollary \ref{cor:IZ}.
\end{Remark}

In \cite{Zhou1}, the second author studied another kind of coordinates $\{\frac{\partial^n I_0}{\partial t_0^n}\}_{n\geqslant 0}$, and proved the following transformation equations between this coordinates and $I$-coordinates:
\begin{equation}\label{eqn:Itojet}
\frac{\partial^n I_0}{\partial t_0^n}=\sum_{\sum_{j\geqslant1}jm_j=n-1} \frac{(\sum_j(j+1)m_j)!}{\prod_j((j+1)!)^{m_j}m_j!}\cdot \frac{\prod_jI_{j+1}^{m_j}}{(1-I_1)^{\sum_{j}(j+1)m_j+1}},
\end{equation}

\begin{equation}\label{eqn:jettoI}
I_n=-\sum_{\sum_{j\geqslant1}jm_j=n-1} \frac{(\sum_j(j+1)m_j)!}{\prod_j((j+1)!)^{m_j}m_j!}\cdot \frac{\prod_j\left(-\frac{\partial^{j+1} I_0}{\partial t_0^{j+1}}\right)^{m_j}}{\left(\frac{\partial I_0}{\partial t_0}\right)^{\sum_{j}(j+1)m_j+1}}.
\end{equation}

In an earlier work by Euguchi, Yamada and Yang \cite{Egu1},
the Itzykson-Zuber Ansatz is written in two forms:
\begin{align}
F_{g}^{2D}=&\sum_{\sum_{j=2}^{3g-2}(j-1)l_j=3g-3} a_{l_2\cdots l_{3g-2}}\frac{(u'')^{l_2}\cdots(u^{(3g-2)})^{l_{3g-2}}}{(u')^{2(1-g)+\sum_{j=2}^{3g-2}jl_j}} \label{eqn:injet-coor}\\
=&\sum_{\sum_{j=2}^{3g-2}(j-1)l_j=3g-3} b_{l_2\cdots l_{3g-2}}\frac{I_2^{l_2}\cdots I_{3g-2}^{l_{3g-2}}}{(u')^{2(1-g)+\sum_{j=2}^{3g-2}l_j}},\label{eqn:inI-coor}
\end{align}
where $u=I_0$, and $u^{(n)} = \frac{\pd^n I_0}{\pd t_0^n} $.
They proved \eqref{eqn:injet-coor} by KdV hierarchy plus string equation,
and derived \eqref{eqn:inI-coor} by \eqref{eqn:injet-coor}
following a proposal by Itzyson and Zuber \cite{IZ}.
In this Subsection, we have proved \eqref{eqn:inI-coor} by Virasoro constraints,
and by \eqref{eqn:jettoI}, one can derives \eqref{eqn:injet-coor} by \eqref{eqn:inI-coor}.
Our proof seems to be simpler because we use only two linear equations while
the KdV hierarchy is a sequence of nonlinear equations.
Furthermore, in our proof,
we have only used $L^{2D}_{-1}$ and $L^{2D}_{0}$ in $I$-coordinates.
In the nex Subsection,
we will show that it is possible to use the higher Virasoro constraints in $I$-coordinates
to derive a recursive method to solve free energies in higher genera.

\subsection{Computations of $F^{2D}_g$ by Virasoro constraints in $I$-coordinates}

 In the above we have shown that for $g \geq 2$,
$F_g$ is independent of $I_0$.
It follows that after we rewrite the Virasoro constraints in $I$-coordinates,
we can take $I_0 =0$ to compute $F^{2D}_g$.

\begin{Theorem}\label{2Dgravity}
For free energy of the topological 2D gravity of $g\geqslant2$, the following equations hold:
\begin{align}
\frac{\partial F^{2D}_g}{\partial J_{1}}=&\sum_{p=2}^{3g-2}(2p+1)\frac{J_{p}}{1-3J_1}\frac{\partial F^{2D}_g}{\partial J_{p}},\label{ag12D}\\
\frac{\partial F^{2D}_g}{\partial J_{2}}=&\sum_{p=2}^{3g-3}(2p+1)\frac{J_{p}}{1-3J_1}\frac{\partial F^{2D}_g}{\partial J_{1+p}}
+\frac{1}{2}\frac{1}{1-3J_1}\left(\sum_{l=1}^{3g-4}(2l+3)\frac{J_{l+1}}{1-3J_1} \frac{\partial}{\partial J_l} \right)^2\left(F^{2D}_{g-1}\right)
 \label{ag22D}\\
&+\frac{1}{2}\frac{1}{1-3J_1}\sum_{g_1=1}^{g-1} \sum_{l_1=1}^{3g_1-2}(2l_1+3) \frac{J_{l_1+1}}{1-3J_1}\frac{\partial F^{2D}_{g_1}}{\partial J_{l_1}} \sum_{l_2=1}^{3g-3g_1-2}(2l_2+3)\frac{J_{l_2+1}}{1-3J_1}\frac{\partial F^{2D}_{g-g_1}}{\partial J_{l_2}},\nonumber\\
\frac{\partial F^{2D}_g}{\partial J_{3}}=&\sum_{p=2}^{3g-4}(2p+1)\frac{J_{p}}{1-3J_1}\frac{\partial F^{2D}_g}{\partial J_{2+p}}
+\sum_{g_1=1}^{g-1} \frac{\partial F^{2D}_{g_1}}{\partial J_{1}} \sum_{l=1}^{3g-3g_1-2}(2l+3)\frac{J_{l+1}}{(1-3J_1)^2}\frac{\partial F^{2D}_{g-g_1}}{\partial J_{l}}
\label{ag32D}\\
&+\sum_{l=1}^{3g-5}(2l+3)\frac{J_{l+1}}{(1-3J_1)^2} \frac{\partial^2 F^{2D}_{g-1}}{\partial J_l\partial J_1}
+3\sum_{l=1}^{3g-5}(2l+3)\frac{J_{l+1}}{(1-3J_1)^3}\frac{\partial F^{2D}_{g-1}}{\partial J_l}, \nonumber\\
\frac{\partial F^{2D}_g}{\partial J_{r+1}}=&\sum_{p=2}^{3g-2-r}(2p+1)\frac{J_{p}}{1-3J_1}\frac{\partial F^{2D}_g}{\partial J_{r+p}} +\sum_{g_1=1}^{g-1} \frac{\partial F^{2D}_{g_1}}{\partial J_{r-1}} \sum_{l=1}^{3g-3g_1-2}(2l+3)\frac{J_{l+1}}{(1-3J_1)^2}\frac{\partial F^{2D}_{g-g_1}}{\partial J_{l}} \\
&+\sum_{l=1}^{3g-3-r}(2l+3)\frac{J_{l+1}}{(1-3J_1)^2} \frac{\partial^2 F^{2D}_{g-1}}{\partial J_{l}\partial J_{r-1}}
+\frac{2r-1}{(1-3J_1)^2} \frac{\partial F^{2D}_{g-1}}{\partial J_{r-2}}\nonumber\\
&+\frac{1}{2}\frac{1}{1-3J_1}\left(\sum_{g_1=1}^{g-1}\sum_{l_1=1}^{r-2} \frac{\partial F^{2D}_{g_1}}{\partial J_{l_1}}\frac{\partial F^{2D}_{g-g_1}}{\partial J_{r-1-l_1}}
+\sum_{l_1=1}^{r-2}\frac{\partial^2 F^{2D}_{g-1}}{\partial J_{l_1}\partial J_{r-1-l_1}} \right),\nonumber
\end{align}
for $r=3,4,\cdots3g-3$, where
$$J_{k}=\frac{I_k}{(2k+1)!!}.$$
\end{Theorem}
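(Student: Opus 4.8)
The plan is to reuse the mechanism behind Theorems~\ref{1Dgravity} and~\ref{HMM}. By Theorem~\ref{IZ} the free energy $F^{2D}_g$ is independent of $I_0$ for every $g\geq1$ and, by the selection rule, depends only on $I_1,\dots,I_{3g-2}$; so it suffices, after differentiating, to impose $I_0=0$. Under $I_0=0$ one has $t_0=0$, $t_k=I_k$ for $k\geq1$, and by \eqref{vItot} the coordinate change collapses to $\frac{\pd}{\pd t_0}=\frac{1}{1-I_1}d_X$ and $\frac{\pd}{\pd t_k}=\frac{\pd}{\pd I_k}$ ($k\geq1$), where $d_X=\frac{\pd}{\pd I_0}+\sum_{l\geq1}I_{l+1}\frac{\pd}{\pd I_l}$, exactly as in the proof of Theorem~\ref{HMM}. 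Equation \eqref{ag12D} is nothing but the $I$-coordinate form of $L^{2D}_0Z^{2D}=0$ already established inside the proof of Theorem~\ref{IZ} (via \eqref{L02DinI}); what remains is to process $L^{2D}_mZ^{2D}=0$ for $m\geq1$.

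First I would turn $Z^{-1}L^{2D}_mZ=0$ into an equation for $F^{2D}=\log Z^{2D}$. The linear piece $\sum_{n\geq0}\frac{(2n+2m+1)!!}{(2n-1)!!}(t_n-\delta_{n,1})\frac{\pd}{\pd t_{m+n}}$ is handled verbatim as in the proof of Theorem~\ref{1Dcase}, the factorial ratio $\frac{(m+n+1)!}{n!}$ being replaced by the double-factorial ratio; the quadratic piece $\frac12\sum_{k+l=m-1}(2k+1)!!(2l+1)!!\frac{\pd^2}{\pd t_k\pd t_l}$ produces, upon acting on $e^{F^{2D}}$, both a product $(\pd F^{2D})(\pd F^{2D})$ and a second derivative $\pd^2F^{2D}$, exactly as the quadratic Virasoro term was treated in the proof of Theorem~\ref{HMM}. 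Substituting the collapsed vector fields at $I_0=0$, the $k=l=0$ slot contributes $\bigl(\frac{1}{1-I_1}d_X\bigr)^2$, the slots with exactly one index $0$ contribute $\frac{1}{1-I_1}d_X\frac{\pd}{\pd I_j}$ together with a lower-order $\frac{1}{(1-I_1)^2}d_X$ tail, and the slots with both indices $\geq1$ contribute $\frac{\pd}{\pd I_i}\frac{\pd}{\pd I_j}$; the products of double factorials are then resummed by the same binomial identity $\sum_k\frac{k!(m-k)!}{(k-a)!(m-k-b)!}=\binom{m+1}{a+b+1}a!b!$ used in the proof of Theorem~\ref{HMM} (with its double-factorial analogue after rescaling).

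Next I would split $F^{2D}=F^{2D}_0+\sum_{g\geq1}F^{2D}_g$ and separate genera by degree, using $\deg J_k=\deg I_k=k-1$ and $\deg F^{2D}_g=3g-3$. The pure $F^{2D}_0$ part is exactly the genus-zero Virasoro identity and drops out. The simplification peculiar to 2D gravity is that all residual genus-zero cross terms vanish at $I_0=0$: by \eqref{eqn:F02D} every monomial of $F^{2D}_0$ carries a positive power of $I_0$, so $\frac{\pd F^{2D}_0}{\pd t_k}\big|_{I_0=0}=0$ and $\frac{\pd^2F^{2D}_0}{\pd t_k\pd t_l}\big|_{I_0=0}=0$ for all $k,l$ (equivalently $\frac{\pd^2F^{2D}_0}{\pd t_0^2}=u_0=I_0$ vanishes there); this is why \eqref{ag12D}--\eqref{ag32D} and the general equation carry no inhomogeneous $\delta$-terms, in contrast with the Hermitian matrix model where $\frac{\pd^2F^{N}_0}{\pd t_0^2}\big|_{I_0=0}=\frac{N}{1-I_1}\neq0$ produced the $\delta_{g,2}$ term in Theorem~\ref{HMM}. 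After the dust settles, the linear piece picks out $F^{2D}_g$, the product piece $\sum_{g_1+g_2=g}$ (with $g_1,g_2\geq1$), and the $\pd^2$ piece $F^{2D}_{g-1}$.

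Finally, rescaling by $\frac{\pd}{\pd I_k}=\frac{1}{(2k+1)!!}\frac{\pd}{\pd J_k}$ turns the combination $\frac{(2n+2m+1)!!}{(2n-1)!!}\cdot(2n+1)!!\cdot\frac{1}{(2n+2m+1)!!}=2n+1$ and $I_{l+1}\frac{\pd}{\pd I_l}$ into $(2l+3)J_{l+1}\frac{\pd}{\pd J_l}$, while $1-I_1=1-3J_1$; this is where all the coefficients $(2p+1)$, $(2l+3)$ and the denominators $1-3J_1$ of the statement appear, and where the $n=1$ term combines with the $\delta_{n,1}$ term into the factor $1-3J_1$ one divides by to isolate $\frac{\pd F^{2D}_g}{\pd J_{m+1}}$. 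Taking $m=1,2$ yields \eqref{ag22D}, \eqref{ag32D}, and general $m=r\geq3$ yields the last displayed equation, all sums being finite because $F^{2D}_g$ depends only on $J_1,\dots,J_{3g-2}$ by \eqref{eqn:Fg2D}. The main obstacle is purely computational: tracking the exact coefficients in the $I$-coordinate expansion of $\frac{\pd^2}{\pd t_k\pd t_l}$---in particular the lower-order $\frac{1}{(1-I_1)^2}d_X$ and $\frac{1}{1-I_1}\frac{\pd}{\pd I_{j-1}}$ tails that survive at $I_0=0$ only in truncated form---and checking that after rescaling to $J_k$ every coefficient collapses to the clean form in the statement; everything else is parallel to the computations already carried out in Sections~\ref{T1D} and~\ref{sec:HMM}.
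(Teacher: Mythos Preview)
Your approach is essentially the paper's own: expand $L^{2D}_mZ^{2D}=0$ as a constraint on $F^{2D}$, separate by genus, restrict to $I_0=0$ using $t_k=I_k$ and $\frac{\pd}{\pd t_k}=\frac{\pd}{\pd I_k}$ for $k\geq1$ together with $\frac{\pd}{\pd t_0}=\frac{1}{1-I_1}d_X$, drop the genus-zero cross terms via $\frac{\pd F_0^{2D}}{\pd t_n}\big|_{I_0=0}=0$, and rescale to $J_k$. One small point: you over-engineer the linear piece by invoking the full Theorem~\ref{1Dcase} computation and the binomial identity from Theorem~\ref{HMM}; since you restrict to $I_0=0$ \emph{before} changing variables, the linear term $\sum_{n\geq0}\frac{(2n+2m+1)!!}{(2n-1)!!}(t_n-\delta_{n,1})\frac{\pd}{\pd t_{m+n}}$ becomes literally $\sum_{n\geq1}\frac{(2n+2m+1)!!}{(2n-1)!!}(I_n-\delta_{n,1})\frac{\pd}{\pd I_{m+n}}$ with no resummation needed---the paper proceeds this way directly. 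Likewise the second derivative $\frac{\pd^2 F^{2D}_0}{\pd t_k\pd t_l}\big|_{I_0=0}=0$ (true by the selection rule) is not actually required, since for $g\geq2$ the $\pd^2$ term lands on $F^{2D}_{g-1}$ with $g-1\geq1$; only the vanishing of the \emph{first} derivatives of $F^{2D}_0$ is used to eliminate the $g_1=0$ and $g_1=g$ slots of the product term.
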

\begin{proof}
By the constraint $L_mZ^{2D}=0$ for $m \geq 1$,
one gets:
\begin{align}
0=&\sum_{n\geqslant 0}\frac{(2n+2m+1)!!}{(2n-1)!!}(t_n-\delta_{n,1})\frac{\partial F_g^{2D}}{\partial t_{m+n}} +\frac{1}{2}\sum_{k+l=m-1}(2k+1)!!(2l+1)!!\frac{\partial^2 F_{g-1}^{2D}}{\partial t_k\partial t_l}
 \label{2DFg}\\
 &+\frac{1}{2}\sum_{g_1=0}^g\sum_{k+l=m-1}(2k+1)!!(2l+1)!!\frac{\partial^2 F_{g_1}^{2D}}{\partial t_k}\frac{\partial F_{g-g_1}^{2D}}{\partial t_l}.\nonumber
\end{align}
We want to rewrite these equations in $I$-coordinates under the condition $I_0=0$. By \eqref{eqn:F02D},
\begin{align}
\frac{\partial F_{0}^{2D}}{\partial I_{0}}=&\frac{1}{2}\left(\sum_{k\geqslant0}\frac{(-1)^kI_0^k}{k!}I_k\right)^2,\\
\frac{\partial F_{0}^{2D}}{\partial I_{n}}=&\frac{(-1)^{n+1}I_0^{n+2}}{(n+2)!} +\sum_{k\geqslant0}\frac{(-1)^{n+k}I_{0}^{n+k+1}}{n!k!(n+k+1)}I_k, \ \ \ n>0.
\end{align}
Hence
\begin{align}
\frac{\partial F_{0}^{2D}}{\partial I_{0}}\bigg|_{I_0=0}=\frac{\partial F_{0}^{2D}}{\partial I_{n}}\bigg|_{I_0=0}=0,
\end{align}
and this gives
\begin{align}
\frac{\partial F_{0}^{2D}}{\partial t_{0}}\bigg|_{I_0=0}=\frac{\partial F_{0}^{2D}}{\partial t_{n}}\bigg|_{I_0=0}=0.
\end{align}
Now we restrict equation \eqref{2DFg} in $I_0=0$, and $g\geqslant2$, we have:\\
for $m=1$:
\begin{align}
0=&\sum_{n\geqslant 1}\frac{(2n+3)!!}{(2n-1)!!}(I_n-\delta_{n,1})\frac{\partial F_g^{2D}}{\partial I_{n+1}} +\frac{1}{2}\left(\sum_{l=1}^{2g-1}\frac{I_{l+1}}{1-I_1}\frac{\partial}{\partial I_l}\right)^2(F_{g-1}^{2D}) \\
&+\frac{1}{2}\sum_{g_1=1}^{g-1}\left(\sum_{l=1}^{2g-1}\frac{I_{l+1}}{1-I_1}\frac{\partial F_{g_1}^{2D}}{\partial I_l}\right)\left(\sum_{l=1}^{2g-1}\frac{I_{l+1}}{1-I_1}\frac{\partial F_{g-g_1}^{2D}}{\partial I_l}\right),\nonumber
\end{align}
for $m=2$:
\begin{align}
0=&\sum_{n\geqslant 1}\frac{(2n+5)!!}{(2n-1)!!}(I_n-\delta_{n,1})\frac{\partial F_g^{2D}}{\partial I_{n+2}} +3!!\sum_{l\geqslant1}\frac{I_{l+1}}{1-I_1}\frac{\partial^2 F_{g-1}^{2D}}{\partial I_l\partial I_1} +\frac{3!!}{2}\sum_{l\geqslant1}\frac{I_{l+1}}{(1-I_1)^2}\frac{\partial F_{g-1}^{2D}}{\partial I_l}  \\
& +\sum_{g_1=1}^{g-1}3!!\frac{\partial^2 F_{g_1}^{2D}}{\partial I_1}\sum_{l\geqslant1} \frac{I_{l+1}}{1-I_1}\frac{\partial F_{g-g_1}^{2D}}{\partial I_l},\nonumber
\end{align}
for $m\geqslant3$:
\begin{align}
0=&\sum_{n\geqslant 1}\frac{(2n+2m+1)!!}{(2n-1)!!}(I_n-\delta_{n,1})\frac{\partial F_g^{2D}}{\partial I_{m+n}} +\frac{1}{2}\sum_{k=1}^{m-2}(2k+1)!!(2m-2k-1)!!\frac{\partial^2 F_{g-1}^{2D}}{\partial I_k\partial I_{m-k-1}} \\
&+(2m-1)!!\sum_{l\geqslant1}\frac{I_{l+1}}{1-I_1}\frac{\partial^2 F_{g-1}^{2D}}{\partial I_l\partial I_{m-1}} +\frac{(2m-1)!!}{2}\frac{\partial F_{g-1}^{2D}}{\partial I_{m-2}} \nonumber\\ &+\frac{1}{2}\sum_{g_1=1}^{g-1}\sum_{k=1}^{m-2}(2k+1)!!(2m-2k-1)!!\frac{\partial^2 F_{g_1}^{2D}}{\partial I_k}\frac{\partial F_{g-g_1}^{2D}}{\partial I_{m-k-1}} +\sum_{g_1=1}^{g-1}(2m-1)!!\frac{\partial^2 F_{g_1}^{2D}}{\partial I_{m-1}} \sum_{l\geqslant1}\frac{I_{l+1}}{1-I_1}\frac{\partial F_{g-g_1}^{2D}}{\partial I_l}.\nonumber
\end{align}
Together with Theorem \ref{IZ}, we complete the proof.
\end{proof}
We explain how to use these equations to solve $F^{2D}_{g}$ for $g\geqslant2$. By above theorem,
\begin{align}
\frac{\partial F^{2D}_g}{\partial J_{3g-2}}=&\frac{6g-7}{(1-3J_1)^2} \frac{\partial F^{2D}_{g-1}}{\partial J_{3(g-1)-2}}+\frac{1}{2}\frac{1}{1-3J_1}\left(\sum_{l_1=1}^{3g-5}\frac{\partial^2 F^{2D}_{g-1}}{\partial J_{l_1}\partial J_{3g-4-l_1}} +\sum_{g_1=1}^{g-1}\sum_{l_1=1}^{3g-5} \frac{\partial F^{2D}_{g_1}}{\partial J_{l_1}}\frac{\partial F^{2D}_{g-g_1}}{\partial J_{3g-4-l_1}}\right),\label{eqn:Fg1pt}\\
\frac{\partial F^{2D}_g}{\partial J_{3g-3}}=&\frac{5J_{2}}{1-3J_1}\frac{\partial F^{2D}_g}{\partial J_{3g-2}}+\frac{5J_{2}}{(1-3J_1)^2}\left(\frac{\partial F^{2D}_{1}}{\partial J_{1}}\frac{\partial F^{2D}_{g-1}}{\partial J_{3(g-1)-2}}
+\frac{\partial^2 F^{2D}_{g-1}}{\partial J_{1}\partial J_{3(g-1)-2}}\right)\\
&+\frac{6g-9}{(1-3J_1)^2} \frac{\partial F^{2D}_{g-1}}{\partial J_{3g-6}}
+\frac{1}{2}\frac{1}{1-3J_1}\left(\sum_{g_1=1}^{g-1}\sum_{l_1=1}^{3g-6} \frac{\partial F^{2D}_{g_1}}{\partial J_{l_1}}\frac{\partial F^{2D}_{g-g_1}}{\partial J_{3g-5-l_1}}
+\sum_{l_1=1}^{3g-6}\frac{\partial^2 F^{2D}_{g-1}}{\partial J_{l_1}\partial J_{3g-5-l_1}}\right),\nonumber \\
\vdots&\nonumber\\
\frac{\partial F^{2D}_g}{\partial J_{4}}=&\frac{5}{(1-3J_1)^2} \frac{\partial F^{2D}_{g-1}}{\partial J_{1}} +\sum_{p=2}^{3g-5}\frac{(2p+1)J_{p}}{1-3J_1}\frac{\partial F^{2D}_g}{\partial J_{p+3}} +\sum_{g_1=1}^{g-1} \frac{\partial F^{2D}_{g_1}}{\partial J_{2}} \sum_{l=1}^{3g-3g_1-2}\frac{(2l+3)J_{l+1}}{(1-3J_1)^2}\frac{\partial F^{2D}_{g-g_1}}{\partial J_{l}} \\
&+\sum_{l=1}^{3g-6}\frac{(2l+3)J_{l+1}}{(1-3J_1)^2} \frac{\partial^2 F^{2D}_{g-1}}{\partial J_{l}\partial J_{2}} +\frac{1}{2}\frac{1}{1-3J_1}\left(\sum_{g_1=1}^{g-1} \frac{\partial F^{2D}_{g_1}}{\partial J_{1}}\frac{\partial F^{2D}_{g-g_1}}{\partial J_{1}}
+\frac{\partial^2 F^{2D}_{g-1}}{\partial J_{1}^2} \right),\nonumber\\
\frac{\partial F^{2D}_g}{\partial J_{3}}=&\sum_{p=2}^{3g-4}(2p+1)\frac{J_{p}}{1-3J_1}\frac{\partial F^{2D}_g}{\partial J_{2+p}}
+\sum_{g_1=1}^{g-1} \frac{\partial F^{2D}_{g_1}}{\partial J_{1}} \sum_{l=1}^{3g-3g_1-2}(2l+3)\frac{J_{l+1}}{(1-3J_1)^2}\frac{\partial F^{2D}_{g-g_1}}{\partial J_{l}} \\
&+\sum_{l=1}^{3g-5}(2l+3)\frac{J_{l+1}}{(1-3J_1)^2} \frac{\partial^2 F^{2D}_{g-1}}{\partial J_l\partial J_1}
+3\sum_{l=1}^{3g-5}(2l+3)\frac{J_{l+1}}{(1-3J_1)^3}\frac{\partial F^{2D}_{g-1}}{\partial J_l}, \nonumber\\
\frac{\partial F^{2D}_g}{\partial J_{2}}=&\sum_{p=2}^{3g-3}(2p+1)\frac{J_{p}}{1-3J_1}\frac{\partial F^{2D}_g}{\partial J_{1+p}}
+\frac{1}{2}\frac{1}{1-3J_1}\left(\sum_{l=1}^{3g-4}(2l+3)\frac{J_{l+1}}{1-3J_1} \frac{\partial}{\partial J_l} \right)^2\left(F^{2D}_{g-1}\right)\\
&+\frac{1}{2}\frac{1}{1-3J_1}\sum_{g_1=1}^{g-1} \sum_{l_1=1}^{3g_1-2}(2l_1+3) \frac{J_{l_1+1}}{1-3J_1}\frac{\partial F^{2D}_{g_1}}{\partial J_{l_1}} \sum_{l_2=1}^{3g-3g_1-2}(2l_2+3)\frac{J_{l_2+1}}{1-3J_1}\frac{\partial F^{2D}_{g-g_1}}{\partial J_{l_2}},
\nonumber \\
\frac{\partial F^{2D}_g}{\partial J_{1}}=&\sum_{p=2}^{3g-2}(2p+1)\frac{J_{p}}{1-3J_1}\frac{\partial F^{2D}_g}{\partial J_{p}}.
\end{align}
Since $F^{2D}_{g}$ is weighted homogeneous of degree $3g-3$ with $\deg J_k=k-1$, one has:
\begin{equation}
F_{g}^{2D}=\frac{1}{3g-3}\sum_{i=1}^{3g-2}(i-1)J_{i}\frac{\partial F^{2D}_{g}}{\partial J_i}.
\end{equation}

\begin{Example}Let us compute $F^{2D}_{2}$ by the above procedure.
\begin{align}
\frac{\partial F^{2D}_2}{\partial J_{4}}=&\frac{5}{(1-3J_1)^2} \frac{\partial F^{2D}_{1}}{\partial J_{1}}+\frac{1}{2}\frac{1}{1-3J_1}\left(\frac{\partial^2 F^{2D}_{1}}{\partial J_{1}^2} +\left(\frac{\partial F^{2D}_{1}}{\partial J_{1}}\right)^2\right)=\frac{105}{128}\frac{1}{(1-3J_1)^3},\\
\frac{\partial F^{2D}_2}{\partial J_{3}}=&\frac{5J_{2}}{1-3J_1}\frac{\partial F^{2D}_2}{\partial J_{4}}
+\frac{5J_{2}}{(1-3J_1)^2}\left(\left(\frac{\partial F^{2D}_{1}}{\partial J_{1}}\right)^2 + \frac{\partial^2 F^{2D}_{1}}{\partial J_1^2} \right) +\frac{15J_{2}}{(1-3J_1)^3}\frac{\partial F^{2D}_{1}}{\partial J_1} \\
=&\frac{1015}{128}\frac{J_2}{(1-3J_1)^4},\nonumber\\
\frac{\partial F^{2D}_2}{\partial J_{2}}=&\frac{5J_{2}}{1-3J_1}\frac{\partial F^{2D}_2}{\partial J_{3}}+\frac{7J_{3}}{1-3J_1}\frac{\partial F^{2D}_2}{\partial J_{4}}
+\frac{1}{2}\frac{1}{1-3J_1}\left(\frac{5J_{2}}{1-3J_1} \frac{\partial}{\partial J_1} +\frac{7J_{3}}{1-3J_1} \frac{\partial}{\partial J_2} \right)^2\left(F^{2D}_{1}\right)\\
&+\frac{1}{2}\frac{25J_2^2}{(1-3J_1)^3}\left(\frac{\partial F^{2D}_{1}}{\partial J_1}\right)^2\nonumber\\
=&\frac{6300}{128}\frac{J_2^2}{(1-3J_1)^5}+\frac{1015}{128}\frac{J_3}{(1-3J_1)^4}.\nonumber
\end{align}
so we have
\begin{align}
F_{2}^{2D}=&\frac{1}{3}\left(\frac{105}{128}\frac{3J_4}{(1-3J_1)^3} +\frac{1015}{128}\frac{2J_2J_3}{(1-3J_1)^4}+\frac{6300}{128}\frac{J_2^3}{(1-3J_1)^5} +\frac{1015}{128}\frac{J_2J_3}{(1-3J_1)^4} \right)\\
=&\frac{2100}{128}\frac{J_2^3}{(1-3J_1)^5} +\frac{1015}{384}\frac{J_2J_3}{(1-3J_1)^4} +\frac{105}{128}\frac{J_4}{(1-3J_1)^3}.
\end{align}
This can be written in $I$-coordinates as:
\begin{align}
F^{2D}_{2}=&\frac{7}{1440}\check{I}_2^3+\frac{29}{5760}\check{I}_2\check{I}_3+\frac{1}{1152}\check{I}_4,
\end{align}
where
\begin{equation}
\check{I}_j=\frac{I_j}{(1-I_1)^{(2j+1)/3}}.
\end{equation}
Similarly, we have with the help of a computer program:
\begin{align}
F^{2D}_{3}=&\frac{245}{20736}\check{I}_2^6+\frac{193}{6912}\check{I}_2^4\check{I}_3 +\frac{205}{13824}\check{I}_2^2\check{I}_3^3 +\frac{583}{580608}\check{I}_3^3 +\frac{53}{6912}\check{I}_2^3\check{I}_4+\frac{1121}{241920}\check{I}_2\check{I}_3\check{I}_4 \\ &+\frac{607}{2903040}\check{I}_4^2+\frac{17}{11520}\check{I}_2^2\check{I}_5 +\frac{503}{1451520}\check{I}_3\check{I}_5 +\frac{77}{414720}\check{I}_2\check{I}_6+\frac{1}{82944}\check{I}_7,\nonumber\\
F^{2D}_{4}=&\frac{259553}{2488320}\check{I}_2^9+\frac{475181}{1244160}\check{I}_2^7\check{I}_3 +\frac{145693}{331776}\check{I}_2^5\check{I}_3^2 +\frac{43201}{248832}\check{I}_2^3\check{I}_3^3 +\frac{134233}{7962624}\check{I}_2\check{I}_3^4+\frac{14147}{124416}\check{I}_2^6\check{I}_4 \\ &+\frac{83851}{414720}\check{I}_2^4\check{I}_3\check{I}_4+\frac{26017}{331776}
\check{I}_2^2\check{I}_3^2\check{I}_4 +\frac{185251}{49766400}\check{I}_3^3\check{I}_4 +\frac{5609}{276480}\check{I}_2^3\check{I}_4^2 +\frac{177}{20480}\check{I}_3\check{I}_4^2 \nonumber\\ &+\frac{175}{995328}\check{I}_4^3+\frac{21329}{829440}\check{I}_2^5\check{I}_5 +\frac{13783}{414720}\check{I}_2^3\check{I}_3\check{I}_5 +\frac{1837}{259200}
\check{I}_2\check{I}_3^2\check{I}_5 +\frac{7597}{1382400}\check{I}_2^2\check{I}_4\check{I}_5\nonumber\\
&+\frac{719}{829440}\check{I}_3\check{I}_4\check{I}_5 +\frac{533}{1935360}\check{I}_2\check{I}_5^2 +\frac{2471}{552960}\check{I}_2^4\check{I}_6+\frac{7897}{2073600}\check{I}_2^2\check{I}_3\check{I}_6 +\frac{1997}{6635520}\check{I}_3^2\check{I}_6\nonumber\\
&+\frac{1081}{2322432}\check{I}_2\check{I}_4\check{I}_6 +\frac{487}{18579456}\check{I}_5\check{I}_6 +\frac{4907}{8294400}\check{I}_2^3\check{I}_7+\frac{16243}{58060800}\check{I}_2\check{I}_3\check{I}_7 +\frac{1781}{92897280}\check{I}_4\check{I}_7\nonumber\\
&+\frac{53}{921600}\check{I}_2^2\check{I}_8 +\frac{947}{92897280}\check{I}_3\check{I}_8+\frac{149}{39813120}\check{I}_2\check{I}_9 +\frac{1}{7962624}\check{I}_{10}.\nonumber
\end{align}
\end{Example}

\section{Special Deformation of the Airy Curve in Renormalized Coupling Constants}

\label{sec:Ghost}

In this Section we reformulate the special deformation of Airy curve
introduced in the setting of 2D quantum gravity in \cite{Zhou5} using renormalized coupling constants.
It is remarkable that we need to consider ghost variables introduced in \cite{Zhou6}
and consider their renormalizations.

\subsection{Special deformation of the Airy curve and ghost variables in 2D topological gravity}

In \cite{Zhou5}, the second named author studied the following special deformation of Airy curve
defined by:
\be \label{eqn:Airy}
w^{2D}:=z^{\frac{1}{2}}-\sum_{n=0}^{\infty}\frac{t_n}{(2n-1)!!}z^{n-\frac{1}{2}} -\sum_{n=0}^{\infty}(2n+1)!!\frac{\partial F_0^{2D}}{\partial t_n}z^{-n-\frac{3}{2}}.
\ee
When all $t_n$ are set to be equal to $0$, one gets a plane algebraic curve
\be
(w^{2D})^2=z.
\ee
This is called the Airy curve because its quantization gives the Airy equation.

To make sense of the right-hand side of \eqref{eqn:Airy},
the following extension of the free energy $F^{2D}_0$ was introduced in \cite{Zhou6}:
\be \label{eqn:Tilde-F02D}
\tilde{F}_0^{2D} = F_0^{2D} + \sum_{n \geq 0} (-1)^n (t_n-\delta_{n,1}) t_{-n-1},
\ee
where $t_{-1}, t_{-2}, \dots$ are formal variables referred to as the ghost variables.
The reason for adding these extra terms to $F_0^{2D}$ is that the moduli spaces
$\Mbar_{0,1}$ and $\Mbar_{0,2}$ do not make sense geometrically,
however,
from the following formula for $n \geq 3$:
\be
\sum_{m_1, \dots, m_n \geq 0} \corr{\tau_{m_1} \cdots \tau_{m_n}}_0 x_1^{m_1} \cdots x_n^{m_n}
= (x_1+ \cdots +x_n)^{n-3}.
\ee
it is customary to use the following conventions:
\ben
&& \corr{\tau_n}_0 = \delta_{n, -2}, \\
&& \corr{\tau_k\tau_{-k-1}}_0= (-1)^k.
\een
Consider the generating series
\be
\begin{split}
\sum_{n \in \bZ} \frac{\pd \tilde{F}_0}{\pd t_n} \cdot x^{n+1}
= & \sum_{n \geq 0} (-1)^n (t_{n}-\delta_{n,1}) x^{-n}
+  \sum_{n \geq 0} \biggl( \frac{\pd F_0}{\pd t_n}(\bt)  + (-1)^n t_{-n-1} \biggr) \cdot x^{n+1}.
\end{split}
\ee
and its Laplace transform:
\be
\begin{split}
& \sum_{n \in \bZ} \frac{\pd \tilde{F}_0}{\pd t_n} \cdot  \int_0^{\infty}
\frac{1}{\sqrt{x}} e^{-zx}x^{n+1} dx \\
= & \sum_{n \geq 0} (-1)^n(t_{n}-\delta_{n,1})\Gamma(-n+\frac{1}{2}) z^{n-1/2}
+  \sum_{n \geq 0} \biggl( \frac{\pd F_0}{\pd t_n}(\bt)  + (-1)^n t_{-n-1} \biggr) \cdot  \Gamma(n+\frac{3}{2}) z^{-n-3/2}\\
= & \sum_{n \in \mathbb{Z}} (t_{n}-\delta_{n,1})(-1)^n \Gamma(-n+\frac{1}{2}) z^{n-1/2}
+  \sum_{n \geq 0}\biggl( \frac{\pd F_0}{\pd t_n}(\bt)  + 2(-1)^n t_{-n-1} \biggr) \cdot  \Gamma(n+\frac{3}{2}) z^{-n-3/2}.
\end{split}
\ee
After setting $t_{-n} = 0$ for $n \geq 1$,
the right-hand side of the last equality gives us $w^{2D}$,
up to a factor of $\sqrt{\pi}$.

\subsection{Renormalized ghost variables}

From the above discussions we are lead to consider:
\be
\sum_{n\in \mathbb{Z}}(-1)^nt_n\Gamma(-n+\frac{1}{2})z^{n-\frac{1}{2}}.
\ee
This seems to play the role of $S(z)$ in 1D topological gravity.
Recall the renomralization of the coupling constants in $S(z)$ leads us to the following identity
\begin{equation}   \label{eqn:S-I}
\sum_{n=0}^{\infty}(t_{n}-\delta_{n,1})\frac{x^{n+1}}{(n+1)!}
= I_{-1}-\half I_0^2 +\sum_{n=2}^{\infty}(I_{n-1}-\delta_{n,2}) \frac{(x-I_0)^{n}}{n!},
\end{equation}
where $I_{-1}$ is defined by:
\ben
I_{-1} = \sum_{n =0}^\infty t_n \frac{I_0^{n+1}}{(n+1)!}.
\een
Also recall for $k \geq 0$,
\ben
I_k= \sum_{n \geq 0} t_{n+k} \frac{I_0^n}{n!}.
\een
Therefore,
after the introduction of the ghost variables
$\{t_{-n}\}_{n\in \mathbb{Z}_{+}}$,
we can introduce the renormalized ghost variables  $\{I_{-k}\}_{k\in \mathbb{Z}_{+}}$
as follows:
\be
I_{-k}=\sum_{n=0}^{\infty}t_{n-k} \frac{I_0^n}{n!}.
\ee
Recall for $n \geq 0$ we have \cite[Prop. 2.4]{Zhou1}:
\be \label{eqn:T-in-I}
t_n = \sum_{k=0}^\infty I_{n+k}\frac{(-1)^k I_0^k}{k!}.
\ee

\begin{Proposition}
The identity \eqref{eqn:T-in-I} holds for all $n \in \bZ$.
\end{Proposition}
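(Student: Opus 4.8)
The plan is to verify the identity directly from the definitions, using the observation that $I_m$ is given by one and the same formula $I_m = \sum_{j\geq 0} t_{m+j}\, I_0^j/j!$ whether $m\geq 0$ or $m<0$ (for $m<0$ this is precisely the definition of the renormalized ghost variable $I_{-k}$ just introduced). Substituting this, with $m = n+k$, into the right-hand side of \eqref{eqn:T-in-I} produces the double sum
\be
\sum_{k\geq 0}\frac{(-1)^k I_0^k}{k!}\sum_{j\geq 0} t_{n+k+j}\frac{I_0^j}{j!},
\ee
and the first step is to reorganize it according to the total power of $I_0$.

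Setting $\ell = k+j$, the coefficient of $I_0^\ell$ becomes $t_{n+\ell}\sum_{k=0}^{\ell}\frac{(-1)^k}{k!(\ell-k)!} = \frac{t_{n+\ell}}{\ell!}\sum_{k=0}^{\ell}\binom{\ell}{k}(-1)^k = \frac{t_{n+\ell}}{\ell!}(1-1)^\ell$, which vanishes for $\ell\geq 1$ and equals $t_n$ for $\ell = 0$. Hence the double sum collapses to $t_n$, which establishes \eqref{eqn:T-in-I} for every $n\in\bZ$. The computation is formally identical to the one already carried out in \cite{Zhou1} for $n\geq 0$; only the range of the index $n$ has been enlarged, so strictly no new work is needed beyond recording this.

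Conceptually this is just Taylor's theorem. Writing $\phi(x) = \sum_{n\geq 0} t_n x^n/n!$, one has $I_m = \phi^{(m)}(I_0)$ for all $m\in\bZ$, where for $m<0$ the symbol $\phi^{(m)}$ denotes the formal $|m|$-fold antiderivative $\sum_{j\geq 0} t_{j-|m|}\, x^j/j!$ with no added integration constants, which reproduces the definition of $I_{-|m|}$; then $\sum_{k\geq 0} I_{n+k}(-I_0)^k/k!$ is exactly the Taylor expansion of $\phi^{(n)}$ about $x = I_0$ evaluated at $x=0$, and $\phi^{(n)}(0) = t_n$. Presenting the argument in this form makes the statement transparent, but the honest proof is the direct reindexing above.

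The only point requiring care — the main, and rather mild, obstacle — is the justification of the rearrangement of the double series. This is handled as in \cite{Zhou1}: assigning $\deg I_0 = -1$ and $\deg t_m = m-1$, now also for $m<0$ so that $\deg t_{-j} = -j-1$, every monomial $I_0^{k+j}\, t_{n+k+j}$ occurring above is weighted homogeneous of degree $n-1$, so the manipulation takes place inside a single graded piece of the appropriate completion of the ring of formal power series in $\{t_m\}_{m\in\bZ}$, where finite reindexing of the summation is legitimate. Thus the extension from $n\geq 0$ to all $n\in\bZ$ introduces no new analytic input.
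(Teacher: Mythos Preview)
Your proof is correct; the paper states this proposition without proof, presumably regarding it as an immediate consequence of the same computation carried out in \cite{Zhou1} for $n\ge 0$, which is exactly what you spell out. Your direct reindexing argument is the natural one and nothing more is needed.

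One small remark on your last paragraph: the observation that every term has weighted degree $n-1$ is true but does not by itself justify the rearrangement, since a single graded piece still contains infinitely many monomials. The actual reason the double sum is well defined and may be reordered is that $I_0$ is a formal power series in $\{t_m\}_{m\ge 0}$ with no constant term, so $I_0^{\ell}$ lies in the $\ell$-th power of the maximal ideal and only finitely many terms contribute to any fixed monomial in the $t$'s. This is the same mechanism as in \cite{Zhou1}; your conclusion stands, but you may want to rephrase the justification accordingly.
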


Next we present an analogue of \eqref{eqn:S-I} in 2D topological gravity:

\begin{Proposition} \label{prop:t-I}
The following identity holds:
\be
\sum_{n\in \mathbb{Z}}(-1)^nt_n\Gamma(-n+\frac{1}{2})z^{n-\frac{1}{2}}
= \sum_{p\in \mathbb{Z}} (-1)^pI_p\Gamma(-p+\frac{1}{2})\left(z-I_0\right)^{p-\frac{1}{2}},
\ee
where we use the following convention for expanding $\left(z-I_0\right)^{p-\frac{1}{2}}$:
\be
\left(z-I_0\right)^{p-\frac{1}{2}}
= z^{p-\frac{1}{2}} \sum_{k=0}^{\infty}\binom{p-\frac{1}{2}}{k}\left(\frac{-I_0}{z}\right)^k.
\ee
\end{Proposition}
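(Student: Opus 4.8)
The plan is to evaluate the right-hand side by inserting the prescribed expansion of $(z-I_0)^{p-\frac{1}{2}}$ and collecting, for each $n\in\mathbb{Z}$, the coefficient of $z^{n-\frac{1}{2}}$; this coefficient will turn out to be $(-1)^{n}t_{n}\Gamma(-n+\frac{1}{2})$, which is exactly the coefficient on the left. First I would write
\[
\sum_{p\in\mathbb{Z}}(-1)^pI_p\Gamma(-p+\tfrac12)(z-I_0)^{p-\frac12}
=\sum_{p\in\mathbb{Z}}\sum_{k\geqslant0}(-1)^{p+k}I_p\,\Gamma(-p+\tfrac12)\binom{p-\frac12}{k}I_0^{k}\,z^{p-k-\frac12},
\]
and reindex by $n=p-k$. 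Since $(-1)^{p+k}=(-1)^{n+2k}=(-1)^{n}$, the coefficient of $z^{n-\frac12}$ becomes
\[
(-1)^{n}\sum_{k\geqslant0}I_{n+k}\,\Gamma(-n-k+\tfrac12)\binom{n+k-\frac12}{k}I_0^{k}.
\]

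The key computational identity is
\[
\Gamma(-n-k+\tfrac12)\binom{n+k-\frac12}{k}=\frac{(-1)^{k}}{k!}\,\Gamma(-n+\tfrac12),
\]
which I would obtain by iterating $\Gamma(x+1)=x\Gamma(x)$: one has $\binom{n+k-\frac12}{k}=\frac{1}{k!}\prod_{j=1}^{k}(n+j-\tfrac12)$ and $\Gamma(-n+\tfrac12)=\prod_{j=1}^{k}(-n-j+\tfrac12)\cdot\Gamma(-n-k+\tfrac12)$, while $n+j-\tfrac12=-(-n-j+\tfrac12)$ supplies the sign $(-1)^{k}$. Substituting this identity, the coefficient of $z^{n-\frac12}$ collapses to
\[
(-1)^{n}\Gamma(-n+\tfrac12)\sum_{k\geqslant0}I_{n+k}\frac{(-1)^{k}I_0^{k}}{k!},
\]
and the inner series is precisely the right-hand side of \eqref{eqn:T-in-I}, which by the preceding Proposition is valid for all $n\in\mathbb{Z}$ and equals $t_{n}$. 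Summing over $n\in\mathbb{Z}$ then reproduces $\sum_{n\in\mathbb{Z}}(-1)^{n}t_{n}\Gamma(-n+\tfrac12)z^{n-\frac12}$, completing the proof.

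The only delicate point is the rearrangement of the double sum over $p$ and $k$. Here I would use that $I_0$ is a formal power series in the ordinary and ghost coupling constants with vanishing constant term, so that $I_0^{k}$ affects only monomials of total degree at least $k$; consequently, for each fixed $n$ the series $\sum_{k\geqslant0}I_{n+k}(-1)^{k}I_0^{k}/k!$ is a well-defined formal power series, and the asserted equality is an identity of formal series $\sum_{n\in\mathbb{Z}}c_{n}z^{n-\frac12}$ whose coefficients $c_{n}$ are such formal power series, to be verified monomial by monomial. Granting this, the interchange of summations is legitimate, and everything else is the elementary Gamma-function bookkeeping above together with a direct appeal to the preceding Proposition.
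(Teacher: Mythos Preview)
Your proof is correct and follows essentially the same route as the paper's: the paper starts from the left-hand side, substitutes $t_n=\sum_{k\geqslant0}I_{n+k}(-1)^kI_0^k/k!$ via the preceding Proposition, reindexes by $p=n+k$, and uses the identical Gamma-function identity (written there as $\frac{\Gamma(-p+k+\frac12)}{k!\,\Gamma(-p+\frac12)}=(-1)^k\binom{p-\frac12}{k}$) to recognize the inner sum as the prescribed expansion of $(z-I_0)^{p-\frac12}$. You simply run the same computation in the opposite direction, with the added (and welcome) remark justifying the rearrangement of the double sum as an identity of formal power series; otherwise the two arguments are the same.
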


\begin{proof}
By \eqref{eqn:T-in-I},
\begin{align*}
\sum_{n\in \mathbb{Z}}(-1)^nt_n\Gamma(-n+\frac{1}{2})z^{n-\frac{1}{2}}
=&
\sum_{n\in \mathbb{Z}}(-1)^n\sum_{k=0}^{\infty}\frac{(-1)^kI_0^k}{k!}I_{n+k}\Gamma(-n+\frac{1}{2})z^{n-\frac{1}{2}}\\
=&\sum_{p\in \mathbb{Z}} (-1)^pI_p\Gamma(-p+\frac{1}{2})
\sum_{k=0}^{\infty}\frac{\Gamma(-p+k+\frac{1}{2})}{k!\Gamma(-p+\frac{1}{2})}I_0^kz^{p-k-\frac{1}{2}}\\
=&\sum_{p\in \mathbb{Z}} (-1)^pI_p\Gamma(-p+\frac{1}{2})z^{p-\frac{1}{2}}
\sum_{k=0}^{\infty}\binom{p-\frac{1}{2}}{k}\left(\frac{-I_0}{z}\right)^k\\
=&\sum_{p\in \mathbb{Z}} (-1)^pI_p\Gamma(-p+\frac{1}{2})z^{p-\frac{1}{2}}
\left(1-\frac{I_0}{z}\right)^{p-\frac{1}{2}}\\
=&\sum_{p\in \mathbb{Z}} (-1)^pI_p\Gamma(-p+\frac{1}{2})\left(z-I_0\right)^{p-\frac{1}{2}}.
\end{align*}
\end{proof}

\subsection{$F_0^{2D}$ in $I$-coordinates with renormalized ghost variables}
When we impose the condition $t_{-n}=0$ for $n>0$,
the ghost variables $I_{-n}$ can be expressed as formal series in $\{t_{m}\}_{m\geq0}$ by the following equation:
\be
I_{-n}|_{t_{-m} = 0, m \geq 1}=\sum_{k=0}^{\infty}{t_{k}}\frac{I_0^{k+n}}{(k+n)!}.
\ee
The case of $I_{-1}$ is first introduced in \cite[(26)]{Zhou1}, and can be also expressed as:
\be\label{I-1}
I_{-1}|_{t_{-m} = 0, m \geq 1} = \sum_{k=0}^{\infty} I_k\frac{(-1)^k I_{0}^{k+1}}{(k+1)!}.
\ee
The following result generalizes this identity:

\begin{Proposition}
Under the same condition,
one can also rewrite $I_{-n}$ in terms of $\{I_{m}\}_{m\geq0}$ as follows:
\be
I_{-n}|_{t_{-m} = 0, m \geq 1}=\sum_{k=0}^{\infty}I_{k}\frac{(-1)^{k}I_{0}^{k+n}}{k!(n-1)!(k+n)}. \label{ghostIinI}
\ee
\end{Proposition}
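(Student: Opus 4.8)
The plan is to reduce \eqref{ghostIinI} to an elementary closed-form binomial identity by substituting, into the last expression for $I_{-n}$ in \eqref{eqn:ghostI0}, the known expansion \eqref{eqn:T-in-I} of the bare variables $t_k$ in the renormalized variables $\{I_m\}_{m\ge 0}$. Explicitly, from $I_{-n}=\sum_{k\ge 0}t_k\,\dfrac{I_0^{k+n}}{(k+n)!}$ and $t_k=\sum_{j\ge 0}I_{k+j}\dfrac{(-1)^jI_0^j}{j!}$ one gets
\[
I_{-n}=\sum_{k\ge 0}\sum_{j\ge 0}I_{k+j}\,\frac{(-1)^j\,I_0^{\,j+k+n}}{j!\,(k+n)!}
=\sum_{p\ge 0}(-1)^p I_p\, I_0^{\,p+n}\sum_{k=0}^{p}\frac{(-1)^k}{(k+n)!\,(p-k)!},
\]
where in the second step I collect the coefficient of $I_p$ by setting $p=k+j$. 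This rearrangement is legitimate in the ring of formal power series in $I_0,I_1,I_2,\dots$, because each monomial receives contributions from only finitely many pairs $(k,j)$; I would spell this out briefly. Comparing with the right-hand side of \eqref{ghostIinI}, the proposition is equivalent to the scalar identity
\[
\sum_{k=0}^{p}\frac{(-1)^k}{(k+n)!\,(p-k)!}=\frac{1}{p!\,(n-1)!\,(p+n)}\qquad (p\ge 0,\ n\ge 1).
\]

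To prove this, multiply by $p!$ and write $\dfrac{p!}{(k+n)!\,(p-k)!}=\binom{p}{k}\dfrac{k!}{(k+n)!}$, so the claim becomes $\sum_{k=0}^{p}(-1)^k\binom{p}{k}\dfrac{k!}{(k+n)!}=\dfrac{1}{(n-1)!\,(p+n)}$. I would then use the Beta-function identity $\dfrac{k!}{(k+n)!}=\dfrac{1}{(n-1)!}\int_0^1 x^k(1-x)^{n-1}\,dx$ together with the binomial theorem $\sum_{k=0}^{p}(-1)^k\binom{p}{k}x^k=(1-x)^p$: swapping the finite sum with the integral turns the left-hand side into $\dfrac{1}{(n-1)!}\int_0^1(1-x)^{p+n-1}\,dx=\dfrac{1}{(n-1)!\,(p+n)}$, as required. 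The same identity could instead be obtained by induction on $p$ via Pascal's rule, or by recognising it as the $p$-th forward difference of $k\mapsto 1/(k+n)!$.

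An alternative, induction-free route I would mention is to verify that both sides of \eqref{ghostIinI}, viewed as formal series in $(I_0,I_1,\dots)$, have the same $\partial/\partial I_0$-derivative and the same value at $I_0=0$. The value at $I_0=0$ is $0$ on both sides, since every term carries a factor $I_0^{k+n}$ with $k+n\ge 1$. For the derivative one computes $\partial t_k/\partial I_0=\delta_{k,0}-t_{k+1}$ (either from the string vector field \eqref{string} or directly from \eqref{eqn:T-in-I}), uses the critical-point equation \eqref{eqn:Critical} in the form $\sum_{k\ge0}t_k I_0^k/k!=I_0$, and on the renormalized side uses $\sum_{k\ge0}I_k(-I_0)^k/k!=t_0$ (the $n=0$ instance of \eqref{eqn:T-in-I}); both derivatives then come out equal to $\dfrac{I_0^{n}}{n!}+t_0\dfrac{I_0^{n-1}}{(n-1)!}$.

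I expect no serious obstacle: the only computational ingredient is the classical binomial identity above, and the rest is reorganising convergent formal series. The point requiring care is purely bookkeeping --- in a sum such as $\sum_{k\ge0}I_k(\cdots)$ the symbol $I_0$ plays two roles, a coordinate and the $k=0$ summand, so when differentiating with respect to $I_0$ the $k=0$ term contributes an extra piece; and one should confirm that all the series being manipulated are locally finite in total degree before interchanging summations. These are routine but easy to get wrong.
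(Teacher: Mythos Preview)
Your main approach is exactly the paper's: substitute \eqref{eqn:T-in-I} into the last expression of \eqref{eqn:ghostI0}, collect the coefficient of $I_p$, and reduce to the elementary identity $\sum_{m+l=k}\frac{(-1)^l}{l!(m+n)!}=\frac{(-1)^k}{(n-1)!\,k!\,(n+k)}$, which the paper proves via Pascal's rule (one of the alternatives you mention) rather than your Beta-integral argument. Your additional $\partial/\partial I_0$ route is not in the paper, but it checks out and is a pleasant alternative.
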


\begin{proof}
By \eqref{eqn:T-in-I} again,
 we get:
\ben
I_{-n}|_{t_{-m} = 0, m \geq 1}&=&\sum_{m=0}^{\infty}{t_{m}}\frac{I_0^{m+n}}{(m+n)!} \\
&=&\sum_{m=0}^{\infty}\sum_{l=0}^{\infty}I_{m+l}\frac{(-1)^lI_{0}^l}{l!}\frac{I_0^{m+n}}{(m+n)!} \\
&=&\sum_{k=0}^{\infty}I_{k}I_{0}^{k+n}\sum_{m+l=k}\frac{(-1)^l}{l!(m+n)!} \\
&=&\sum_{k=0}^{\infty}I_{k}\frac{(-1)^{k}I_{0}^{k+n}}{k!(n-1)!(k+n)}.
\een
We have used the following identity:
\be
\sum_{m+l=k}\frac{(-1)^l}{l!(m+n+1)!}=\frac{(-1)^k}{n!k!(n+k+1)},\label{eqn:com}
\ee
This can be proved as follows. By the following well known identity:
\be
\binom{n+k+1}{l}=\binom{n+k}{l-1}+\binom{n+k}{l},
\ee
we have
\begin{align*}
\sum_{l=0}^{k}\binom{n+k+1}{l}(-1)^l=&\sum_{l=1}^{k}\binom{n+k}{l-1}(-1)^l+\sum_{l=0}^{k}\binom{n+k}{l}(-1)^l\\
=&\sum_{l=0}^{k-1}\binom{n+k}{l}(-1)^{l+1}+\sum_{l=0}^{k}\binom{n+k}{l}(-1)^l\\
=&\binom{n+k}{k}(-1)^k.
\end{align*}
Then, divide $(n+k+1)!$ on both side of this equation, we get \eqref{eqn:com}.
\end{proof}
We also denote the shifted $I$-coordinates by $\tilde{I}_n$, the definitions are as follows:
\be
\tilde{I}_{n}:=\sum_{k\geqslant0}\frac{I_0^k}{k!}\tilde{t}_{k+n}=I_{n}-\frac{I_0^{1-n}}{(1-n)!}.
\ee
where $\tilde{t}_k=t_k-\delta_{k,1}$.
Then one can see by above Proposition,
\be\label{eqn:shiftedghost}
\tilde{I}_{-n}|_{t_{-m} = 0, m \geq 1}=I_{-n}|_{t_{-m} = 0, m \geq 1}-\frac{I_0^{1+n}}{(1+n)!}
=\sum_{k=1}^{\infty}\tilde{I}_{k}\frac{(-1)^{k}I_{0}^{k+n}}{k!(n-1)!(k+n)}.
\ee
We can express the free energy of genus 0 with these notations as below:
\begin{Theorem}
The following formula for $F_0^{2D}$ holds:
\be \label{eqn:F02D-Ghost}
F^{2D}_0=\frac{1}{2}\sum_{n\geqslant1}(-1)^n\tilde{I}_{n}\tilde{I}_{-n-1}|_{t_{-m} = 0, m \geq 1}.
\ee
\end{Theorem}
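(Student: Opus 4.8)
The plan is to deduce \eqref{eqn:F02D-Ghost} from the formula \eqref{eqn:F02D} for $F^{2D}_0$ already established in Theorem \ref{IZ}, together with the explicit expansions \eqref{I-1} and \eqref{ghostIinI} of the renormalized ghost variables in the $I$-coordinates. Throughout, all sums are manipulated as formal power series in $\{t_m\}_{m\geqslant0}$, and the ghost variables $I_{-1},I_{-2},\dots$ are understood under the standing convention $t_{-n}=0$ for $n>0$ (i.e.\ via \eqref{eqn:ghostI0}), so that \eqref{I-1} and \eqref{ghostIinI} are available.

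First I would identify the last summand of \eqref{eqn:F02D-Ghost}. Replacing $n$ by $n+1$ in \eqref{ghostIinI} gives $I_{-n-1}=\sum_{k\geqslant0}\frac{(-1)^kI_0^{k+n+1}}{k!\,n!\,(k+n+1)}I_k$, and therefore
\be
\frac12\sum_{n\geqslant0}(-1)^nI_nI_{-n-1}=\frac12\sum_{n,k\geqslant0}\frac{(-1)^{n+k}I_0^{n+k+1}}{n!\,k!\,(n+k+1)}I_nI_k,
\ee
which is precisely the double sum appearing in \eqref{eqn:F02D}.

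Next I would treat $I_{-2}-I_0I_{-1}$. From \eqref{I-1} one has $I_0I_{-1}=\sum_{k\geqslant0}\frac{(-1)^kI_0^{k+2}}{(k+1)!}I_k$, while \eqref{ghostIinI} with $n=2$ gives $I_{-2}=\sum_{k\geqslant0}\frac{(-1)^kI_0^{k+2}}{k!\,(k+2)}I_k$; subtracting and using the elementary identity
\be
\frac{1}{k!\,(k+2)}-\frac{1}{(k+1)!}=-\frac{1}{(k+2)!},
\ee
which follows by combining the two fractions over the common denominator $(k+2)!$, yields $I_{-2}-I_0I_{-1}=-\sum_{k\geqslant0}\frac{(-1)^kI_0^{k+2}}{(k+2)!}I_k$, the middle sum in \eqref{eqn:F02D}. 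Since the term $\tfrac16I_0^3$ is common to \eqref{eqn:F02D-Ghost} and \eqref{eqn:F02D}, adding the three pieces reproduces \eqref{eqn:F02D} verbatim, which proves the claim.

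There is essentially no obstacle here: once \eqref{eqn:F02D}, \eqref{I-1} and \eqref{ghostIinI} are in place the argument is bookkeeping, the only substantive ingredients being the reindexing $n\mapsto n+1$ of \eqref{ghostIinI} and the one-line binomial identity above. The single point requiring care is that the ghost variables on the right-hand side of \eqref{eqn:F02D-Ghost} must be read as the formal series in $\{t_m\}_{m\geqslant0}$ determined by \eqref{eqn:ghostI0}; with that understood, \eqref{eqn:F02D-Ghost} is an identity of formal power series equivalent to \eqref{eqn:F02D}.
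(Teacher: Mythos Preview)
Your proof is correct and follows essentially the same route as the paper: both arguments start from the expression \eqref{eqn:F02D} for $F_0^{2D}$, identify the double sum with $\tfrac12\sum_{n\geqslant0}(-1)^nI_nI_{-n-1}$ via \eqref{ghostIinI}, and recognize the single sum as $-(I_{-2}-I_0I_{-1})$ through the same elementary fraction identity. The only cosmetic difference is that the paper writes the partial-fraction step as $\tfrac{1}{n!(n+1)}-\tfrac{1}{n!(n+2)}=\tfrac{1}{(n+2)!}$ rather than your equivalent form.
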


\begin{proof}
Recall in Theorem \ref{IZ} we have proved the following expression of $F^{2D}_0$:
\be
F^{2D}_0=\frac{1}{6}I_0^3-\sum_{n\geqslant0}\frac{(-1)^nI_0^{n+2}}{(n+2)!} I_n +\frac{1}{2}\sum_{n,k\geqslant0}\frac{(-1)^{n+k}I_0^{n+k+1}}{k!n!(n+k+1)}I_n I_{k}.
\ee
One can rewrite this formula using $\tilde{I}_n$ as follows:
\be
F^{2D}_0=\frac{1}{2}\sum_{n,k\geqslant1}\frac{(-1)^{n+k}I_0^{n+k+1}}{k!n!(n+k+1)}\tilde{I}_n \tilde{I}_{k}.
\ee
Together with equation \eqref{eqn:shiftedghost}, the proof is completed.
\end{proof}

\begin{Remark}
Note we can rewrite \eqref{eqn:F01D} as follows:
\be\label{eqn:ghost1D}
F_0^{1D} = \tilde{I}_{-1}|_{t_{-n} = 0, n \geq 1}.
\ee
For the thin expansion of Hermitian one-matrix model of size $N \times N$,
because we have $F_0^N = N F_0^{1D}$,
we also have
\be\label{eqn:ghostN}
F_0^N = N \cdot  \tilde{I}_{-1}|_{t_{-n} = 0, n \geq 1}.
\ee
\end{Remark}

As a corollary,
we now recover the following formula in \cite[(5.12)]{IZ}:

\begin{Corollary} \label{cor:IZ}
The genus zero free energy of 2D gravity satisfies the following equality:
\begin{align}
F_0^{2D}&=\frac{I_0^3}{6}-\sum_{k\geqslant0}\frac{I_0^{k+2}}{k!(k+2)}t_{k}
+\frac{1}{2}\sum_{n\geqslant0}\sum_{k\geqslant0}\frac{I_0^{n+k+1}}{n!k!(n+k+1)}t_{n}t_{k}.
\end{align}
\end{Corollary}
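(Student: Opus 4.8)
The plan is to take the formula \eqref{eqn:F02D-Ghost} for $F^{2D}_0$, namely
$F^{2D}_0=\tfrac{1}{6}I_0^3+I_{-2}-I_0I_{-1}+\tfrac{1}{2}\sum_{n\geq0}(-1)^n I_nI_{-n-1}$,
and to substitute into it the expressions of all the renormalized quantities in terms of the bare variables $\{t_m\}_{m\geq0}$ that hold once $t_{-n}=0$ for $n>0$: namely $I_n=\sum_{l\geq0}t_{n+l}I_0^{l}/l!$ for $n\geq0$ (equation \eqref{Icoordinates}, which is valid in the 2D theory), and $I_{-n}=\sum_{k\geq0}t_k I_0^{k+n}/(k+n)!$ for $n\geq1$ (equation \eqref{eqn:ghostI0}). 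The cubic term $\tfrac{1}{6}I_0^3$ is left unchanged.

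For the linear part $I_{-2}-I_0I_{-1}$, inserting \eqref{eqn:ghostI0} gives $\sum_{k\geq0}t_k I_0^{k+2}\bigl(\tfrac{1}{(k+2)!}-\tfrac{1}{(k+1)!}\bigr)$, and the elementary identity $\tfrac{1}{(k+2)!}-\tfrac{1}{(k+1)!}=-\tfrac{1}{k!(k+2)}$ turns this into $-\sum_{k\geq0}\tfrac{I_0^{k+2}}{k!(k+2)}t_k$, which is exactly the second term of the asserted formula. The main computation is the term $\tfrac{1}{2}\sum_{n\geq0}(-1)^n I_nI_{-n-1}$. I would substitute $I_n=\sum_{l\geq0}t_{n+l}I_0^{l}/l!$ and $I_{-n-1}=\sum_{k\geq0}t_kI_0^{k+n+1}/(k+n+1)!$, expand the product, and collect the monomials with a fixed pair of subscripts $p=n+l$ (of the first factor) and $k$ (of the second). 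Each such monomial is a multiple of $t_pt_kI_0^{p+k+1}$ with coefficient $\tfrac{1}{2}\sum_{n=0}^{p}\tfrac{(-1)^n}{(p-n)!(k+n+1)!}$, a single finite alternating sum. The one point needing an argument is that this sum equals $\tfrac{1}{p!k!(p+k+1)}$; reversing the index, $n\mapsto p-n$, rewrites it as $(-1)^p\sum_{m=0}^{p}\tfrac{(-1)^m}{m!(p-m+k+1)!}$, which is a special case of the already-proved identity \eqref{eqn:com}. This produces $\tfrac{1}{2}\sum_{p,k\geq0}\tfrac{I_0^{p+k+1}}{p!k!(p+k+1)}t_pt_k$; note that, although $p$ and $k$ entered the computation asymmetrically, the result is symmetric, as it should be. Adding the three contributions gives the statement.

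The only non-bookkeeping step is the evaluation of that inner alternating sum, and it is a direct consequence of \eqref{eqn:com}; equivalently it follows from the Beta integral $\tfrac{1}{(k+n+1)!}=\tfrac{1}{n!k!}\int_0^1 x^n(1-x)^k\,dx$ together with $\sum_{n=0}^{p}\tfrac{(-x)^n}{n!(p-n)!}=\tfrac{(1-x)^p}{p!}$. So the genuine work is merely careful reindexing without sign errors; there is no conceptual obstacle, and one could equally well substitute $I_n=\sum_{l\geq0}t_{n+l}I_0^l/l!$ directly into the formula of Theorem \ref{IZ} and argue in the same way.
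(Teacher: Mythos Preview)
Your proof is correct and follows essentially the same route as the paper: both start from \eqref{eqn:F02D-Ghost}, substitute the expressions for $I_n$ and $I_{-n-1}$ in terms of the $t$-variables, and reduce the inner alternating sum to the combinatorial identity \eqref{eqn:com}. The only cosmetic difference is that the paper keeps the ghost variables $t_{-n}$ alive throughout the computation and sets them to zero only at the very end (obtaining along the way the slightly more general formula \eqref{eqn:F0Ghost}, which it remarks on afterward), whereas you impose $t_{-n}=0$ from the outset by invoking \eqref{eqn:ghostI0}; this makes your bookkeeping a bit cleaner but yields nothing new or lost.
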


\begin{proof}
We have
\begin{align*}
\sum_{p\geqslant1}(-1)^{p}\tilde{I}_p\tilde{I}_{-p-1}=&
\sum_{p\geqslant1}(-1)^{p}\sum_{i\geqslant0}\tilde{t}_{i+p}\frac{I_0^i}{i!}
\sum_{j\geqslant0}\tilde{t}_{j-p-1}\frac{I_0^j}{j!}\\
=&\sum_{n\in \mathbb{Z}}\tilde{t}_n\sum_{j-p-1=n}\sum_{i\geqslant0}
(-1)^p\tilde{t}_{i+p}\frac{I_0^{i+j}}{j!i!}\\
=&\sum_{n\geqslant0}\tilde{t}_{n}\sum_{p\geqslant1}\sum_{i\geqslant0}\tilde{t}_{i+p}
\frac{(-1)^pI_0^{n+p+i+1}}{i!(n+p+1)!}+\sum_{n<0}\tilde{t}_{n}\sum_{j\geqslant0}\sum_{i\geqslant0}
(-1)^{j-1-n}\tilde{t}_{i+j-1-n}\frac{I_0^{i+j}}{i!j!}\\
=&\sum_{n,k\geqslant0}\tilde{t}_{n}\tilde{t}_{k}\sum_{p+i=k}\frac{(-1)^pI_0^{n+k+1}}{i!(n+p+1)!}
+\sum_{n,k\geqslant0}\tilde{t}_{-n-1}\tilde{t}_{k+n}\sum_{i+j=k}(-1)^{j+n}\frac{I_0^{k}}{i!j!}\\
=&\sum_{n\geqslant0}\sum_{k\geqslant0}\frac{I_0^{n+k+1}}{n!k!(n+k+1)}\tilde{t}_{n}\tilde{t}_{k}+
\sum_{n\geqslant0}(-1)^n\tilde{t}_{n}\tilde{t}_{-n-1}.
\end{align*}
Hence,
\begin{align} \label{eqn:F0Ghost}
&\frac{1}{2}\sum_{n\geqslant1}(-1)^n\tilde{I}_{n}\tilde{I}_{-n-1}\\
=&\frac{1}{2}\sum_{n\geqslant0}\sum_{k\geqslant0}\frac{I_0^{n+k+1}}{n!k!(n+k+1)}
\tilde{t}_{n}\tilde{t}_{k}+\frac{1}{2}\sum_{n\geqslant0}(-1)^n(t_n-\delta_{n,1})t_{-n-1}\nonumber\\
=&\frac{I_0^3}{6}-\sum_{k\geqslant0}\frac{I_0^{k+2}}{k!(k+2)}t_{k}
+\frac{1}{2}\sum_{n\geqslant0}\sum_{k\geqslant0}\frac{I_0^{n+k+1}}{n!k!(n+k+1)}t_{n}t_{k}
+\frac{1}{2}\sum_{n\geqslant0}(-1)^n(t_n-\delta_{n,1})t_{-n-1}.\nonumber
\end{align}
The proof is completed by invoking  the condition that $t_{-n-1} = 0$ for $n \geq 0$
to \eqref{eqn:F02D-Ghost}.
\end{proof}

It is a remarkable coincidence that the right-hand side of \eqref{eqn:F0Ghost}
is almost the same as $\tilde{F}_0^{2D}$ in \eqref{eqn:Tilde-F02D}
used to interpret the special deformation of the Airy curve using ghost variables.
Inspired by this coincidence, we have the following result.

\begin{Theorem} \label{thm:IZinghost}
The extension $\tilde{F}_0^{2D}$ of the free energy $F_{0}^{2D}$ satisfies the following equality:
\begin{align}\label{eqn:tildeF0Ghost}
\tilde{F}_0^{2D}&=\frac{1}{6}I_0^3-\frac{1}{2}\sum_{n,k\geqslant0}\frac{(-1)^{n+k}I_0^{n+k+1}}{k!n!(n+k+1)}I_n I_{k}
+\sum_{n\geqslant1}(-1)^n(I_{n}-\delta_{n,1})I_{-n-1}.
\end{align}
\end{Theorem}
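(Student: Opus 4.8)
The plan is to prove the identity without any fresh series manipulation, by reducing it to the genus–zero formulas already obtained in this section and working throughout in the ring of formal power series in all the coupling constants $\{t_m\}_{m\in\mathbb{Z}}$, with $I_0$ the solution of \eqref{eqn:Critical} (hence a series in $t_0,t_1,\dots$ only). First I would put the left-hand side in normal form: separating the Kronecker delta in \eqref{eqn:Tilde-F02D} gives $\sum_{n\geq0}(-1)^n(t_n-\delta_{n,1})t_{-n-1}=\sum_{n\geq0}(-1)^n t_nt_{-n-1}+t_{-2}$, and inserting the formula of Corollary \ref{cor:IZ} for $F_0^{2D}$ and abbreviating $C:=\sum_{k\geq0}\frac{I_0^{k+2}}{k!(k+2)}t_k$, $\widetilde A:=\sum_{n,k\geq0}\frac{I_0^{n+k+1}}{n!k!(n+k+1)}t_nt_k$, $B:=\sum_{n\geq0}(-1)^n t_nt_{-n-1}$, one gets $\tilde F_0^{2D}=\tfrac16 I_0^3+\tfrac12\widetilde A-C+B+t_{-2}$.

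Next I would put the right-hand side of the theorem in the same shape. Writing $A:=\sum_{n,k\geq0}\frac{(-1)^{n+k}I_0^{n+k+1}}{k!n!(n+k+1)}I_nI_k$, the right-hand side is $\tfrac16 I_0^3-\tfrac12 A+\sum_{n\geq1}(-1)^n(I_n-\delta_{n,1})I_{-n-1}$, and pulling the $n=0$ term out of the product and the $n=1$ term out of the delta gives $\sum_{n\geq1}(-1)^n(I_n-\delta_{n,1})I_{-n-1}=\left(\sum_{n\geq0}(-1)^n I_nI_{-n-1}\right)-I_0I_{-1}+I_{-2}$. I would then invoke two identities that are derived, without any restriction on the $t_m$, inside the proof of Corollary \ref{cor:IZ}: $\sum_{n\geq0}(-1)^n I_nI_{-n-1}=\widetilde A+B$ and $I_{-2}-I_0I_{-1}=t_{-2}-C$. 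Combining these, $\sum_{n\geq1}(-1)^n(I_n-\delta_{n,1})I_{-n-1}=\widetilde A+B+t_{-2}-C$, so the right-hand side equals $\tfrac16 I_0^3-\tfrac12 A+\widetilde A+B+t_{-2}-C$.

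Comparing the two normal forms, the theorem becomes equivalent to $A=\widetilde A$. To establish this I would restrict to the locus $t_{-m}=0$ for $m\geq1$: there \eqref{ghostIinI} yields $I_{-n-1}=\sum_{k\geq0}\frac{(-1)^k I_0^{n+k+1}}{k!n!(n+k+1)}I_k$, hence $\sum_{n\geq0}(-1)^n I_nI_{-n-1}=A$; on the same locus $B=0$ and the identity above reads $\sum_{n\geq0}(-1)^n I_nI_{-n-1}=\widetilde A$. Since $A$ and $\widetilde A$ are power series in $t_0,t_1,\dots$ not involving any ghost variable, their agreement on this sublocus forces $A=\widetilde A$ identically (alternatively one verifies $A=\widetilde A$ directly from the binomial identity \eqref{eqn:com}, exactly as in the passage between Theorem \ref{IZ} and Corollary \ref{cor:IZ}). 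Substituting $A=\widetilde A$ collapses the right-hand side to $\tfrac16 I_0^3+\tfrac12\widetilde A+B+t_{-2}-C$, which is precisely $\tilde F_0^{2D}$ from the first step, completing the proof.

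The computational content is entirely borrowed from the proofs of Corollary \ref{cor:IZ} and of \eqref{ghostIinI}; the only delicate point is the bookkeeping of which identities hold for all $\{t_m\}_{m\in\mathbb{Z}}$ and which only after setting the ghost variables to zero, together with the observation that the two \emph{ghost-free} expressions $A$ and $\widetilde A$ may nevertheless be compared on the locus $t_{-m}=0$ and therefore coincide everywhere. Tracking the stray $t_{-2}$ terms generated by the various $\delta_{n,1}$'s is the only other thing that requires attention.
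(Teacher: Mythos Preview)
Your proof is correct and rests on the same combinatorial content as the paper's own argument, but the organization differs in a useful way. The paper starts from the $I$-coordinate formula for $F_0^{2D}$ in Theorem \ref{IZ} and then performs a fresh computation, expanding $\sum_{p\geq0}(-1)^p t_p t_{-p-1}$ and $t_{-2}$ in $I$-variables via \eqref{eqn:T-in-I}; this is essentially the dual of the computation in the proof of Corollary \ref{cor:IZ}. You instead start from the $t$-coordinate formula in Corollary \ref{cor:IZ}, reuse the identities $\sum_{n\geq0}(-1)^n I_n I_{-n-1}=\widetilde A+B$ and $I_{-2}-I_0I_{-1}=t_{-2}-C$ already obtained in that corollary's proof, and reduce the whole statement to the ghost-free identity $A=\widetilde A$, which you then deduce either from \eqref{ghostIinI} on the locus $t_{-m}=0$ or directly from \eqref{eqn:com}. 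Your route is more modular---no series manipulation is repeated---and it makes transparent that the only new ingredient beyond \S\ref{sec:IZ} is the equality $A=\widetilde A$, which in fact already follows by comparing Theorem \ref{IZ} with Corollary \ref{cor:IZ}. The paper's direct computation has the advantage of being self-contained, not relying on the intermediate displays inside the earlier proof.
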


\begin{proof}
We have
\begin{align*}
\sum_{p\geqslant0}(-1)^{p}t_pt_{-p-1}=&
\sum_{p\geqslant0}(-1)^{p}\sum_{i\geqslant0}\frac{(-1)^iI_0^i}{i!}I_{i+p}
\sum_{j\geqslant0}\frac{(-1)^jI_0^j}{j!}I_{j-p-1}\\
=&\sum_{n\in \mathbb{Z}}\sum_{j-p-1=n}(-1)^p\frac{(-1)^jI_0^j}{j!}I_{j-p-1}
\sum_{i\geqslant0}\frac{(-1)^iI_0^i}{i!}I_{i+p}\\
=&\sum_{n\geqslant0}I_{n}\sum_{p\geqslant0}\sum_{i\geqslant0}\frac{(-1)^{n+i+1}I_0^{n+p+i+1}}{i!(n+p+1)!} I_{i+p}+
\sum_{n<0}I_{n}\sum_{j\geqslant0}\sum_{i\geqslant0}(-1)^{i-1-n}\frac{I_0^{i+j}}{i!j!}I_{i+j-1-n}\\
=&\sum_{n\geqslant0}I_{n}\sum_{k\geqslant0}I_{k}\sum_{p+i=k}\frac{(-1)^{p+n+k+1}I_0^{n+k+1}}{i!(n+p+1)!}+
\sum_{n\geqslant0}I_{-n-1}\sum_{k\geqslant0}I_{k+n}\sum_{i+j=k}(-1)^{i+n}\frac{I_0^{k}}{i!j!}\\
=&\sum_{n\geqslant0}\sum_{k\geqslant0}\frac{(-1)^{n+k+1}I_0^{n+k+1}}{n!k!(n+k+1)}I_{n}I_{k}+
\sum_{n\geqslant0}(-1)^nI_{n}I_{-n-1},
\end{align*}
and
\be
t_{-2}=\sum_{n\geqslant0}\frac{(-1)^nI_0^n}{n!}I_{n-2}.
\ee
By Theorem \ref{IZ},
\begin{align*}
\tilde{F}_0^{2D}=&\frac{1}{6}I_0^3-\sum_{n\geqslant0}\frac{(-1)^nI_0^{n+2}}{(n+2)!} I_n +\frac{1}{2}\sum_{n,k\geqslant0}\frac{(-1)^{n+k}I_0^{n+k+1}}{(n+k+1)k!n!}I_n I_{k}+t_{-2}+
\sum_{n\geqslant0}(-1)^nt_{n}t_{-n-1}\\
&=\frac{1}{6}I_0^3-\sum_{n\geqslant0}\frac{(-1)^nI_0^{n+2}}{(n+2)!} I_n +\frac{1}{2}\sum_{n,k\geqslant0}\frac{(-1)^{n+k}I_0^{n+k+1}}{(n+k+1)k!n!}I_n I_{k}\\
&+\sum_{n\geqslant0}\frac{(-1)^nI_0^n}{n!}I_{n-2}+
\sum_{n\geqslant0}\sum_{k\geqslant0}\frac{(-1)^{n+k+1}I_0^{n+k+1}}{n!k!(n+k+1)}I_{n}I_{k}+
\sum_{n\geqslant0}(-1)^nI_{n}I_{-n-1}\\
&=\frac{1}{6}I_0^3-\frac{1}{2}\sum_{n,k\geqslant0}\frac{(-1)^{n+k}I_0^{n+k+1}}{(n+k+1)k!n!}I_n I_{k}
+I_{-2}-I_0I_{-1}+\sum_{n\geqslant0}(-1)^nI_{n}I_{-n-1}\\
&=\frac{1}{6}I_0^3-\frac{1}{2}\sum_{n,k\geqslant0}\frac{(-1)^{n+k}I_0^{n+k+1}}{(n+k+1)k!n!}I_n I_{k}
+\sum_{n\geqslant1}(-1)^n(I_{n}-\delta_{n,1})I_{-n-1}.
\end{align*}
\end{proof}

\subsection{Special deformation of Airy curve in $I$-coordinates}
Recall the special deformation of Airy curve:
\be
w^{2D}=z^{\frac{1}{2}}-\sum_{n=0}^{\infty}\frac{t_n}{(2n-1)!!}z^{n-\frac{1}{2}} -\sum_{n=0}^{\infty}(2n+1)!!\frac{\partial F_0^{2D}}{\partial t_n}z^{-n-\frac{3}{2}}.
\ee
Now we consider a related deformation in a slightly different normalization as follows:
\be
y^{2D}:=z^{\frac{1}{2}}-\frac{1}{2\sqrt{\pi}}\sum_{n=0}^{\infty}(-1)^nt_n\Gamma(-n+\frac{1}{2})z^{n-\frac{1}{2}} -\frac{1}{2\sqrt{\pi}}\sum_{n=0}^{\infty}
\frac{\partial F_0^{2D}}{\partial t_n}\Gamma(n+\frac{3}{2})z^{-n-\frac{3}{2}}.
\ee
Now we rewrite it in $I$-coordinates, we get the following result:
\begin{Theorem}\label{Thm:y-I}In $I$-coordinates, we have£º
\begin{equation} \label{eqn:y-I}
y^{2D}=(z-I_0)^{\frac{1}{2}}-\frac{1}{2\sqrt{\pi}}
\sum_{n=1}^{\infty}(-1)^nI_n\Gamma(-n+\frac{1}{2})(z-I_0)^{n-\frac{1}{2}},
\end{equation}
and conversely, this property determined $F^{2D}_{0}$ uniquely.
\end{Theorem}

\begin{proof}
One can easily check that,
\be\label{eqn:Intk}
\frac{\pd \tilde{I}_n}{\pd t_k}=\frac{I_0^{k-n}}{(k-n)!}+\frac{I_0^{k}}{k!}\frac{\tilde{I}_{n+1}}{1-I_1},
\ee
for all $n,k\in \mathbb{Z}$. Therefore, by\eqref{eqn:F02D-Ghost}
\begin{align}\label{eqn:1ptF0}
\frac{\pd F^{2D}_0}{\pd t_k}=&\frac{1}{2}\sum_{n\geqslant1}(-1)^n
\bigg(\big(\frac{I_0^{k-n}}{(k-n)!}+\frac{I_0^{k}}{k!}\frac{\tilde{I}_{n+1}}{1-I_1}\big)\tilde{I}'_{-n-1}
+\tilde{I}_{n}\big(\frac{I_0^{k+n+1}}{(k+n+1)!}+\frac{I_0^{k}}{k!}\frac{\tilde{I}'_{-n}}{1-I_1}\big)\bigg)\\
=&\frac{1}{2}\sum_{n\geqslant0}(-1)^n\frac{I_0^{k-n}}{(k-n)!}\tilde{I}'_{-n-1}
+\frac{1}{2}\sum_{n\geqslant1}(-1)^n\tilde{I}_{n}\frac{I_0^{k+n+1}}{(k+n+1)!}\nonumber\\
=&\sum_{n\geqslant0}(-1)^n\frac{I_0^{k-n}}{(k-n)!}\tilde{I}'_{-n-1},\nonumber
\end{align}
where $\tilde{I}'_{n}=I_{n}|_{t_{-m}=0,\ m\geq1}$.
Hence,
\begin{align*}
\sum_{k=0}^{\infty}\frac{\partial F_0^{2D}}{\partial t_k}\Gamma(k+\frac{3}{2})z^{-k-\frac{3}{2}}
=&\sum_{k,n\geqslant0}(-1)^n\frac{I_0^{k-n}}{(k-n)!}\tilde{I}'_{-n-1}
\Gamma(k+\frac{3}{2})z^{-k-\frac{3}{2}}\\
=&\sum_{n\geqslant0}(-1)^n\tilde{I}'_{-n-1}\sum_{k\geqslant0}\frac{I_0^{k}}{k!}
\Gamma(k+n+\frac{3}{2})z^{-k-n-\frac{3}{2}}\\
=&\sum_{n\geqslant0}(-1)^n\tilde{I}'_{-n-1}\frac{\Gamma(n+\frac{3}{2})}{(z-I_0)^{n+\frac{3}{2}}}.
\end{align*}
Together with Proposition \ref{prop:t-I}, one has
\begin{align*}
y^{2D}=&z^{\frac{1}{2}}-\frac{1}{2\sqrt{\pi}}\sum_{n=0}^{\infty}(-1)^nt_n\Gamma(-n+\frac{1}{2})z^{n-\frac{1}{2}} -\frac{1}{2\sqrt{\pi}}\sum_{n=0}^{\infty}
\frac{\partial F_0^{2D}}{\partial t_n}\Gamma(n+\frac{3}{2})z^{-n-\frac{3}{2}}\\
=&z^{\frac{1}{2}}-\frac{1}{2\sqrt{\pi}}\sum_{n\in \mathbb{Z}}
(-1)^nI'_n\Gamma(-n+\frac{1}{2})\left(z-I_0\right)^{n-\frac{1}{2}}
-\frac{1}{2\sqrt{\pi}}\sum_{n\geqslant0}(-1)^n\tilde{I}'_{-n-1}
\frac{\Gamma(n+\frac{3}{2})}{(z-I_0)^{n+\frac{3}{2}}}\\
=&z^{\frac{1}{2}}-\frac{1}{2\sqrt{\pi}}\sum_{n\geqslant0}
(-1)^nI'_n\Gamma(-n+\frac{1}{2})\left(z-I_0\right)^{n-\frac{1}{2}}
+\frac{1}{2\sqrt{\pi}}\sum_{n\geqslant0}(-1)^n\frac{I_0^{n+2}}{(n+2)!}
\frac{\Gamma(n+\frac{3}{2})}{(z-I_0)^{n+\frac{3}{2}}}\\
=&(z-I_0)^{\frac{1}{2}}-\frac{1}{2\sqrt{\pi}}\sum_{n\geqslant1}
(-1)^nI_n\Gamma(-n+\frac{1}{2})\left(z-I_0\right)^{n-\frac{1}{2}}.
\end{align*}
The proof is completed.

\end{proof}
\begin{Remark}
When we impose the condition $t_{-n}=0$ for $n>0$, we have $F_0^{2D}=\tilde{F}_0^{2D}$. By Theorem \ref{thm:IZinghost}
\begin{align*}
F_0^{2D}&=\frac{1}{6}I_0^3-\frac{1}{2}\sum_{n,k\geqslant0}\frac{(-1)^{n+k}I_0^{n+k+1}}{(n+k+1)k!n!}I_n I_{k}
+\sum_{n\geqslant1}(-1)^n(I_{n}-\delta_{n,1})I_{-n-1}.
\end{align*}
if we take derivatives of $F_0^{2D}$ with respect to $\{I_{n}\}_{n\in \mathbb{Z}}$ formally, we have
\begin{align}
\frac{\partial F_0^{2D}}{\partial I_{n}}|_{t_{-m}=0,\ m\geq1}
=&-\sum_{k\geqslant0}\frac{(-1)^{n+k}I_0^{n+k+1}}{(n+k+1)k!n!} I_{k} +(-1)^{n}I_{-n-1}|_{t_{-m}=0,\ m\geq1}=0,\\
\frac{\partial F_0^{2D}}{\partial I_{-n-1}}=&(-1)^nI_n,
\end{align}
for $n\geqslant1$.
Consider the generating series
\be
\begin{split}
\sum_{n \in \bZ-\{0\}} \frac{\partial F_0^{2D}}{\partial I_{n}}|_{t_{-m}=0,\ m\geq1} \cdot x^{n+1}
= & \sum_{n \geq 1} (-1)^n (I_{n}-\delta_{n,1}) x^{-n}.
\end{split}
\ee
and its Laplace transform:
\be
\begin{split}
& \sum_{n \in \bZ-\{0\}} \frac{\partial F_0^{2D}}{\partial I_{n}} \cdot \int_0^{\infty}
\frac{1}{\sqrt{x}} e^{-\tilde{z}x}x^{n+1} dx
= \sum_{n \geq 1} (-1)^n(I_{n}-\delta_{n,1})\Gamma(-n+\frac{1}{2}) \tilde{z}^{n-1/2},
\end{split}
\ee
we get $y^{2D}$ by taking $\tilde{z}=z-I_0$.
\end{Remark}

\section{Applications: Constitutive Relations}
\label{sec:MFT}

In this Section we will use our results to rederive the following constitutive relations
in 2D topological gravity:
\be \label{eqn:DW-Const}
\corr{\sigma_i\sigma_j} = \frac{1}{i+j+1}u^{i+j+1}.
\ee 
This is (2.34) derived by Dijkgraaf and Witten \cite{DW} as an example
of constitutive relations  derived from topological recursion
relations in genus zero.
We will also derive the analogous relations for $F^{1D}_0$ and $F_0^N$.

\subsection{Constitutive relations in 2D topological gravity}

In our notations,
\eqref{eqn:DW-Const} is just:
\be
\frac{\pd^2 F^{2D}_0}{\pd t_j\pd t_k} = \frac{I_0^{k+j+1}}{k!j!(k+j+1)}.
\ee
This can be proved by our results as follows.
We have proved the following equation in the proof of Theorem \ref{Thm:y-I}:
\be
\frac{\pd F^{2D}_0}{\pd t_k}=\sum_{n\geqslant0}(-1)^n\frac{I_0^{k-n}}{(k-n)!}\tilde{I}'_{-n-1},
\ee
where $\tilde{I}'_{n}=I_{n}|_{t_{-m}=0,\ m\geq1}$.
By this equation and \eqref{eqn:Intk}, one has
\begin{align*}
\frac{\pd^2 F^{2D}_0}{\pd t_j\pd t_k}
=&\frac{\pd I_0}{\pd t_j}\sum_{n\geqslant0}(-1)^n\frac{I_0^{k-1-n}}{(k-1-n)!}\tilde{I}'_{-n-1}
+\sum_{n\geqslant0}(-1)^n\frac{I_0^{k-n}}{(k-n)!}\frac{\pd \tilde{I}'_{-n-1}}{\pd t_j}\\
=&\frac{1}{1-I_1}\frac{I_0^j}{j!}\sum_{n\geqslant0}(-1)^n\frac{I_0^{k-1-n}}{(k-1-n)!}\tilde{I}'_{-n-1}
+\sum_{n\geqslant0}(-1)^n\frac{I_0^{k-n}}{(k-n)!}
\big(\frac{I_0^{j+n+1}}{(j+n+1)!}+\frac{I_0^{j}}{j!}\frac{\tilde{I}'_{-n}}{1-I_1}\big)\\
=&\sum_{n\geqslant0}(-1)^n\frac{I_0^{k-n}}{(k-n)!}\frac{I_0^{j+n+1}}{(j+n+1)!}\\
=&\frac{I_0^{k+j+1}}{k!j!(k+j+1)}.
\end{align*}
This gives a proof of Constitutive Relations \eqref{eqn:DW-Const}.

We can now interpret the Itzykson-Zuber Ansatz 
as given by formulas \eqref{eqn:F02D}- \eqref{eqn:Fg2D}
from the point of view of constitutive relations. 
The formula for $F_0^{2D}$ can be obtained from the genus zero $n$-point functions on the small
phase space by changing $t_0^n$ to $(-1)^{n-1}I_0^n$, $t_k \to I_k$ for $k \geq 1$ as follows.
Because
\be
\corr{\tau_0^n}_0= \delta_{n,3},
\ee
the $0$-point function in genus zero on the small phase space is $\frac{t_0^3}{6}$,
it gives the term $\frac{I_0^3}{6}$.
The one-point function can be computed by
\be
\corr{\tau_n\tau_0^{m}}_0 = \delta_{m,n+2},
\ee
it is $\sum_{n \geq 0} \frac{t_0^{n+2}}{(n+2)!}t_n$,
and so it gives the term $- \sum_{n \geq 0} \frac{(-1)^nI_0^{n+2}}{(n+2)!} I_n$.
The two-point function is computed by
\be
\corr{\tau_j\tau_k\tau_0^{m}}_0
= \frac{(j+k)!}{j!k!} \delta_{m,j+k+1},
\ee
it is $\half\sum_{j,k\geq 0} \frac{t_0^{j+k+1}}{j!k!(j+k+1)}t_jt_k$,
and so it gives us the term
$\frac{1}{2} \sum_{j,k\geq 0} \frac{(-1)^{j+k}I_0^{j+k+1}}{j!k!(j+k+1)}I_jI_k$.
Since  for $g\geq 1$, $F_g$  does not involve $I_0$,
one can further restrict to the origin $t_0=0$ in the small phase space, compute
a few $n$-point functions and change $t_n$ to $I_n$ for $n \geq 1$.
Such formulas generalize the constitutive relations in the mean field theory considerations of
Dijkgraaf-Witten \cite{DW}.
So our discussions suggest that renormalization naturally leads to constitutive relations.

\subsection{Analogues for $F_0^{1D}$ and $F^N_0$}

By \eqref{eqn:ghost1D} and \eqref{eqn:Intk},one has:
\be
\frac{\pd F_0^{1D}}{\pd t_k}=\frac{I_0^{k+1}}{(k+1)!}.
\ee
Since $F^{N}_0=NF_0^{1D}$, one also has
\be
\frac{\pd F_0^{N}}{\pd t_k}=N\frac{I_0^{k+1}}{(k+1)!}.
\ee
We regard them as the analogues of \eqref{eqn:DW-Const}.
Furthermore, the Hessians are given by:
\be
\frac{\pd^2 F_0^{1D}}{\pd t_j\pd t_k}=\frac{I_0^{k}}{k!}\frac{\pd I_0}{\pd t_j}
=\frac{1}{1-I_1}\frac{I_0^{k+j}}{k!j!},
\ee
\be
\frac{\pd^2 F_0^{N}}{\pd t_j\pd t_k}=N\frac{I_0^{k}}{k!}\frac{\pd I_0}{\pd t_j}
=\frac{N}{1-I_1}\frac{I_0^{k+j}}{k!j!},
\ee
Note the appearance of $I_1$ on the right-hand sides means $F_0^{1D}$ and $F_0^N$ 
do not satisfy the topological recursion relations,
hence they do not give us topological field theories in the sense of \cite{DW}.

\section{Concluding Remarks}
\label{sec:Remarks}

In this paper, we have further studied the I-coordinates.
By rewriting $L_{0}^{2D}$ in I-coordinates, we have proved the Itzykson-Zuber ansatz.
Furthermore, we have developed the techniques of rewriting all the Virasoro constraints
for free energies in I-coordinates and solving free energies recursively.

As pointed out by the second named author in \cite{Zhou1},
we understand the I-variables as new coordinates on the big phase space.
In this paper, we have checked that at least in the cases of topological 1D gravity,
Hermitian one-matrix model and topological 2D gravity,
the free energies have good properties in these new coordinates.
We believe this is a general phenomenon
and hope to make generalizations in subsequent work.
We have also seen that the use of renormalized coupling constants shed some lights 
on the mean field theory approach to the original theories.

Furthermore,
we extend the definitions of $I_n$ for $n \geq -1$ to include $I_n$ for all $n\in \bZ$.
These are inspired by the introduction of $t_n$ for all $n\in \bZ$ in \cite{Zhou6}.
They suggest to study an even larger phase space to include the ghost variables.

A surprising consequence is that we discover a connection between
the emergent spectral curve of topological 2D gravity and the action of the topological 1D gravity.
The special deformations of  spectral curves of the three theories considered in this paper are:
\begin{align}
y^{1D}=&\frac{1}{\sqrt{2}}\sum_{n\geqslant0}\frac{t_n-\delta_{n,1}}{n!}z^n+\frac{\sqrt{2}}{z} +\sqrt{2}\sum_{n\geqslant1}\frac{n!}{z^{n+1}}\frac{\partial F_0^{1D}}{\partial t_{n-1}}.\\
y^{N}=&\frac{1}{\sqrt{2}}\sum_{n\geqslant0}\frac{t_n-\delta_{n,1}}{n!}z^n+\frac{\sqrt{2}N}{z} +\sqrt{2}\sum_{n\geqslant1}\frac{n!}{z^{n+1}}\frac{\partial F_0^{N}}{\partial t_{n-1}}.\\
y^{2D}=&z^{\frac{1}{2}}-\frac{1}{2\sqrt{\pi}}\sum_{n=0}^{\infty}(-1)^nt_n\Gamma(-n+\frac{1}{2})z^{n-\frac{1}{2}} -\frac{1}{2\sqrt{\pi}}\sum_{n=0}^{\infty}
\frac{\partial F_0^{2D}}{\partial t_n}\Gamma(n+\frac{3}{2})z^{-n-\frac{3}{2}}.
\end{align}
When rewritten in the $I$-coordinates they take the following unified form:
\begin{align}
y^{1D}=&\frac{\sqrt{2}}{z-I_0}+\frac{1}{\sqrt{2}}\sum_{n\geqslant1}\frac{I_n-\delta_{n,1}}{n!}(z-I_0)^n.\\
y^{N}=&\frac{\sqrt{2}N}{z-I_0}+\frac{1}{\sqrt{2}}\sum_{n\geqslant1}\frac{I_n-\delta_{n,1}}{n!}(z-I_0)^n.\\
y^{2D}=&\qquad \quad -\frac{\sqrt{\pi}}{2}
\sum_{n=1}^{\infty}\frac{I_n-\delta_{n,1}}{\Gamma(n+\frac{1}{2})}(z-I_0)^{n-\frac{1}{2}}.
\end{align}
Furthermore, if we define their action functions by:
\begin{align}
S^{1D}=&\int y^{1D} dz
=\sqrt{2}\log(z-I_0)+\frac{1}{\sqrt{2}}\sum_{n\geqslant1}\frac{I_n-\delta_{n,1}}{(n+1)!}(z-I_0)^{n+1},\\
S^{N}=&\int y^{N} dz
=\sqrt{2}N\log(z-I_0)+\frac{1}{\sqrt{2}}\sum_{n\geqslant1}\frac{I_n-\delta_{n,1}}{(n+1)!}(z-I_0)^{n+1},\\
S^{2D}=&\int y^{2D} dz
=\qquad \qquad \qquad \quad -\frac{\sqrt{\pi}}{2}
\sum_{n=1}^{\infty}\frac{I_n-\delta_{n,1}}{\Gamma(n+\frac{3}{2})}(z-I_0)^{n+\frac{1}{2}},
\end{align}
Then one has:
\begin{align}
\frac{\partial S^{1D}}{\partial t_0}=&-\frac{1}{1-I_1}\frac{\sqrt{2}}{z-I_0}+\frac{z-I_0}{\sqrt{2}},\\
\frac{\partial S^{N}}{\partial t_0}=&-\frac{1}{1-I_1}\frac{\sqrt{2}N}{z-I_0}+\frac{z-I_0}{\sqrt{2}},\\
\frac{\partial S^{2D}}{\partial t_0}=&\qquad \qquad \qquad \quad -(z-I_0)^{\frac{1}{2}}.
\end{align}
These are deformations of spectral curves:
\begin{align}
y^{1D}=&\frac{\sqrt{2}}{z}-\frac{z}{\sqrt{2}},\\
y^{N}=&\frac{\sqrt{2}N}{z}-\frac{z}{\sqrt{2}},\\
y^{2D}=& z^{\frac{1}{2}}.
\end{align}
I.e.,
the spectral curves are related to the action functions defined above in the following way:
\be
y^{*} = - \frac{\pd S^*}{\pd t_0} \biggl|_{I_0=I_1=0},
\ee
for $*=1D$, $N$, and $2D$.
We hope to understand and generalize this in future investigations.

\vspace{.2in}
{\bf Acknowledgements}.
The second named author is partly supported by NSFC grants 11661131005 and 11890662.

 \bibliographystyle{plain}

\end{document}